\documentclass[11pt,a4paper]{article}

\usepackage{setspace} 
\onehalfspacing
\usepackage[hmargin=2.5cm,vmargin=1.98cm]{geometry}

\usepackage[T1]{fontenc}
\usepackage[utf8]{inputenc}
\usepackage{amsmath,amssymb,amsthm,amsfonts}
\usepackage{comment}
\usepackage{todonotes}
\usepackage{mathtools}
\usepackage{multirow}
\newcommand{\cmark}{\ding{51}} 
\newcommand{\xmark}{\ding{55}} 
\usepackage{pifont}
\usepackage[colorlinks=true, linkcolor=green, citecolor=blue, urlcolor=blue]{hyperref}
\usepackage{yfonts}
\usepackage{booktabs}
\usepackage{cleveref}
\usepackage{natbib}
\setlength{\marginparwidth}{2cm}
\usepackage{todonotes}
\usepackage{tabularx}

\usepackage{array}
\usepackage{caption}

\usepackage{graphicx} 
\usepackage{float} 

\usepackage[math]{blindtext}

\usepackage{graphicx}
\usepackage{dsfont}
\usepackage{units}
\usepackage{enumitem}
\usepackage{booktabs}
\usepackage{color}

\newcommand{\MH}[1]{{\color{red}{#1}}}

\usepackage{tikz}			
\usetikzlibrary{arrows,automata,shapes, trees}

\theoremstyle{plain}
\newtheorem{theorem}{Theorem}[section]
\newtheorem{proposition}[theorem]{Proposition}
\newtheorem{lemma}[theorem]{Lemma}
\newtheorem{corollary}[theorem]{Corollary}
\theoremstyle{definition}
\newtheorem{definition}[theorem]{Definition}
\newtheorem{remark}[theorem]{Remark}
\newtheorem{example}[theorem]{Example}

\theoremstyle{remark}
\numberwithin{equation}{section}
\newtheorem*{empty*}{}



\DeclareMathOperator*{\esssup}{ess\,sup}
\DeclareMathOperator*{\essinf}{ess\,inf}

\newcommand{\e}{\mathrm{e}}
\renewcommand{\theta}{\vartheta}
\renewcommand{\epsilon}{\varepsilon}
\renewcommand{\P}{\mathbb{P}}

\newcommand{\NN}{\mathbb{N}}

\newcommand{\RR}{\mathbb{R}}

\newcommand{\cE}{\mathcal{E}}
\newcommand{\cF}{\mathcal{F}}

\newcommand{\1}{\mathbf{1}}


\newcommand*{\ol}[1]{\bar{#1}}

\newcommand{\ES}{\mathrm{ES}}

\newcommand{\VaR}{\mathrm{VaR}}
\newcommand{\SR}{\mathrm{SR}}
\newcommand{\OCE}{\mathrm{OCE}}
\newcommand{\WC}{\mathrm{WC}}

\newcommand{\LVaR}{\mathrm{LVaR}}
\newcommand{\ALG}{\mathrm{ALG}}
\newcommand{\EW}{\mathrm{EW}}

\newcommand{\SLL}{\mathrm{SLL}}
\newcommand{\ASLL}{\mathrm{ASLL}}
\newcommand{\SE}{\mathrm{SE}}

\def\keywords{\vspace{.5em}
{\noindent\textbf{Keywords}:\,\relax%
}}

\def\JELclassification{\vspace{.5em}
{\noindent\textbf{JEL classification}:\,\relax%
}}

\begin{document}
	
	\title{The Interplay between Utility and Risk in Portfolio Selection\footnote{We are very grateful to Cosimo Munari for stimulating discussions on the material. We also thank the participants of the London-Oxford-Warwick workshop and the Financial Mathematics seminar at King's College London for their comments.}}
	
\author{%
Leonardo Baggiani\thanks{Department of Statistics, University of Warwick, 
\href{mailto:leonardo.baggiani@warwick.ac.uk}{leonardo.baggiani@warwick.ac.uk}}
\and
Martin Herdegen\thanks{Department of Mathematics, University of Stuttgart and Department of Statistics, University of Warwick, 
\href{mailto:martin.herdegen@isa.uni-stuttgart.de}{martin.herdegen@isa.uni-stuttgart.de}}
\and 
Nazem Khan\thanks{Mathematical Institute, University of Oxford, 
\href{mailto:nazem.khan@maths.ox.ac.uk}{nazem.khan@maths.ox.ac.uk}}}

	\date{}
	
	\maketitle
  
\begin{abstract}
We revisit portfolio selection with a utility objective and a risk constraint. Our framework is very general and accommodates non-concave utilities such as $S$-shaped utilities from prospect theory and non-convex risk measures such as Value at Risk. We provide a novel and complete characterization of well-posedness for utility-risk portfolio selection in one period that takes the interplay between utility and risk fully into account. 

In a fixed market, well-posedness is characterized in terms of the behaviour of the utility and risk functionals on the zero-cost payoffs in that market. Across all markets, a necessary and sufficient condition for well-posedness is given by a very simple either-or criterion:\ either the utility or the risk functional need to satisfy the axiom of sensitivity to large losses.

For expected utility, this criterion becomes explicit. Market-independent well-posedness is determined by the asymptotic loss-gain ratio of the utility function and the large-loss sensitivity of the risk functional. For certain $S$-shaped utilities, Value at Risk or Expected Shortfall constraints may  not guarantee  well-posedness across all markets. We therefore return to the fixed-market problem and show that, in elliptical markets, well-posedness reduces to a comparison between the maximal Sharpe ratio and an explicit tail-risk threshold.
\end{abstract}

\keywords{portfolio selection, utility maximization, risk measures, sensitivity to large losses, asymptotic loss-gain ratio}

\JELclassification{G11, C61, D81}


\section{Introduction}

In this paper, we revisit portfolio selection with a utility objective and a risk constraint. An investor trades in a market consisting of one riskless asset and $d$ risky assets. Given a utility functional $\mathcal{U}$, 
a risk functional $\mathcal{R}$, an initial wealth $w > 0$, a risk threshold $R_{\max}$ and denoting by $\bar{S}_0 \coloneqq (S^0_0, \ldots, S^d_0)$ and $\bar{S}_1 \coloneqq (S^0_1, \ldots, S^d_1)$ initial and terminal prices, respectively, the investor solves the optimization problem
\begin{equation}
\label{eq:intro:maximize utility}
\sup_{\bar{\theta} \in \mathbb{R}^{1+d}} \mathcal{U}( \bar{\theta} \cdot \bar{S}_1 ) \quad \text{subject to} \quad
    \begin{cases}
        \bar{\theta} \cdot \bar{S}_0 = w, \\
        \mathcal{R}( \bar{\theta} \cdot \bar{S}_1 ) \leq R_{\max},
    \end{cases}
\end{equation}
To exclude trivially ill-posed or well-posed cases, we assume that the market is arbitrage-free and $\mathcal U$ is nonsatiated. We refer to \eqref{eq:intro:maximize utility} as $(\mathcal U,\mathcal R)$-terminal-wealth portfolio selection, or simply utility-risk terminal-wealth portfolio selection in the sequel.

The central question of this paper is under which conditions utility-risk terminal-wealth portfolio selection is well posed, in the sense that the supremum in \eqref{eq:intro:maximize utility} is attained. This is not automatic. Although the risk constraint is intended to rule out excessive risk-taking, its effectiveness depends on how the risk functional interacts both with the utility objective and the payoffs generated by the market. Indeed, consider an arbitrage-free market with a risk-free bond and a defaultable bond; see Example \ref{exa:mean-var-es-binomial-ill-posed} for details.  If the defaultable bond has positive expected excess return, then an investor maximizing expected terminal wealth may take an increasingly leveraged long position in the bond, financed by shorting the risk-free bond. Expected terminal wealth then diverges to infinity. Nevertheless, if the default event is sufficiently rare, a Value at Risk (VaR) constraint may fail to detect it at the relevant confidence level; and even an Expected Shortfall (ES) constraint may fail to control the position when the upside payoff is sufficiently large relative to the default loss. Thus mean-VaR and mean-ES terminal-wealth portfolio selection can be ill posed in a bounded, arbitrage-free market with strictly positive terminal prices. Ill-posedness is therefore not merely a pathology caused by arbitrage or unrealistic payoffs. It can arise because the chosen risk functional does not control the leveraged loss exposures made attractive by the utility objective.

One possible response is to impose external trading constraints, such as no short-selling constraints or leverage bounds. These restrictions often restore existence by making the feasible set compact. However, they change the nature of the problem. Well-posedness is then enforced by an exogenous constraint on trading strategies, rather than by the utility-risk specification itself. Our aim is different. We seek to understand when the utility objective and the risk constraint, taken together, \emph{themselves} control large downside exposures.

We study this question at two levels. The first is \emph{fixed-market well-posedness}: for a given arbitrage-free market, does \eqref{eq:intro:maximize utility} admit an optimizer for every admissible initial wealth and risk threshold? This is the natural question for an investor or modeller working with a specified market model. The second is \emph{market-independent well-posedness}: for which pairs $(\mathcal U,\mathcal R)$ is \eqref{eq:intro:maximize utility} well posed in every arbitrage-free market? This is a stronger requirement, but it isolates properties of the utility-risk specification that do not depend on a particular market model.

The distinction is important. A pair $(\mathcal U,\mathcal R)$ that fails to be market-independent well posed may still be well posed in a given market, because that market may not contain payoffs along which the utility-risk specification breaks down. Conversely, a pair that is market-independent well posed is robust to the choice of arbitrage-free market model. Our fixed-market results make this distinction precise and show how the well-posedness of a given utility-risk pair can depend on the geometry and distribution of attainable payoffs.

\subsection{Main results}


We distinguish in the sequel between well-posedness and weak well\-posedness. While well-posedness means that the supremum in \eqref{eq:intro:maximize utility} is attained, weak well-posedness means that, although an optimizer need not exist, feasible portfolios cannot drive utility all the way to the bliss point~$\mathcal U(\infty)= \lim_{y \to \infty} \mathcal U(y)$. Thus failure of weak well-posedness is the more severe phenomenon: it means that there exist feasible portfolios whose utility approach the bliss point~$\mathcal U(\infty) $. We refer to such sequences informally as ill-posed sequences.

Our first main result concerns fixed-market well-posedness for a given arbitrage-free market $\bar S$. The relevant objects are the terminal payoffs generated by zero-cost portfolios, that is, payoffs of the form $Z=\bar\eta\cdot \bar S_1$ with $\bar\eta\cdot \bar S_0=0$. The risk functional $\mathcal R$ satisfies \emph{sensitivity to large losses} (SLL) \emph{on $\bar S$} if every zero-cost payoff $Z \neq 0$ is eventually detected by risk when scaled up: $\mathcal R(\lambda Z)>0$ for all sufficiently large $\lambda$. On the utility side, the corresponding condition is \emph{asymptotic sensitivity to large losses} (ASLL) \emph{on $\bar S$}: $\limsup_{\lambda \to \infty} \mathcal{U}(\lambda Z) < \mathcal{U}(\infty)$. In the fixed-market theory we use a \emph{robust} version of this condition, because ill-posed sequences may converge to a limiting zero-cost portfolio rather than lie exactly on it. Robust ASLL therefore requires the same large-loss behaviour to persist under small perturbations of the zero-cost payoff.

Under mild regularity assumptions, Theorem \ref{thm:local wp original characterization} shows that these local large-loss conditions are sufficient for fixed-market well-posedness. Under an additional \emph{alignment} condition, they are also necessary. Alignment means that whenever the risk functional fails to penalize a scaled zero-cost payoff, the utility functional regards the same payoff as asymptotically attractive. This condition is satisfied in many examples. In mean-risk problems, any strictly expectation-bounded positively homogeneous risk functional, such as ES, is aligned with expected return. Or in the case of elliptical returns, alignment holds for $S$-shaped utilities with VaR or ES constraints; see Section \ref{subsec:elliptical implications}.

While fixed-market well-posedness (under alignment) is characterized by the behaviour of $\mathcal U$ and $\mathcal R$ on the zero-cost payoffs generated by the market, the market-independent theory gives an even easier structural criterion. Here, the relevant objects are all payoffs $Y$ in some ambient space $L$ with $\mathbb P[Y<0]>0$. We say that $\mathcal R$ satisfies SLL on $L$ if $\mathcal R(\lambda Y)>0$ for all sufficiently large $\lambda$. Similarly, $\mathcal U$ satisfies SLL on $L$ if $\mathcal U(\lambda Y)<0$ for all sufficiently large $\lambda$. 

Under mild regularity assumptions on $\mathcal U$ or $\mathcal R$, Theorem \ref{thm:global wp original characterization normalized} establishes equivalence between
\begin{itemize}
\item $\mathcal U$ satisfies $\SLL$ on $L$, or $\mathcal R$ satisfies $\SLL$ on $L$;
\item $(\mathcal U,\mathcal R)$-terminal-wealth portfolio selection is market-independent weakly well posed;
\item $(\mathcal U,\mathcal R)$-terminal-wealth portfolio selection is market-independent well posed.
\end{itemize}
Thus,  market-independent well-posedness is characterized by a simple either-or condition. It is enough for one of the two functionals to penalize large losses sufficiently strongly. Conversely, if both fail to do so, then there exists an arbitrage-free market in which feasible portfolios can drive utility to $\mathcal U(\infty)$. 

The property of SLL is studied systematically in \citet{HKM2024}, which provides verifiable criteria for many utility and risk functionals. We use these criteria to make our market-independent condition explicit for expected utility. Let $\mathcal E_u(Y)=\mathbb E[u(Y)]$. If $u(0)=-\infty$, as for logarithmic utility, then sufficiently large losses force expected utility to $-\infty$, and market-independent well-posedness follows for every cash-convex risk functional that satisfies the lower Fatou property. If $u(0)\in\mathbb R$, we normalize $u(0)=0$ and use the asymptotic loss-gain ratio
\begin{equation*}
\ALG(u) := \limsup_{y\to\infty} \frac{|u(-y)|}{u(y)} \in[0,\infty].
\end{equation*}
The quantity $\ALG(u)$ compares the penalty assigned to increasingly large losses with the utility derived from corresponding gains. For a broad class of utility functions, including unbounded utilities that are negatively star-shaped on gains, expected utility satisfies $\SLL$ on $L$ if and only if $\ALG(u)=\infty$. Consequently, under the regularity assumptions above, market-independent well-posedness of $(\mathcal E_u,\mathcal R)$-terminal-wealth portfolio selection reduces to the either-or condition
\begin{equation*}
\ALG(u)=\infty
\quad\text{or}\quad
\mathcal R \text{ satisfies } \SLL \text{ on } L.
\end{equation*}
This covers both classical concave utilities and non-concave $S$-shaped utilities. For
\begin{equation*}
u(y)=
\begin{cases}
y^a, & y\ge 0,\\
-(-y)^b, & y<0,
\end{cases}
\qquad 0<a,b\le 1,
\end{equation*}
we have $\ALG(u)=\infty$ exactly when $a<b$. In that case, the utility functional itself guarantees market-independent well-posedness. By contrast, in the regime $0<b\leq a\leq 1$, including linear utility when $a=b=1$, market-independent well-posedness requires the risk functional to provide the missing large-loss sensitivity.

Table \ref{tab:utility_risk_cases} summarizes the resulting classification for several standard utility-risk pairs. A checkmark indicates market-independent well-posedness; a cross indicates that even market-independent weak well-posedness fails.

\begin{table}[h!]
\centering
\renewcommand{\arraystretch}{1.3}
\begin{tabular}{@{}lcccccc@{}}
\toprule
\textbf{} 
& Mean 
& $S$-shaped $a\ge b$ 
& $S$-shaped $a<b$ 
& Log
& Power 
& Exponential \\
\midrule

No risk constraint
& \xmark & \xmark & \cmark & \cmark & \cmark & \cmark \\

Value at Risk 
& \xmark & \xmark & \cmark & \cmark & \cmark & \cmark \\

Expected Shortfall 
& \xmark & \xmark & \cmark & \cmark & \cmark & \cmark \\

Entropic Risk 
& \cmark & \cmark & \cmark & \cmark & \cmark & \cmark \\

\bottomrule
\end{tabular}
\caption{Market-independent well-posedness of $(\mathcal E_u,\mathcal R)$-terminal-wealth portfolio selection for representative expected utilities and risk functionals. A checkmark denotes market-independent well-posedness; a cross denotes failure of market-independent weak well-posedness.}
\label{tab:utility_risk_cases}
\end{table}

Finally, neither VaR nor ES satisfy $\SLL$ on $L$. This should not be read as saying that VaR and ES are unsuitable risk functionals. Rather, $\SLL$ on $L$ is a strong, market-independent requirement: it asks the risk functional to rule out ill-posed leveraged sequences in every arbitrage-free market. The failure of VaR or ES to satisfy this property means that, when expected utility does not satisfy $\SLL$ these risk functionals alone cannot guarantee well-posedness across all markets.

This is precisely where the fixed-market theory becomes relevant. Once a market model is fixed, one only needs to inspect the zero-cost payoffs generated by that model. We carry this out for elliptical markets. In this setting, the fixed-market criterion simplifies to a comparison between the maximal Sharpe ratio of the market and an explicit threshold determined by the risk functional. If the maximal Sharpe ratio is below the threshold, the risk functional controls all attainable payoffs and the problem is well posed. If it is above the threshold, the market contains zero-cost payoffs along which the risk constraint fails to prevent utility from approaching its bliss point. For Gaussian returns and VaR or ES, Corollary \ref{cor:Gaussian VaR ES Sharpe thresholds} gives explicit thresholds, yielding a directly implementable diagnostic for fixed-market well-posedness.

The remainder of the paper is organized as follows. After the literature review, Section \ref{sec:problem formulation} introduces the model and formalizes the terminal-wealth portfolio selection problem. Section \ref{sec:excess return well posedness} studies the associated excess-return formulation, including fixed-market and market-independent well-posedness. Section \ref{sec:terminal wealth well posedness} translates the main results back to terminal wealth in a simplified setting that applies to the majority of our examples, while the general passage from excess-return to terminal-wealth portfolio selection is developed in Appendix \ref{app:terminal-wealth portfolio selection}. Section 5 discusses related formulations, including risk constraints on changes in wealth and risk minimization under utility constraints. Section \ref{sec:Examples} presents the main examples and applications, with a focus on expected utility and elliptical markets. Section \ref{sec:conclusion} concludes. Appendix \ref{app:counterexamples} contains counterexamples, and Appendix \ref{app:Additional Results and Proofs} collects additional results and proofs.

\subsection{Literature review}

Our work is related to several strands of the portfolio selection literature. We focus on results that are closest to our question: when does utility maximization with (or without) a risk constraint admit an optimizer, and how does this depend on the utility and the risk functional?

The classical starting point is mean-variance portfolio selection. \citet{Markowitz1952} studied the problem of maximizing expected return subject to a variance constraint and showed that the problem admits an explicit well-posed solution. \citet{MasterFundsRockafellar} extended this approach by replacing variance, or rather standard deviation, with more general deviation measures, including lower semideviation and mean absolute deviation. These functionals are well suited to optimization, but they are not monotone risk functionals: a payoff may dominate another almost surely while being assigned larger deviation. This motivates the use of monotone risk functionals as constraints.

A large literature studies mean-risk portfolio selection, where the objective is expected return and the constraint is imposed through a risk functional such as VaR or ES. Important contributions include \citet{campbell2001optimal}, \citet{rockafellar2000optimization,rockafellar2002conditional}, \citet{alexander2002economic}, \citet{bertsimas2004shortfall}, \citet{ciliberti2007feasibility} and \citet{adam2008spectral}. More recent structural results for mean-risk optimization under coherent and star-shaped risk functionals were given by \citet{herdegen2020dual, herdegen2025rho}, who give precise conditions when a risk constraint prevents ill-posed sequences in mean-risk problems. Our paper differs in two respects. First, we allow the objective to be a general utility functional, not only the mean. Second, our primary object is terminal-wealth portfolio selection: utility and risk are evaluated on the investor's final position, and the excess-return formulation is used mainly as a tool to characterize well-posedness of that problem.

The unconstrained case, where the risk constraint is absent, is also closely related. In the one-period expected-utility framework, \citet[\emph{Theorem~3.3}]{follmerschied:2016} show existence and uniqueness under strict concavity and additional conditions such as boundedness from above or the condition $u(a)=-\infty$ for some finite $a < 0$. These assumptions are natural in the classical concave setting, but exclude many non-concave preferences, such as the $S$-shaped utilities associated with prospect theory \citep{kahn1979prospect}. Portfolio choice under cumulative prospect theory is studied by \citet{bernard2010static} and \citet{he2011portfolio} in a one-period model with one risk-free and one risky asset. In particular, \citet[\emph{Section~3.2}]{he2011portfolio} introduce the large-loss aversion degree
$k=\lim_{y\to\infty}\frac{-u(-y)}{u(y)}$
and use it to characterize well-posedness in their setting. When the limit exists, this coincides with the asymptotic loss-gain ratio that appears in our expected-utility applications. Our results recover the same intuition in a different framework, without probability distortion, with multiple risky assets, and in the presence of general risk constraints.

Utility maximization under risk constraints has also been studied beyond mean-risk objectives. \citet{basak2001value} analyze expected utility maximization under a VaR constraint in a continuous-time complete market and show that the constraint may induce undesirable tail behaviour. \citet{gundel2008utility} study expected utility maximization under a shortfall-risk constraint, while \citet{gabih2009utility}, \citet{he2011portfolioquantiles}, \citet{he2015dynamic}, \citet{wei2018risk} and \citet{ghossoub2025risk} consider related risk-constrained utility maximization problems. Closer in spirit to our focus on the interaction between preferences and risk constraints, \citet{armstrong2019risk,armstrong2022coherent,armstrong2024importance} study risk constraints for traders with non-standard preferences and coherent or dynamic risk functionals. These works demonstrate that risk constraints may fail to curb excessive risk-taking, depending on the interaction between the preference functional, the risk functional and the set of admissible trades. Our contribution is to give a sharp well-posedness criterion in a general one-period terminal-wealth model, identifying $\SLL$ as the relevant structural property.

Finally, our paper is connected to the general theory of utility maximization in multi-period and continuous-time markets. \citet{kramkov1999asymptotic} and \citet{schachermayer2001optimal} show that, in continuous time, existence of optimal portfolios is closely linked to asymptotic elasticity conditions on the utility function. In discrete time, \citet{Rasonyi2005} obtain existence results under no-arbitrage and suitable growth assumptions. These results, and their many extensions, of course include one-period models as special cases. Our question is different: we do not seek a general dynamic existence theorem for a fixed utility class, but rather a structural characterization of when a pair $(\mathcal U,\mathcal R)$ rules out ill-posed sequences, both in a fixed one-period market and uniformly over all arbitrage-free one-period markets.

\section{The Problem}
\label{sec:problem formulation}

Throughout the paper, we consider a one-period economy in which uncertainty about the terminal state of the world is modelled by an atomless probability space $(\Omega, \mathcal{F}, \mathbb{P})$. Denote by $L^1$ the space of integrable random variables, and by $L^\infty$ the space of essentially bounded random variables. Let $L$ be a Riesz space such that $L^\infty \subset L \subset L^1$, and assume that it is \emph{law-invariant} in the sense that whenever $Y \in L$ and $Z$ is a random variable on $(\Omega,\mathcal F,\mathbb P)$ with the same law as $Y$, then also $Z \in L$. This is a mild property that is satisfied by all $L^p$ spaces for $p \in [1,\infty]$, and more generally by all Orlicz spaces and Orlicz hearts, which are standard domains in the theory of risk measures; see, for example, \citet{cheridito2009risk} and \citet{gao2018fatou}.

Note that elements of $L$ are interpreted as financial payoffs, not as loss random variables. Accordingly, larger values correspond to more desirable outcomes.

\subsection{Utility and risk functionals}

The reward of a financial payoff is captured by a \emph{utility functional} $\mathcal{U}$, while its riskiness is evaluated using a \emph{risk functional} $\mathcal{R}$. The numerical values assigned by these functionals are not directly important; rather, what matters is the ordering they induce on $L$. We adopt an axiomatic framework. To that end, a functional $\mathcal H:L\to[-\infty,\infty]$ is said to be:
\begin{itemize}
    \item \emph{positive finite} if $\mathcal{H}(c) \in \mathbb{R}$ for all $c \in (0,\infty)$;
    \item \emph{decreasing} if $\mathcal{H}(Y) \geq \mathcal{H}(Z)$ for all $Y, Z \in L$ with $Y \leq Z$ $\mathbb{P}$-a.s.;
    \item \emph{increasing} if $\mathcal{H}(Y) \leq \mathcal{H}(Z)$ for all $Y, Z \in L$ with $Y \leq Z$ $\mathbb{P}$-a.s.
\end{itemize}
Moreover, if $\mathcal{H}$ is increasing or decreasing, then we set $\mathcal{H}(\infty) \coloneqq \lim \limits_{y \to \infty} \mathcal{H}(y) \in [-\infty,\infty]$.

\begin{definition}
A \emph{utility functional} is any increasing functional $\mathcal{U} : L \to [-\infty, \infty)$ that is positive finite and satisfies $\mathcal{U}(Y) < \mathcal{U}(\infty)$ for all $Y \in L$.
A \emph{risk functional} is any decreasing functional $\mathcal{R} : L \to (-\infty, \infty]$ that is positive finite.
\end{definition}

A few comments regarding the above definition are in order. The range restrictions ensure that financial payoffs are not assigned infinite utility or infinitely favourable risk, while still allowing payoffs with utility $-\infty$ or risk $+\infty$. The condition $\mathcal U(Y)<\mathcal U(\infty)$ reflects non-satiation and expresses that no financial payoff $Y\in L$ is as desirable as receiving arbitrarily large deterministic wealth. It will play an important technical role in several of our proofs. By contrast, we do not impose the condition $\mathcal R(Y)>\mathcal R(\infty)$, since this would exclude the important case of a vacuous risk constraint, for instance when $\mathcal R$ is constant; cf.\ Remark \ref{rmk:alternative problems}(b). Finally, positive finiteness is a weak assumption, requiring only that every strictly positive deterministic outcome have finite utility and finite risk. This is natural in our terminal-wealth framework, because the riskless portfolio in \eqref{eq:maximize utility} yields a strictly positive deterministic terminal wealth, and this feasible benchmark should be assigned finite utility and finite risk.

\medskip
Our definitions of utility and risk functionals impose only the minimal structure needed for the terminal-wealth problem. For several results, however, we require additional properties. These are stated for a general functional $\mathcal H:L\to[-\infty,\infty]$. 
\begin{itemize}
    \item \emph{normalization} if $\mathcal{H}(0) = 0$;
    \item \emph{positive homogeneity} if $\mathcal{H}(\lambda Y)=\lambda \mathcal{H}(Y)$ for all $Y \in L$ and $\lambda\in(0,\infty)$;
    \item \emph{negative star-shapedness} if $\mathcal{H}(\lambda Y) \leq \lambda \mathcal{H}(Y)$ for all $Y \in L$ and $\lambda \in (1,\infty)$; or equivalently, if $\mathcal{H}(\lambda Y) \geq \lambda \mathcal{H}(Y)$ for all $Y \in L$ and $\lambda \in (0,1)$;
    \item \emph{positive star-shapedness} if $\mathcal{H}(\lambda Y) \geq \lambda \mathcal{H}(Y)$ for all $Y \in L$ and $\lambda \in (1,\infty)$; or equivalently, if $\mathcal{H}(\lambda Y) \leq \lambda \mathcal{H}(Y)$ for all $Y \in L$ and $\lambda \in (0,1)$;
    \item \emph{cash-additivity} if $\mathcal{H}(Y+c)=\mathcal{H}(Y) + \mathcal{H}(c)$ for all $Y \in L$ and $c \in \mathbb{R}$;
    \item \emph{concavity} if $\mathcal{H}(\lambda Y+(1-\lambda)Z) \geq \lambda \mathcal{H}(Y)+(1-\lambda)\mathcal{H}(Z)$ for all $Y,Z\in L$ and $\lambda\in(0,1)$; it satisfies \emph{strict concavity} if the inequality is strict whenever $\mathbb{P}[Y \neq Z] > 0$;
    \item \emph{convexity} if $\mathcal{H}(\lambda Y+(1-\lambda)Z) \leq \lambda \mathcal{H}(Y)+(1-\lambda)\mathcal{H}(Z)$ for all $Y,Z\in L$ and $\lambda\in(0,1)$; it satisfies \emph{strict convexity} if the inequality is strict whenever $\mathbb{P}[Y \neq Z] > 0$;
     \item \emph{quasi-concavity} if $\mathcal{H}(\lambda Y+(1-\lambda)Z) \geq \min\{\mathcal{H}(Y), \mathcal{H}(Z)\}$ for all $Y,Z\in L$ and $\lambda\in(0,1)$; it satisfies \emph{strict quasi-concavity} if the inequality is strict whenever $\mathbb{P}[Y \neq Z] > 0$;
    \item \emph{quasi-convexity} if $\mathcal{H}(\lambda Y+(1-\lambda)Z) \leq \max \{\mathcal{H}(Y),\mathcal{H}(Z) \}$ for all $Y,Z\in L$ and $\lambda\in(0,1)$; it satisfies \emph{strict quasi-convexity} if the inequality is strict whenever $\mathbb{P}[Y \neq Z] > 0$;
    \item \emph{cash-concavity} if $\mathcal{H}(\lambda Y + (1-\lambda) c) \geq \lambda \mathcal{H}(Y) + (1-\lambda)\mathcal{H}(c)$ for all $\lambda \in (0,1)$, $Y \in L$, and $c \in \RR$;
    \item \emph{cash-convexity} if $\mathcal{H}(\lambda Y + (1-\lambda) c) \leq \lambda \mathcal{H}(Y) + (1-\lambda)\mathcal{H}(c)$ for all $\lambda \in (0,1)$, $Y \in L$, and $c \in \RR$;
    \item \emph{the upper Fatou property} if $\mathcal{H}(Y) \geq \limsup_{n \to \infty} \mathcal{H}(Y_n)$ whenever $Y_n \to Y$ $\mathbb{P}$-a.s., $Y_n, Y \in L$, and $\lvert Y_n \rvert \leq Z$ $\mathbb{P}$-a.s.\ for some $Z \in L$;
    \item \emph{the lower Fatou property} if $\mathcal{H}(Y) \leq \liminf_{n \to \infty} \mathcal{H}(Y_n)$ whenever $Y_n \to Y$ $\mathbb{P}$-a.s., $Y_n, Y \in L$, and $\lvert Y_n \rvert \leq Z$ $\mathbb{P}$-a.s.\ for some $Z \in L$;
    \item \emph{law-invariance} if $\mathcal{H}(Y) = \mathcal{H}(Z)$ whenever $Y, Z \in L$ have the same distribution.
\end{itemize}

Each of the axioms introduced above has a natural economic interpretation and plays a role in our analysis. Normalization is largely a matter of convention whenever the value at the null position is finite. Indeed, if $\widetilde{\mathcal H}:L\to[-\infty,\infty]$ satisfies
$\widetilde{\mathcal H}(0)\in\mathbb R$, then $\mathcal H(Y):=\widetilde{\mathcal H}(Y)-\widetilde{\mathcal H}(0)$ is normalized and induces the same ordering as $\widetilde{\mathcal H}$: for all $Y,Z\in L$, $\widetilde{\mathcal H}(Y)<\widetilde{\mathcal H}(Z)$ if and only if $\mathcal H(Y)<\mathcal H(Z)$. In our setting, normalized functionals also arise naturally in Section~\ref{subsec:Reparametrization}, after rewriting terminal wealth in excess-return coordinates.

Concavity of a utility functional reflects standard risk aversion, while convexity of a risk functional captures preference for diversification. Both are foundational in utility and risk theory; cf.\ \citet[\emph{Chapters 2 and 4}]{follmerschied:2016} and the references therein. Quasi-concavity and quasi-convexity are weaker shape restrictions that still encode diversification. Under normalization, negative and positive star-shapedness are also weaker than concavity and convexity, respectively; they impose diminishing returns to scale and are often more behaviourally plausible than positive homogeneity, which requires exact linear scaling. We refer to \cite{landsberger1990lotteries} for examples of negatively star-shaped utility functionals and to \cite{castagnoli2021star} for a recent presentation of positively star-shaped risk functionals.

Cash-concavity and cash-convexity are weaker than full concavity and convexity but imply star-shapedness by taking the constant equal to $0$. Cash-additivity generalizes the familiar cash-invariance property from the theory of coherent risk measures \cite{artzner1999coherent} (see Proposition \ref{prop:cash-additivity}), but is less common for utility functionals, except in specific frameworks such as Yaari's dual theory \cite{yaari1987dual} and variational preferences \cite{maccheroni2006ambiguity}. Note that cash-additivity together with negative or positive star-shapedness implies cash-concavity or cash-convexity; cf.\ Proposition \ref{prop:cash-convexity criteria}.

The upper and lower Fatou properties are mild regularity conditions ensuring stability under dominated almost-sure limits. In particular, the lower Fatou property is standard in the theory of risk functionals, where it is commonly referred to simply as the Fatou property; see, for example, \citet{delbaen2002coherent}. Finally, law-invariance expresses the idea that only the distribution of outcomes matters, which is a standard assumption in many models of preference.

Important examples of utility and risk functionals are given below. Further examples are discussed in Section \ref{sec:Examples}.

\paragraph{Expected utility.} 
The most prominent class of utility functionals is given by \emph{expected utility}. A function $u:\mathbb{R} \to [-\infty,\infty)$ is called a \emph{utility function}  if it is increasing, $u(c) > -\infty$ for all $c > 0$, $u(\infty) \coloneqq \lim_{y \to \infty}u(y)>u(z)$ for all $z \in \RR$ and $\limsup_{y \to \infty} u(y)/y < \infty$. The corresponding expected utility functional is the map
$\mathcal{E}_u:L\to[-\infty,\infty)$ defined by
\begin{equation*}
    \mathcal E_u(Y)\coloneqq \mathbb E[u(Y)].
\end{equation*}

A utility function captures an agent's preferences over deterministic monetary outcomes. Monotonicity reflects the preference for more wealth. The requirement $u(c)>-\infty$ for all $c>0$ ensures positive finiteness. The non-satiation condition $u(\infty)>u(z)$ for all $z \in \mathbb{R}$ means that no finite deterministic outcome is as desirable as becoming arbitrarily wealthy. Finally, the upper linear-growth condition $\limsup_{y \to \infty} u(y)/y < \infty$ rules out excessively risk-seeking behaviour and ensures that $\mathbb E[u(Y)]$ is well defined as an element of $[-\infty,\infty)$ for every $Y\in L$. Indeed, it implies that there exist $a,y_0>0$ such that $u(y)\leq ay$ for all $y\geq y_0$, and hence the positive part of $u(Y)$ is integrable whenever $Y\in L\subset L^1$.

Our class of utility functions is very rich. It includes concave von Neumann--Morgenstern utility functions, such as exponential, logarithmic and power utility, non-concave
utilities, including the $S$-shaped utility functions used in prospect theory \citet{kahn1979prospect} as well as negatively star-shaped utilities in the sense of \citet{landsberger1990lotteries}.

\paragraph{Value at Risk and Expected Shortfall.}
Examples of risk functionals include \emph{Value at Risk} (VaR) and \emph{Expected Shortfall} (ES), defined for $\alpha\in(0,1)$ by
\begin{equation*}
    \VaR^{\alpha}(Y)
    := \inf \{m\in\mathbb R:\mathbb P[m+Y<0]\leq \alpha\},
    \qquad
    \ES^{\alpha}(Y)
    := \frac{1}{\alpha}\int_0^\alpha \VaR^\beta(Y)\,d\beta,
    \qquad Y\in L.
\end{equation*}
Here $\alpha$ is the tail probability level. VaR disregards losses occurring in the worst $\alpha$-tail and then takes a worst-case view on the remaining scenarios. ES is more conservative. It averages VaR over all tail levels $\beta\in(0,\alpha)$ and, for continuous distributions, coincides with the expected loss in the worst $\alpha$-tail. Smaller values of $\alpha$ correspond to more stringent risk assessment. In particular, as $\alpha\downarrow0$, both $\VaR^\alpha$ and $\ES^\alpha$ converge to the \emph{worst-case risk measure} $\WC:L\to(-\infty,\infty]$, defined by $\mathrm{WC}(Y):=\esssup(-Y)$. Because of their tractability and regulatory relevance, VaR and ES are among the most widely used risk functionals in financial risk management.

\subsection{Market models and excess returns}

A one-period financial market on $L$ with a risk-free asset can be represented by a price process $\bar{S} = (S^0_t,\ldots,S^d_t)_{t \in \{0,1\}}$, where $d \in \NN$, $S^i_0 \in (0,\infty)$ denotes the initial price of asset $i$ and $S^i_1 \in L$ denotes its random terminal price. We assume asset $0$ to be \emph{risk-free} so that $S^0_1 = (1 + r) S^0_0$, where $r \in (-1,\infty)$ denotes the risk-free rate. As we are interested in portfolio selection, we may assume without loss of generality that all assets are \emph{normalized} so that $S^i_0=1$ for all $i$. 

We will only consider market models $\bar S$ that do not admit arbitrage; that is, there does not exist a portfolio $\bar{\theta} \in \mathbb{R}^{1+d}$, expressed in \emph{numbers of shares}, such that
$
\bar{\theta}\cdot \bar{S}_0 \leq 0$, $ \bar{\theta} \cdot \bar{S}_1 \geq 0 \; \mathbb{P}\textnormal{-a.s.}$ and $\mathbb{P} [ \bar{\theta} \cdot \bar{S}_1 > 0 ] > 0.$
Indeed, if arbitrage opportunities were present, the investor could exploit them rather than face a genuine utility-risk trade-off.

As we assume that market models $\bar{S}$ do not admit arbitrage, we may further assume without loss of generality that each market is \emph{non-redundant} in the sense that no risky asset can be synthesized by a linear combination of the other assets. This in turn gives that all the payoffs $S^0_1, \ldots, S^d_1$ are linearly independent as elements of the Riesz space $L$. We denote the collection of all normalized, arbitrage-free and non-redundant market models on $L$ by $\mathbb{S}(L)$.

\medskip
In order to study portfolio selection it is convenient to switch to \emph{excess returns}. For a normalized market model $\ol S = (S^0_t, \ldots, S^d_t)_{t \in \{0, 1\}} \in \mathbb{S}(L)$, we denote the corresponding excess returns by $X = (X^1, \ldots, X^d)$, where \begin{equation*}
    X^i \coloneqq \frac{S^i_1-S^i_0}{S^i_0} - r \in L, \quad i \in \{1,\ldots,d\}.
\end{equation*}
We denote the collection of all excess returns corresponding to $\mathbb{S}(L)$ by $\mathbb{X}(L)$. Note that there exists a bijection between the sets $\mathbb{S}(L)$ and $\mathbb{X}(L) \times (-1, \infty)$ given by $\ol S \mapsto (X, r)$. For this reason and in a slight abuse of notation, we will refer to any element of $\mathbb{X}(L)$ also as a market.

For a given market $X \in \mathbb{X}(L)$, we can describe a portfolio by a vector $\pi = (\pi^1, \dots, \pi^d) \in \mathbb{R}^d$, where $\pi^i$ denotes the fraction of wealth invested in risky asset $i$. The corresponding portfolio \emph{excess return} is
\begin{equation*}
    X_{\pi} := \pi \cdot X.
\end{equation*}

Note that describing portfolio selection in terms of $\pi$ and $X$ is not fully equivalent to describing it in terms of $\ol \theta$ and $\ol S$, as with $\pi$ and $X$ alone, we can neither recover the initial value $w$ of the portfolio nor the interest rate $r$. However, it is easy to check that there exists a bijection between pairs $(\ol \theta, \ol S)$ and quadruples $(w, r, \pi, X)$. This change of coordinates lies at the heart of our analysis.



\begin{remark}
The set $\mathbb{X}(L)$ can be described also without referring first to $\mathbb{S}(L)$ as the set of all excess return vectors $X = (X^1, \ldots, X^d)$ -- where $d \in \NN$ and $X^i \in L$ -- that are arbitrage-free, in the sense that there does not exist $\pi \in \mathbb{R}^{d}$ with $
X_\pi \geq 0 \; \mathbb{P}\textnormal{-a.s.}$ and $\mathbb{P} [ X_\pi > 0 ] > 0$,
and non-redundant, in the sense that $X^1, \ldots, X^d$ are linearly independent.
\end{remark}

Two particularly common classes of one-period market models are elliptical models and empirical models. Both are most naturally specified in terms of excess returns rather than prices since returns are dimensionless, easier to compare across assets, and typically have more stable statistical properties than price levels. Once an excess-return vector $X$ and a risk-free rate $r\in(-1,\infty)$ are specified, the corresponding normalized price process is recovered by
\begin{equation*}
    S^0_0=\cdots=S^d_0=1,\qquad
    S^0_1=1+r,\qquad
    S^i_1=1+r+X^i,\quad i=1,\ldots,d.
\end{equation*}

\paragraph{Elliptical models.} The classical starting point is to assume that the excess-return vector $X=(X^1,\ldots,X^d)$ is multivariate Gaussian. More generally, one may assume that $X$ has an elliptical distribution. We write
    $X\sim \mathcal E_d(\mu,\Sigma,\psi)$
if its characteristic function is of the form
    $\mathbb E[e^{i t^\top X}]
    =
    \exp(i t^\top \mu)\,\psi(t^\top \Sigma t),
  $, $t\in\mathbb R^d$,
where $\mu\in\mathbb R^d$, $\Sigma\in\mathbb R^{d\times d}$ is symmetric positive semidefinite, and $\psi$ is a real-valued function on $[0,\infty)$. The vector $\mu$ is uniquely determined and is called the location vector, while $\Sigma$ is a dispersion matrix and $\psi$ is the characteristic generator. If $X$ has finite first moments, then $\mu$ coincides with the mean vector. If $X$ has finite second moments, then $\Sigma$ is proportional to the covariance matrix and, after a normalization of the generator, may be taken to be the covariance matrix itself.

Elliptical models are widely used in portfolio theory because they retain much of the tractability of Gaussian models while allowing for heavier tails. In applications, the parameters $\mu$ and $\Sigma$ are typically estimated from historical returns. For elliptical models whose law has full support, $\operatorname{supp}(X)=\mathbb R^d$, such as Gaussian or Student models with positive definite dispersion matrix $\Sigma$, we have $X\in\mathbb X(L)$ provided $X^i\in L$ for each $i$. The use of elliptical distributions in portfolio theory goes back at least to \citet{owen1983class}; see also \citet{chamberlain1983characterization} and the more recent discussion in \citet{schuhmacher2021justifying}.

\paragraph{Empirical models.}
A second common class is obtained directly from empirical return distributions. Suppose that $x_1,\ldots,x_N\in\mathbb R^d$ are observed excess-return vectors, for instance the excess returns of $d$ assets over the past $N$ trading days. The empirical model assigns probability $1/N$ to each observation:
\begin{equation*}
    \mathbb P[X=x_j]=\frac{1}{N},\qquad j=1,\ldots,N.
\end{equation*}
More generally, one may assign probabilities $p_1,\ldots,p_N>0$ with $\sum_{j=1}^N p_j=1$. Such finite-support models are the natural one-period version of historical simulation and can be realized within our atomless framework. Indeed, since $(\Omega,\mathcal F,\mathbb P)$ is atomless, there exist disjoint sets $A_1,\ldots,A_N$ with $\mathbb P[A_j]=p_j$, and so $X=\sum_{j=1}^N x_j\mathbf 1_{A_j}$ has the desired empirical distribution. 

For the empirical model to belong to $\mathbb X(L)$, the support points must satisfy the finite scenario analogues of no-arbitrage and non-redundancy. Non-redundancy means that there is no $\pi\in\mathbb R^d\setminus\{0\}$ such that $\pi\cdot x_m=0$ for all $m$. No-arbitrage means that there is no $\pi\in\mathbb R^d\setminus\{0\}$ such that $\pi\cdot x_m\geq 0$ for all $m$ with strict inequality for at least one $m$. Empirical return distributions are commonly used in portfolio optimization under VaR, ES and more general law-invariant risk constraints; see, for example, \citet{rockafellar2000optimization, rockafellar2002conditional}, \citet{bertsimas2004shortfall} and \citet{adam2008spectral}.

\subsection{Terminal-wealth portfolio selection}

We consider an investor who trades in a financial market and seeks to maximize utility subject to a risk constraint. Let $\mathcal{U}$ be a utility functional and $\mathcal{R}$ a risk functional. Given a market $\bar S\in\mathbb S(L)$, an initial wealth $w\in(0,\infty)$, and a maximal risk threshold $R_{\max}\in[\mathcal R(w(1+r)),\infty)$, we define $(\mathcal U,\mathcal R)$-terminal-wealth portfolio selection as the problem
\begin{equation}
\label{eq:maximize utility}
\sup_{\bar\theta\in\mathbb R^{1+d}} \mathcal U(\bar\theta\cdot\bar S_1)
\quad\text{subject to}\quad
\begin{cases}
\bar\theta\cdot\bar S_0=w,\\
\mathcal R(\bar\theta\cdot\bar S_1)\le R_{\max}.
\end{cases}
\end{equation}

It is natural to take $R_{\max}\in[\mathcal R(w(1+r)),\infty)$, since this is the largest interval that guarantees feasibility of the riskless portfolio $\bar\theta=(w,0,\ldots,0)$, which invests all initial wealth in the riskless asset. This portfolio has cost $w$ at time $0$ and terminal wealth $w(1+r)$, hence risk $\mathcal R(w(1+r))$.

\begin{remark}
\label{rmk:alternative problems}
(a) In \eqref{eq:maximize utility}, both utility and risk are evaluated on terminal wealth. If $\mathcal R$ is cash-additive, imposing the risk constraint on terminal wealth is equivalent, after shifting the threshold, to imposing it on the \emph{change in wealth}. Without cash-additivity, however, the choice of reference point is substantive. We discuss the corresponding change-in-wealth formulation, along with other related problems, separately in Section \ref{sec:related problems}.

We emphasize that the formulation \eqref{eq:maximize utility}, in which the risk functional is applied directly to terminal wealth, is standard and economically meaningful. It evaluates the acceptability of the investor's final position itself, rather than its deviation from a reference wealth level.

(b) If $\mathcal R$ is constant, then the risk constraint is vacuous. In this case, \eqref{eq:maximize utility} reduces to unconstrained utility maximization subject only to the budget constraint:
\begin{equation*}
\sup_{\bar\theta\in\mathbb R^{1+d}} \mathcal U(\bar\theta\cdot\bar S_1)
\quad\text{subject to}\quad
\bar\theta\cdot\bar S_0=w.
\end{equation*}
\end{remark}

Even in an arbitrage-free and non-redundant market, the risk constraint in \eqref{eq:maximize utility} need not prevent investors from taking increasingly large leveraged positions. The following elementary example illustrates this point.

\begin{example}
\label{exa:mean-var-es-binomial-ill-posed}
Let $L=L^1$ and $\mathcal U(Y):=\mathbb E[Y]$. Fix $\alpha\in(0,1)$. Consider a market with a risk-free bond and a defaultable bond, normalized so that $S^0_0=S^1_0=1$ and $S^0_1=1+r$ for some $r\in(-1,\infty)$. Let $A$ denote the default event, with $\mathbb P[A]=p\in(0,\alpha)$, and suppose that the defaultable bond has excess return $X:=a\mathbf 1_{A^c}-b\mathbf 1_A$, where $b\in(0,1+r)$ and $a>0$. Thus
    $S^1_1=1+r+X
    =
    (1+r+a)\mathbf 1_{A^c}+(1+r-b)\mathbf 1_A.$
The bond pays more than the risk-free bond if default does not occur, and suffers a loss relative to the risk-free bond on the default event. Since $b<1+r$, its terminal price is strictly positive. Moreover, the market is arbitrage-free and non-redundant. For $\lambda>0$, consider the strategy that invests the fraction $\lambda$ of initial wealth in the defaultable bond and the remaining fraction $1-\lambda$ in the risk-free bond. When $\lambda>1$, this is a leveraged position financed by shorting the risk-free bond. Its terminal wealth is
\begin{equation*}
    W_\lambda=w(1+r)+w\lambda X.
\end{equation*}
If $a>pb/(1-p)$, then $\mathbb E[X]>0$, and hence $\mathbb E[W_\lambda] = w(1+r)+w\lambda\mathbb E[X] \to\infty$ as $\lambda \to \infty$. However, this unbounded increase in expected terminal wealth need not be ruled out by standard tail-risk constraints. Since $p<\alpha$, the default event is too small to be detected by $\VaR^\alpha$ in the relevant quantile calculation, and $\VaR^\alpha(X)=-a$. Therefore, by cash-additivity and positive homogeneity,
\begin{equation*}
    \VaR^\alpha(W_\lambda)
    =
    -w(1+r)-w\lambda a
    \leq
    \VaR^\alpha(w(1+r))
    \leq R_{\max}
\end{equation*}
for every $R_{\max}\in[\VaR^\alpha(w(1+r)),\infty)$ and every $\lambda>0$. Thus the mean-VaR problem is ill posed since expected terminal wealth tends to infinity along feasible leveraged positions.

The same excessive-risk-taking phenomenon occurs for ES if the upside payoff is sufficiently large. For the above two-point excess return, $\ES^\alpha(X) = (pb-(\alpha-p)a)/\alpha$. Hence, if $a>pb/(\alpha - p)$, then $\ES^\alpha(X)<0$, and so by cash-additivity and positive homogeneity of ES,
\begin{equation*}
    \ES^\alpha(W_\lambda)
    =
    -w(1+r)+w\lambda\ES^\alpha(X)
    \leq
    \ES^\alpha(w(1+r))
    \leq R_{\max}
\end{equation*}
for every $R_{\max}\in[\ES^\alpha(w(1+r)),\infty)$ and every $\lambda>0$. Consequently, the mean-ES problem is also ill posed.
\end{example}

Example \ref{exa:mean-var-es-binomial-ill-posed} shows that no-arbitrage alone is not enough to guarantee existence of an optimizer. Moreover, ill-posedness is not caused by a pathological market. Rather, VaR and ES may fail to penalize rare default losses in a way that controls leverage. The resulting risk constraint can therefore permit, and even appear to reward, increasingly large positions in a defaultable bond whose expected excess return is positive.

One way to avoid this difficulty is to impose additional portfolio constraints, such as no short-selling or leverage bounds. Such constraints make the feasible set bounded and therefore restore existence in many cases. However, this changes the nature of the problem. Well-posedness is then enforced by an external trading constraint rather than by the interaction between the utility objective and the risk constraint. Our aim is instead to understand when the pair $(\mathcal U,\mathcal R)$ \emph{on its own} guarantees well-posedness.

This leads to the following notions. We first fix a market and ask whether the optimization problem admits a solution for every admissible initial wealth and risk threshold.

\begin{definition}
\label{defn:local well posedness}
$(\mathcal U,\mathcal R)$-terminal-wealth portfolio selection is \emph{well posed for the market} $\bar S\in\mathbb S(L)$ if, for every initial wealth $w\in(0,\infty)$ and every risk threshold $R_{\max}\in[\mathcal R(w(1+r)),\infty)$, the optimization problem \eqref{eq:maximize utility} admits at least one solution.
\end{definition}

Once existence is known, uniqueness follows under standard strict quasi-concavity/convexity conditions.

\begin{proposition}
\label{prop:terminal-wealth uniqueness}
Let $\bar S\in\mathbb S(L)$. Assume that $(\mathcal U,\mathcal R)$-terminal-wealth portfolio selection is well posed for $\bar S$. If $\mathcal U$ is strictly quasi-concave and $\mathcal R$ is quasi-convex, then for every $w\in(0,\infty)$ and $R_{\max}\in[\mathcal R(w(1+r)),\infty)$, the optimization problem \eqref{eq:maximize utility} admits a unique maximizer.
\end{proposition}

It is also useful to isolate a weaker property. Even when an optimizer does not exist, one may ask whether feasible portfolios can drive utility all the way to the bliss point $\mathcal U(\infty)$.

\begin{definition}
\label{defn:local weak well posedness}
$(\mathcal U,\mathcal R)$-terminal-wealth portfolio selection is \emph{weakly well posed for the market} $\bar S\in\mathbb S(L)$ if, for every initial wealth $w\in(0,\infty)$ and every risk threshold $R_{\max}\in[\mathcal R(w(1+r)),\infty)$, the supremum in \eqref{eq:maximize utility} is strictly less than $\mathcal U(\infty)$.
\end{definition}

Well-posedness implies weak well-posedness. The converse need not hold, even if $\mathcal U$ satisfies the upper Fatou property and $\mathcal R$ satisfies the lower Fatou property; see Example \ref{exa:no (U,R) arbitrage, but portfolio optimization ill posed}.

We will study both fixed-market and market-independent versions of these notions. The fixed-market problem asks when $(\mathcal U,\mathcal R)$-terminal-wealth portfolio selection is (weakly) well posed for a given market $\bar S\in\mathbb S(L)$. The market-independent problem asks for structural conditions on $\mathcal U$ and $\mathcal R$ that guarantee this uniformly over all markets.

\begin{definition}
\label{defn:global well posedness}
$(\mathcal U,\mathcal R)$-terminal-wealth portfolio selection is \emph{market-independent (weakly) well posed} if it is (weakly) well posed for every market $\bar S\in\mathbb S(L)$.
\end{definition}

Thus the two central questions are the following. For a fixed market $\bar S\in\mathbb S(L)$, when is $(\mathcal U,\mathcal R)$-terminal-wealth portfolio selection well posed? And when do conditions on $\mathcal U$ and $\mathcal R$ alone ensure well-posedness across all markets?

\begin{remark}
From an investor's perspective, the fixed-market problem is the more natural one, since the optimization problem is solved in a given market model. From a regulatory or model-robustness perspective, however, the market-independent question is more relevant. It asks whether the utility/risk specification itself rules out ill-posed leveraged sequences, independently of the chosen arbitrage-free market model.
\end{remark}

\subsection{Reparametrization and the excess-return problem}
\label{subsec:Reparametrization}

The terminal-wealth problem \eqref{eq:maximize utility} is the main object of the paper. For the analysis, however, it is convenient to rewrite it in excess-return coordinates.

Let $\bar S \in \mathbb S(L)$ be a normalized market with risk-free rate $r \in (-1,\infty)$, and let $X \in \mathbb X(L)$ denote the corresponding excess-return vector. If $\bar\theta\in\mathbb R^{1+d}$ satisfies $\bar\theta\cdot \bar S_0=w \in (0,\infty)$, then the associated portfolio fractions in the risky assets are given by
\begin{equation*}
    \pi^i:=\frac{\theta^i}{w}, \qquad i=1,\ldots,d.
\end{equation*}
Since $S^i_1=S^0_1+X^i=(1+r)+X^i$ and $S^i_0=1$ for all $i$, we obtain
\begin{equation*}
    \bar\theta\cdot \bar S_1
=\theta^0(1+r)+\sum_{i=1}^d \theta^i\bigl((1+r)+X^i\bigr)
=(1+r)\sum_{i=0}^d \theta^i+\sum_{i=1}^d \theta^i X^i
=w(1+r)+wX_\pi.
\end{equation*}
Thus terminal wealth decomposes into the deterministic benchmark $w(1+r)$ and the risky excess-return payoff $wX_\pi$.

This motivates the following functionals on excess-return payoffs. For $w\in(0,\infty)$ and $r\in(-1,\infty)$, we define the \emph{$(w,r)$-excess-return functionals} associated with the terminal-wealth
functionals $\mathcal U$ and $\mathcal R$ by
\begin{align*}
\mathcal U_{w,r}(Y)
&:= \mathcal U(w(1+r)+wY)-\mathcal U(w(1+r)),\\
\mathcal R_{w,r}(Y)
&:= \mathcal R(w(1+r)+wY)-\mathcal R(w(1+r)).
\end{align*}
These functionals evaluate an excess-return payoff $Y$ by first converting it into terminal wealth via the map $Y\mapsto w(1+r)+wY$, and then subtracting the baseline values $\mathcal U(w(1+r))$ and $\mathcal R(w(1+r))$. In
particular, they are normalized. Moreover, $\mathcal U_{w,r}$ and $\mathcal R_{w,r}$ inherit the key structural properties of $\mathcal U$ and $\mathcal R$; see Proposition \ref{prop:relating U with tilde U} in Appendix \ref{app:Additional Results and Proofs}.

By Proposition \ref{prop:reparameterization equivalent}, for every initial wealth $w\in(0,\infty)$ and every maximal risk threshold $R_{\max}\in[\mathcal R(w(1+r)),\infty)$, the $(\mathcal U,\mathcal R)$-terminal-wealth portfolio selection problem \eqref{eq:maximize utility} is equivalent to the excess-return problem
\begin{equation}
\label{eq:maximize utility reparameterized}
\sup_{\pi\in\mathbb R^d}\mathcal U_{w,r}(X_\pi)
\quad\text{subject to}\quad
\mathcal R_{w,r}(X_\pi)\le \widetilde R_{\max},
\end{equation}
where $\widetilde R_{\max}:=R_{\max}-\mathcal R(w(1+r))\in[0,\infty)$. We refer to \eqref{eq:maximize utility reparameterized} as $(\mathcal U_{w,r},\mathcal R_{w,r})$-excess-return portfolio selection.

\medskip
We now introduce the corresponding notions of (weak) well-posedness for $(\mathcal U_{w,r},\mathcal R_{w,r})$-excess-return portfolio selection, both for a fixed market and in the market-independent setting.

\begin{definition}
\label{defn:local well posedness reparameterized}
Let $w \in (0,\infty)$, $r \in (-1,\infty)$ and $X \in \mathbb X(L)$. We say that $(\mathcal U_{w,r},\mathcal R_{w,r})$-excess-return portfolio selection is 
\begin{itemize}
    \item \emph{well posed for the market $X$} if, for every $\widetilde R_{\max} \in [0,\infty)$, there exists at least one solution to the optimization problem \eqref{eq:maximize utility reparameterized}.
    \item \emph{weakly well posed for the market $X$} if, for every $\widetilde R_{\max} \in [0,\infty)$, the supremum in \eqref{eq:maximize utility reparameterized} is less than $\mathcal U_{w,r}(\infty)$.
\end{itemize}

\end{definition}

\begin{definition}
\label{defn:global well posedness reparameterized}
Let $w \in (0,\infty)$ and $r \in (-1,\infty)$. Then $(\mathcal{U}_{w,r}, \mathcal{R}_{w,r})$-excess-return portfolio selection is \emph{market-independent (weakly) well posed} if it is (weakly) well posed for every market $X \in \mathbb{X}(L)$. 
\end{definition}

\noindent Analogously to Proposition \ref{prop:terminal-wealth uniqueness}, under well-posedness, we have uniqueness if $\mathcal{U}_{w,r}$ is strictly quasi-concave and $\mathcal{R}_{w,r}$ is quasi-convex.

\begin{proposition}
\label{prop:excess-return uniqueness}
Let $w \in (0,\infty)$, $r \in (-1,\infty)$ and $X \in \mathbb X(L)$. Assume $(\mathcal U_{w,r},\mathcal R_{w,r})$-excess-return portfolio selection is well posed for $X$. If $\mathcal U_{w,r}$ is strictly quasi-concave, and $\mathcal R_{w,r}$ is quasi-convex, then for any $\tilde{R}_{\max}\in[ 0,\infty)$, the optimization problem \eqref{eq:maximize utility reparameterized} admits a unique maximizer.
\end{proposition}

The key observation is that both fixed-market and market-independent (weak) well-posedness of terminal-wealth portfolio selection can be expressed in terms of the associated excess-return problems. For a fixed market $\bar S$, the interest rate $r$ and excess-return vector $X$ are fixed, while the initial wealth $w$ still varies. In the market-independent setting one must in addition quantify over all markets, and hence over all admissible pairs $(X,r)$. The next two lemmas make this reduction precise.

\begin{lemma}
\label{lem:local equivalence reparametrization}
Let $\bar S \in \mathbb S(L)$ be a market with risk-free rate $r \in (-1,\infty)$ and corresponding excess-return vector $X \in \mathbb X(L)$. Then the following are equivalent:
\begin{enumerate}[label=\textnormal{(\alph*)}]
    \item $(\mathcal U,\mathcal R)$-terminal-wealth portfolio selection is (weakly) well posed for $\bar S$.
    \item For every $w \in (0,\infty)$, $(\mathcal U_{w,r},\mathcal R_{w,r})$-excess-return portfolio selection is (weakly) well posed for $X$.
\end{enumerate}
\end{lemma}

\begin{lemma}
\label{lemma:global well-posedness}
The following are equivalent:
\begin{enumerate}[label=\textnormal{(\alph*)}]
    \item $(\mathcal U,\mathcal R)$-terminal-wealth portfolio selection is market-independent (weakly) well posed.
    \item For every $w \in (0,\infty)$ and $r \in (-1,\infty)$, $(\mathcal U_{w,r},\mathcal R_{w,r})$-excess-return portfolio selection is market-independent (weakly) well posed.
\end{enumerate}
\end{lemma}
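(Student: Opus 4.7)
The plan is essentially bookkeeping:\ unpack the two quantifiers appearing in each notion of market-independent (weak) well-posedness and invoke Proposition~\ref{prop:reparameterization equivalent} parameter by parameter. The three ingredients to line up are (i) the bijection between $\mathbb{S}(L)$ and $\mathbb{X}(L) \times (-1,\infty)$ noted in the excerpt, which passes between a market $\bar{S}$ and the pair $(X,r)$; (ii) the affine correspondence $\tilde{R}_{\max} = R_{\max} - \mathcal{R}(w(1+r))$, which is a bijection between $[\mathcal{R}(w(1+r)),\infty)$ and $[0,\infty)$ since $\mathcal{R}(w(1+r)) \in (-\infty,0]$ is finite; and (iii) the constant-shift identity $\mathcal{U}_{w,r}(Y) = \mathcal{U}(w(1+r)+wY) - \mathcal{U}(w(1+r))$, which, together with $\mathcal{U}(w(1+r)) \in [0,\infty)$, gives in the limit $\mathcal{U}_{w,r}(\infty) = \mathcal{U}(\infty) - \mathcal{U}(w(1+r))$.

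For the direction (a) $\Rightarrow$ (b), I would fix arbitrary $w \in (0,\infty)$, $r \in (-1,\infty)$, $X \in \mathbb{X}(L)$, and $\tilde{R}_{\max} \in [0,\infty)$, then reconstruct $\bar{S} \in \mathbb{S}(L)$ from $(X,r)$ via the bijection in (i) and set $R_{\max} \coloneqq \tilde{R}_{\max} + \mathcal{R}(w(1+r))$. Assumption (a), applied to this triple $(\bar{S}, w, R_{\max})$, yields either a maximizer of \eqref{eq:maximize utility} or the strict inequality $\sup < \mathcal{U}(\infty)$, depending on which variant is being proved. Proposition~\ref{prop:reparameterization equivalent} then transports a maximizer of \eqref{eq:maximize utility} directly to a maximizer of \eqref{eq:maximize utility reparameterized}; in the weak case, the same proposition identifies the two suprema up to the constant $\mathcal{U}(w(1+r))$, so the strict bound transfers via (iii) to $\sup < \mathcal{U}_{w,r}(\infty)$. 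The reverse implication (b) $\Rightarrow$ (a) is completely symmetric:\ from $(\bar{S}, w, R_{\max})$ I extract $(X,r)$ via (i) and set $\tilde{R}_{\max} \coloneqq R_{\max} - \mathcal{R}(w(1+r)) \in [0,\infty)$, apply (b) to $(w,r,X,\tilde{R}_{\max})$, and run Proposition~\ref{prop:reparameterization equivalent} in the opposite direction to conclude.

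No step should present any real obstacle, since the entire content of the lemma is a clean repackaging of Proposition~\ref{prop:reparameterization equivalent}. The only thing that truly needs to be checked is that quantifying over all admissible triples $(\bar{S}, w, R_{\max})$ with $R_{\max} \geq \mathcal{R}(w(1+r))$ is exactly the same, under the bijections in (i) and (ii), as quantifying over all quadruples $(w, r, X, \tilde{R}_{\max})$ with $\tilde{R}_{\max} \in [0,\infty)$; once this is observed, the two directions are mirror images of each other.
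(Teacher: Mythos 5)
Your proposal is correct and matches the paper's own treatment: the paper proves Lemma~\ref{lemma:global well-posedness} by simply invoking Proposition~\ref{prop:reparameterization equivalent}, and your argument is a careful unpacking of exactly that reduction, lining up the bijection $\bar{S} \leftrightarrow (X,r)$, the affine correspondence $\tilde{R}_{\max} = R_{\max} - \mathcal{R}(w(1+r))$, and the constant shift $\mathcal{U}_{w,r}(\infty) = \mathcal{U}(\infty) - \mathcal{U}(w(1+r))$ for the weak variant. No gaps.
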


The remainder of the theoretical analysis follows the structure suggested by Lemmas \ref{lem:local equivalence reparametrization} and \ref{lemma:global well-posedness}. In Section \ref{sec:excess return well posedness}, we study $(\mathcal U_{w,r}, \mathcal R_{w,r})$-excess-return portfolio selection, keeping $w$ and $r$ fixed, and analyze both fixed-market and market-independent well-posedness. In Section \ref{sec:terminal wealth well posedness}, we then translate these results back into conditions on the terminal-wealth functionals $\mathcal{U}$ and $\mathcal{R}$.

\begin{remark}
\label{rmk:reparameterized benchmark problem}
Although the terminal-wealth formulation \eqref{eq:maximize utility} is the main object of this paper, the excess-return problem \eqref{eq:maximize utility reparameterized} is also of independent interest. If the subscripts $(w,r)$ are suppressed, it takes the form
\begin{equation*}
    \sup_{\pi\in\mathbb R^d}\mathcal U(X_\pi)
    \quad\text{subject to}\quad
    \mathcal R(X_\pi)\le \widetilde R_{\max},
\end{equation*}
where $\mathcal U$ and $\mathcal R$ are normalized, or merely finite at zero and then normalized. This excess-return formulation covers many classical portfolio selection problems, including mean-risk problems of Markowitz type; cf.\ Corollary \ref{cor:portfolio selection a la markowitz}. In this paper, however, we primarily use the excess-return formulation as a coordinate system for analyzing terminal-wealth portfolio selection. 
\end{remark}

\section{Well-Posedness for Excess-Return Portfolio Selection}
\label{sec:excess return well posedness}

In this section we study the excess-return problem \eqref{eq:maximize utility reparameterized}. Throughout, we fix $w\in(0,\infty)$ and $r\in(-1,\infty)$, and let $\mathcal U_{w,r}$ and $\mathcal R_{w,r}$ denote the corresponding $(w,r)$-excess-return functionals associated with the terminal-wealth functionals $\mathcal U$ and $\mathcal R$. Recall that these excess-return functionals are normalized.

Our goal is to identify conditions on $\mathcal U_{w,r}$ and $\mathcal R_{w,r}$ that guarantee (weak) well-posedness of $(\mathcal U_{w,r},\mathcal R_{w,r})$-excess-return portfolio selection. The corresponding results for the terminal-wealth functionals $\mathcal U$ and $\mathcal R$ will then be obtained in Section \ref{sec:terminal wealth well posedness} by translating these conditions back to the original problem via Lemmas \ref{lem:local equivalence reparametrization} and \ref{lemma:global well-posedness}.


\subsection{Fixed-market well-posedness}
\label{subsec:excess return well posedness:fixed-market}

We first study $(\mathcal U_{w,r},\mathcal R_{w,r})$-excess-return portfolio selection for a fixed market $X\in\mathbb X(L)$. We begin with weak well-posedness. Since the risk threshold in \eqref{eq:maximize utility reparameterized} is allowed to vary, the relevant obstruction is the existence of a portfolio sequence along which utility tends to its bliss point while risk remains uniformly controlled. This leads to the following concept.

\begin{definition}
 We say that the market $X$ admits \emph{$(\mathcal U_{w,r},\mathcal R_{w,r})$-arbitrage} if there exist a sequence of portfolios $(\pi_n)_{n\ge1}\subset \mathbb R^d$ and a constant $\tilde R \in [0,\infty)$ such that
\begin{equation*}
\mathcal U_{w,r}(X_{\pi_n}) \to \mathcal U_{w,r}(\infty)
\qquad\text{and}\qquad
\mathcal R_{w,r}(X_{\pi_n}) \le \tilde R
\quad \text{for all } n \geq 1.
\end{equation*}
\end{definition}


Any sequence generating $(\mathcal U_{w,r},\mathcal R_{w,r})$-arbitrage must be unbounded by Proposition \ref{prop:u rho arbitrage implies unbounded sequence}. Combined with Proposition \ref{prop: (U,R)-arbitrage can be globally centered at 0}, this shows that under the lower Fatou property and positive star-shapedness, such an unbounded sequence gives rise to a single non-zero portfolio whose risk remains non-positive under arbitrary scaling.

\begin{proposition}
\label{prop:U R arbitrage implies not SLL for R}
Assume $\mathcal R_{w,r}$ satisfies the lower Fatou property and positive star-shapedness. If $X$ admits $(\mathcal U_{w,r},\mathcal R_{w,r})$-arbitrage, then there exists $\pi \in \mathbb R^d \setminus \{\mathbf 0\}$ such that
\begin{equation*}
\mathcal R_{w,r}(\lambda X_\pi) \leq 0
\quad \text{for all } \lambda \in (0,\infty).
\end{equation*}
\end{proposition}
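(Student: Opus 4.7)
The plan is to construct $\pi$ as the limit of normalized vectors drawn from an arbitrage sequence, and then to transfer the bound on $\mathcal{R}_{w,r}$ along the sequence to arbitrary positive scalings of $X_\pi$ using positive star-shapedness together with the lower Fatou property.

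Concretely, let $(\pi_n)_{n \geq 1} \subset \RR^d$ witness the $(\mathcal{U}_{w,r}, \mathcal{R}_{w,r})$-arbitrage with bound $\tilde R \in [0,\infty)$. Invoking Proposition \ref{prop:u rho arbitrage implies unbounded sequence}, after extracting a subsequence I may assume $\|\pi_n\| \to \infty$, and invoking Proposition \ref{prop: (U,R)-arbitrage can be globally centered at 0} I may further assume that $\mathcal{R}_{w,r}(X_{\pi_n}) \leq 0$ for every $n$. Set $\pi_n' := \pi_n/\|\pi_n\|$, so each $\pi_n'$ lies on the unit sphere of $\RR^d$. By sequential compactness of the unit sphere, a further subsequence (still denoted $(\pi_n')$) converges to some $\pi \in \RR^d$ with $\|\pi\| = 1$, hence $\pi \neq \mathbf{0}$.

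Fix now an arbitrary $\lambda > 0$. For all sufficiently large $n$, $\mu_n := \lambda / \|\pi_n\| \in (0,1)$, so positive star-shapedness (in its equivalent form $\mathcal{R}_{w,r}(\mu Y) \leq \mu\, \mathcal{R}_{w,r}(Y)$ for $\mu \in (0,1)$) combined with the centering yields
\[
\mathcal{R}_{w,r}(\lambda X_{\pi_n'}) \;=\; \mathcal{R}_{w,r}(\mu_n X_{\pi_n}) \;\leq\; \mu_n\, \mathcal{R}_{w,r}(X_{\pi_n}) \;\leq\; 0.
\]
Since $\pi_n' \to \pi$ in $\RR^d$, the random variables $\lambda X_{\pi_n'}$ converge $\mathbb{P}$-a.s.\ to $\lambda X_\pi$, and are uniformly dominated in absolute value by $\lambda \sum_{i=1}^d |X^i| \in L$ (using that $\|\pi_n'\| = 1$ forces bounded coefficients and that $L$ is closed under absolute values). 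The lower Fatou property of $\mathcal{R}_{w,r}$ then gives
\[
\mathcal{R}_{w,r}(\lambda X_\pi) \;\leq\; \liminf_{n \to \infty} \mathcal{R}_{w,r}(\lambda X_{\pi_n'}) \;\leq\; 0,
\]
which is the desired inequality for this $\lambda$.

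I expect no serious obstacle once Propositions \ref{prop:u rho arbitrage implies unbounded sequence} and \ref{prop: (U,R)-arbitrage can be globally centered at 0} are in place: the argument reduces to a compactness-plus-Fatou passage to the limit. The only mildly delicate point is verifying the uniform integrability-style domination required by the lower Fatou property, which is precisely what the normalization step gives us, since it keeps the coefficients of the linear combinations in $(\pi_n'\cdot X)$ bounded by $1$.
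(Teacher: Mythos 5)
Your overall strategy coincides with the paper's: the paper proves this proposition precisely by combining Proposition \ref{prop:u rho arbitrage implies unbounded sequence} (unboundedness of any arbitrage sequence) with Proposition \ref{prop: (U,R)-arbitrage can be globally centered at 0}, and your normalisation--compactness--Fatou argument is essentially a re-derivation of the latter, with the Euclidean norm $\|\pi_n\|$ in place of the coordinate $\lvert\pi^1_n\rvert$ used there.

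There is, however, one incorrect step. You invoke Proposition \ref{prop: (U,R)-arbitrage can be globally centered at 0} to ``further assume that $\mathcal{R}_{w,r}(X_{\pi_n}) \leq 0$ for every $n$''. That proposition licenses no such reduction of the sequence: its conclusion is the existence of a single non-zero $\pi$ with $\mathcal{R}_{w,r}(\lambda X_\pi)\leq 0$ for all $\lambda>0$ --- i.e.\ exactly the statement you are trying to prove --- and nothing in the standing hypotheses (positive star-shapedness and the lower Fatou property; no cash-additivity) lets you push the risk of the $X_{\pi_n}$ themselves down to $0$. Fortunately the step is also unnecessary: keeping the original bound $\mathcal{R}_{w,r}(X_{\pi_n})\leq\tilde R$, positive star-shapedness gives $\mathcal{R}_{w,r}(\lambda X_{\pi_n'}) = \mathcal{R}_{w,r}(\mu_n X_{\pi_n}) \leq \mu_n\,\mathcal{R}_{w,r}(X_{\pi_n})\leq \mu_n\tilde R$ with $\mu_n = \lambda/\|\pi_n\|\to 0$, and the lower Fatou property then yields $\mathcal{R}_{w,r}(\lambda X_\pi)\leq\liminf_{n}\mu_n\tilde R = 0$. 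With that repair your argument is sound (and note that, applied as actually stated, Proposition \ref{prop: (U,R)-arbitrage can be globally centered at 0} would finish the proof in one line). A cosmetic point: Proposition \ref{prop:u rho arbitrage implies unbounded sequence} gives $\|\pi_n\|\to\infty$ for the whole sequence, so no subsequence extraction is needed at that stage.
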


\begin{remark}
The terminology is consistent with existing notions of arbitrage. If $\mathcal U_{w,r}$ is the mean and one requires $\widetilde R=0$, the above definition coincides with $\rho$-arbitrage; see \citet[\textit{Definition~3.3}]{herdegen2025rho}. If, in addition, $\mathcal R_{w,r}=\WC$, then $(\mathcal U_{w,r},\mathcal R_{w,r})$-arbitrage is equivalent to ordinary arbitrage. Indeed, Proposition \ref{prop:U R arbitrage implies not SLL for R} and the equivalence
\begin{equation*}
\WC(X_\pi)\leq0
\quad\Longleftrightarrow\quad
X_\pi\geq0\ \text{almost surely}
\end{equation*}
yield one implication, while the converse follows by scaling an arbitrage portfolio.
\end{remark}

This motivates \emph{sensitivity to large losses}: every non-zero portfolio payoff should eventually be detected by the risk functional under sufficiently large leverage.

\begin{definition}
\label{def: ssl local}
We say that $\mathcal R_{w,r}$ satisfies \emph{sensitivity to large losses $(\SLL)$ on $X$} if, for every $\pi\in\mathbb R^d\setminus\{\mathbf 0\}$, there exists $\lambda_\pi\in(0,\infty)$ such that
\begin{equation*}
\mathcal R_{w,r}(\lambda X_\pi)>0
\quad\text{for all }\lambda\in(\lambda_\pi,\infty).
\end{equation*}
\end{definition}

Since $X$ is arbitrage-free, every non-zero portfolio $\pi$ satisfies $\mathbb P[X_\pi<0]>0$. Thus $\SLL$ on $X$ means that if such a portfolio is leveraged strongly enough, then eventually the losses become large enough for the risk functional to turn positive. In other words, the risk constraint cannot remain blind to large leveraged losses. Proposition \ref{prop:U R arbitrage implies not SLL for R} shows that, under the lower Fatou property and positive star-shapedness, $\SLL$ on $X$ is sufficient on the risk side to exclude $(\mathcal U_{w,r},\mathcal R_{w,r})$-arbitrage and hence to guarantee weak well-posedness. 

On the utility side, the situation is more delicate. A sequence generating $(\mathcal U_{w,r},\mathcal R_{w,r})$-arbitrage need not yield a fixed portfolio whose scaled payoff drives utility to its bliss point. Instead, this behaviour may appear only along payoffs that are small perturbations of a limiting portfolio payoff. This is the content of the next proposition.

\begin{proposition}
\label{prop:unbounded sequence of acceptable portfolios}
If $X$ admits $(\mathcal U_{w,r},\mathcal R_{w,r})$-arbitrage, then there exist $\pi\in\mathbb R^d\setminus\{\mathbf 0\}$ and $Y \in L$ such that, for every $\varepsilon>0$, there exists a sequence $(\lambda_n)_{n\ge1}\subset(0,\infty)$ with $\lambda_n\to\infty$ and
\begin{equation*}
\mathcal U_{w,r}\bigl(\lambda_n(X_\pi+\varepsilon Y)\bigr)
\to
\mathcal U_{w,r}(\infty).
\end{equation*}
\end{proposition}

Thus, the asymptotic utility behaviour generated by $(\mathcal U_{w,r},\mathcal R_{w,r})$-arbitrage need not occur along the limiting portfolio payoff $X_\pi$ itself. What is guaranteed is the existence of a perturbation $Y\in L$ such that, for every $\varepsilon>0$, utility approaches its bliss point under unbounded scaling of the perturbed payoff $X_\pi+\varepsilon Y$. Ordinary \emph{asymptotic sensitivity to large losses} controls the exact portfolio payoffs generated by the market, whereas its \emph{robust} version rules out precisely this phenomenon for nearby payoffs.
 
\begin{definition}
\label{def:ASLL local}
We say that $\mathcal U_{w,r}$ satisfies \emph{asymptotic sensitivity to large losses $(\ASLL)$ on $X$} if, for every $\pi\in\mathbb R^d\setminus\{\mathbf 0\}$,
\begin{equation*}
\limsup_{\lambda\to\infty}\mathcal U_{w,r}(\lambda X_\pi)
<
\mathcal U_{w,r}(\infty).
\end{equation*}
We say that $\mathcal U_{w,r}$ satisfies \emph{robust $\ASLL$ on $X$} if, for every $\pi\in\mathbb R^d\setminus\{\mathbf 0\}$ and every $Y\in L$, there exists $\varepsilon_{\pi,Y}>0$ such that
\begin{equation*}
\limsup_{\lambda\to\infty}
\mathcal U_{w,r}\bigl(\lambda(X_\pi+\varepsilon Y)\bigr)
<
\mathcal U_{w,r}(\infty)
\quad\text{for all }\varepsilon\in(0,\varepsilon_{\pi,Y}).
\end{equation*}
\end{definition}

$\ASLL$ requires utility to remain strictly below its bliss point under large scaling of every non-zero portfolio payoff. Robust $\ASLL$ requires this property to persist under all sufficiently small perturbations of such payoffs, which may be interpreted as model misspecification or approximation error.

Combining the preceding utility-side and risk-side observations yields two alternative sufficient conditions for weak well-posedness.

\begin{proposition}
\label{prop:local SLL implies no (U,R) arbitrage}
$(\mathcal U_{w,r},\mathcal R_{w,r})$-excess-return portfolio selection is weakly well posed for $X$ if one of the following conditions is satisfied:
\begin{enumerate}[label=\textnormal{(\alph*)}]
    \item $\mathcal U_{w,r}$ satisfies robust $\ASLL$ on $X$;
    \item $\mathcal R_{w,r}$ satisfies the lower Fatou property, positive star-shapedness, and $\SLL$ on $X$.
\end{enumerate}
\end{proposition}

We now pass from weak well-posedness to well-posedness. Weak well-posedness excludes feasible sequences whose utility approaches the bliss point, but does not by itself guarantee that the supremum is attained. Under the relevant Fatou properties, attainment follows once every maximizing sequence is shown to be bounded. On the risk side, this boundedness is again provided by positive star-shapedness and $\SLL$ on $X$. On the utility side, it is provided by the following stronger robust version of SLL

\begin{definition}
\label{def: robust SLL local}
We say that $\mathcal U_{w,r}$ satisfies \emph{robust $\SLL$ on $X$} if, for every $\pi\in\mathbb R^d\setminus\{\mathbf 0\}$ and every $Y \in L$, there exist $\varepsilon_{\pi,Y}>0$ and $\lambda_{\pi,Y}\in(0,\infty)$ such
that 
\begin{equation*}
\mathcal U_{w,r}\bigl(\lambda(X_\pi+\varepsilon Y)\bigr)<0 \quad \textnormal{for all } \lambda\in(\lambda_{\pi,Y},\infty) \textnormal{ and } \varepsilon\in(0,\varepsilon_{\pi,Y}). 
\end{equation*}

\end{definition}

With this we obtain the following sufficient conditions for well-posedness of $(\mathcal U_{w,r},\mathcal R_{w,r})$-excess-return portfolio selection for the fixed market $X$.

\begin{proposition}
\label{prop:sufficient local well posedness}
Assume that $\mathcal U_{w,r}$ satisfies the upper Fatou property and $\mathcal R_{w,r}$ satisfies the lower Fatou property. Then $(\mathcal U_{w,r},\mathcal R_{w,r})$-excess-return portfolio selection is well posed for $X$ if one of the following conditions is satisfied:
\begin{enumerate}[label=\textnormal{(\alph*)}]
    \item $\mathcal U_{w,r}$ satisfies robust $\SLL$ on $X$;
    \item $\mathcal R_{w,r}$ satisfies positive star-shapedness and $\SLL$ on $X$.
\end{enumerate}
\end{proposition}

We have now obtained sufficient conditions for (weak) well-posedness of $(\mathcal U_{w,r},\mathcal R_{w,r})$-excess-return portfolio selection for the fixed market $X$. These conditions are not necessary in general; see Example \ref{exa:local weak well posedness}. Indeed, the failure of the utility-side condition and the failure of the risk-side condition may occur for different portfolio payoffs. A payoff along which utility can approach its bliss point may be controlled by the risk functional, while a payoff that remains invisible to the risk functional may be unattractive to the utility functional. To obtain a converse in the fixed-market setting, one must exclude this mismatch.

\begin{definition}
\label{def:aligned local}
We say that $\mathcal R_{w,r}$ is \emph{aligned with $\mathcal U_{w,r}$ on $X$} if, for every $\pi\in\mathbb R^d\setminus\{\mathbf 0\}$,
\begin{equation*}
\mathcal R_{w,r}(\lambda X_\pi)\le 0
\ \text{for all }\lambda>0
\implies
\lim_{\lambda\to\infty}\mathcal U_{w,r}(\lambda X_\pi)
=
\mathcal U_{w,r}(\infty).
\end{equation*}
\end{definition}
Alignment requires every non-zero portfolio payoff that remains undetected by the risk constraint under arbitrary scaling also to drive utility to its bliss point. It therefore ensures that a failure of risk control occurs for a payoff that is relevant for utility maximization. 
We proceed to illustrate the condition of alignment in the classical mean-risk case. A further important example is given in the proof of Theorem~\ref{thm:Gaussian fixed market characterization}.
\begin{example}
Suppose that $\mathcal U_{w,r}(Y)=\mathbb E[Y]$ and  $\mathcal R_{w,r}$ is positively homogeneous and \emph{strictly expectation bounded} on the payoffs generated by the market, in the sense that
\begin{equation*}
\mathcal R_{w,r}(X_\pi)>\mathbb E[-X_\pi], \quad \textnormal{for all }\pi \neq \mathbf{0}.
\end{equation*}
If $\mathcal R_{w,r}(\lambda X_\pi)\le0$ for all $\lambda > 0$, then positive homogeneity gives $\mathcal R_{w,r}(X_\pi)\le0$. Strict expectation boundedness implies $\mathbb E[X_\pi]>0$, and hence
\begin{equation*}
\mathcal U_{w,r}(\lambda X_\pi)
=
\lambda\mathbb E[X_\pi]
\to\infty
=
\mathcal U_{w,r}(\infty).
\end{equation*}
Thus $\mathcal R_{w,r}$ is aligned with expected return on $X$.
\end{example}

Under the assumption of alignment, one obtains a complete characterization of (weak) well-posedness for $(\mathcal U_{w,r},\mathcal R_{w,r})$-excess-return portfolio selection for the fixed market $X$.

\begin{theorem}
\label{thm:local weak wp reparametrized characterization}
Assume that $\mathcal U_{w,r}$ satisfies the upper Fatou property, and that $\mathcal R_{w,r}$ satisfies positive star-shapedness and the lower Fatou property. Consider the following statements:
\begin{enumerate}[label=\textnormal{(\alph*)}]
    \item $\mathcal U_{w,r}$ satisfies robust $\ASLL$ on $X$, or
    $\mathcal R_{w,r}$ satisfies $\SLL$ on $X$;
    \item $(\mathcal U_{w,r},\mathcal R_{w,r})$-excess-return portfolio selection is weakly well posed for $X$;
    \item $\mathcal U_{w,r}$ satisfies robust $\SLL$ on $X$, or $\mathcal R_{w,r}$ satisfies $\SLL$ on $X$;
    \item $(\mathcal U_{w,r},\mathcal R_{w,r})$-excess-return portfolio selection is well posed for $X$.
\end{enumerate}
Then \textnormal{(a)} implies \textnormal{(b)}, and \textnormal{(c)} implies
\textnormal{(d)}. If, in addition, $\mathcal R_{w,r}$ is aligned with $\mathcal U_{w,r}$ on $X$, then all four statements are equivalent.
\end{theorem}

The theorem has a simple interpretation. For a fixed market, ill-posed behaviour can only arise through leveraged portfolio payoffs. Either the risk functional eventually detects these payoffs under scaling, or the utility functional prevents them from approaching the bliss point. Alignment is needed only for the converse: without it, the payoff for which risk fails to control leverage may differ from the payoff along which utility can become arbitrarily large.

\subsection{Market-independent well-posedness}
\label{subsec:excess return well posedness:market-independent}

We now turn to market-independent (weak) well-posedness of
$(\mathcal U_{w,r},\mathcal R_{w,r})$-excess-return portfolio selection. In the fixed-market setting of Section \ref{subsec:excess return well posedness:fixed-market}, the relevant large-loss conditions are formulated relative to the portfolio payoffs generated by a given market $X$. In the market-independent setting, the market is allowed to vary, so the corresponding conditions are formulated on all downside-bearing payoffs in the ambient space $L$. This leads to the following global notions.

\begin{definition}
Let $\mathcal R_{w,r}$ be a risk functional and $\mathcal U_{w,r}$ be a utility functional. Then
\begin{itemize}
 \item    $\mathcal R_{w,r}$ satisfies \emph{$\SLL$ on $L$} if, for every $Y \in L$ with $\mathbb P[Y<0]>0$, there exists $\lambda_Y \in (0,\infty)$ such that
\begin{equation*}
\mathcal R_{w,r}(\lambda Y) > 0
\quad \text{for all } \lambda \in (\lambda_Y,\infty).
\end{equation*}
\item $\mathcal U_{w,r}$ satisfies \emph{$\ASLL$ on $L$} if, for every $Y \in L$ with $\mathbb P[Y<0]>0$,
\begin{equation*}
\limsup_{\lambda\to\infty}\mathcal U_{w,r}(\lambda Y)
<
\mathcal U_{w,r}(\infty).
\end{equation*}
\item $\mathcal U_{w,r}$ satisfies \emph{$\SLL$ on $L$} if, for every $Y \in L$ with $\mathbb P[Y<0]>0$, there exists $\lambda_Y \in (0,\infty)$ such that
\begin{equation*}
\mathcal U_{w,r}(\lambda Y) < 0
\quad \text{for all } \lambda \in (\lambda_Y,\infty).
\end{equation*}
\item $\mathcal R_{w,r}$ is \emph{aligned with $\mathcal U_{w,r}$ on $L$} if, for every $Y \in L$ with $\mathbb P[Y<0]>0$,
\begin{equation*}
\mathcal R_{w,r}(\lambda Y) \le 0 \ \text{for all } \lambda>0
\implies
\lim_{\lambda\to\infty}\mathcal U_{w,r}(\lambda Y)=\mathcal U_{w,r}(\infty).
\end{equation*}
\end{itemize}
\end{definition}

One key simplification occurs in the market-independent setting. In the fixed-market theory, the utility-side conditions had to be formulated robustly, because one only tests portfolio payoffs of the form $X_\pi$, while perturbations $X_\pi+\varepsilon Y$ need not themselves be generated by the market. On the whole space $L$, by contrast, such perturbations are simply further elements of $L$. Thus robustness is automatic; see Proposition \ref{prop:robust and nonrobust global coincide}. As a consequence, the market-independent theory can be formulated directly in terms of $\ASLL$ and $\SLL$, without separate robust variants. The next result is the global analogue of Theorem \ref{thm:local weak wp reparametrized characterization}.

\begin{proposition}
\label{prop:global wp reparametrized characterization aligned}
Assume that $\mathcal U_{w,r}$ satisfies the upper Fatou property, and that $\mathcal R_{w,r}$ satisfies positive star-shapedness and the lower Fatou property. Consider the following statements:
\begin{enumerate}[label=\textnormal{(\alph*)}]
    \item $\mathcal U_{w,r}$ satisfies $\ASLL$ on $L$, or
    $\mathcal R_{w,r}$ satisfies $\SLL$ on $L$;
    \item $(\mathcal U_{w,r},\mathcal R_{w,r})$-excess-return portfolio selection is market-independent weakly well posed;
    \item $\mathcal U_{w,r}$ satisfies $\SLL$ on $L$, or
    $\mathcal R_{w,r}$ satisfies $\SLL$ on $L$;
    \item $(\mathcal U_{w,r},\mathcal R_{w,r})$-excess-return portfolio selection is market-independent well posed.
\end{enumerate}
Then \textnormal{(a)} implies \textnormal{(b)}, and \textnormal{(c)} implies
\textnormal{(d)}. If, in addition, $\mathcal R_{w,r}$ is aligned with $\mathcal U_{w,r}$ on $L$, then all four statements are equivalent.
\end{proposition}

Proposition \ref{prop:global wp reparametrized characterization aligned} shows that the distinction between robust and non-robust large-loss conditions disappears in the market-independent setting. Alignment remains as a compatibility condition for the converse, coupling the utility and risk functionals on the same payoff in $L$. Under law-invariance, however, alignment can be dispensed with if the utility functional satisfies the following \emph{sensitivity equivalence} property.

\begin{definition}
\label{def:sensitivity equivalent global}
We say that $\mathcal U_{w,r}$ satisfies \emph{sensitivity equivalence} ($\SE$) if it satisfies $\ASLL$ on $L$ if and only if it satisfies $\SLL$ on $L$.
\end{definition}

SE does not require either property to hold. It requires only that the weak and strong utility-side conditions be equivalent. The class of utility functionals satisfying $\SE$ is broad. It includes all cash-additive and cash-concave utility functionals; see Proposition \ref{prop:WSLL equivalent to SLL}. It also contains many expected utility functionals, including a wide range of concave and non-concave
examples; see Proposition \ref{prop:expected utility sensitivity equivalent}.

The next theorem gives the resulting complete characterization. It replaces alignment, which was needed for the converse in the fixed-market theory, by the more natural assumptions of $\SE$ and law-invariance. Thus, market-independent (weak) well-posedness of $(\mathcal U_{w,r},\mathcal R_{w,r})$-excess-return portfolio selection can be characterized entirely through structural properties of the excess-return functionals $\mathcal U_{w,r}$ and $\mathcal R_{w,r}$.

\begin{theorem}
\label{thm:global wp reparametrized characterization}
Assume that $\mathcal U_{w,r}$ satisfies the upper Fatou property and $\SE$, and that $\mathcal R_{w,r}$ satisfies positive star-shapedness and the lower Fatou property. In addition, suppose that $\mathcal U_{w,r}$ or $\mathcal R_{w,r}$ is law-invariant. Then the following statements are
equivalent:
\begin{enumerate}[label=\textnormal{(\alph*)}]
    \item $\mathcal U_{w,r}$ satisfies $\SLL$ on $L$, or
    $\mathcal R_{w,r}$ satisfies $\SLL$ on $L$;
    \item $(\mathcal U_{w,r},\mathcal R_{w,r})$-excess-return portfolio selection is market-independent weakly well posed;
    \item $(\mathcal U_{w,r},\mathcal R_{w,r})$-excess-return portfolio selection is market-independent well posed.
\end{enumerate}
\end{theorem}

The theorem makes the utility-risk interplay particularly transparent. Market-independent well-posedness holds precisely when at least one of the two functionals detects large losses strongly enough. No separate alignment condition is required.

We conclude with the important special case of Markowitz-type mean-risk portfolio selection on excess returns, for which we suppress the subscripts $(w,r)$. The mean functional is law-invariant, satisfies the upper Fatou property and $\SE$, but does not satisfy $\SLL$ on $L$. Consequently, the preceding theorem reduces market-independent well-posedness entirely to $\SLL$ of the risk functional.

\begin{corollary}
\label{cor:portfolio selection a la markowitz}
Let $\mathcal U(Y):=\mathbb E[Y]$, and let $\mathcal R$ be a positively star-shaped normalized risk functional satisfying the lower Fatou property. Then the Markowitz-type optimization problem
\begin{equation*}
\sup_{\pi\in\mathbb R^d}\mathbb E[X_\pi]
\quad\textnormal{subject to}\quad
\mathcal R(X_\pi)\leq \widetilde R_{\max},
\end{equation*}
admits a solution for every $\widetilde R_{\max}\in[0,\infty)$ and every market $X\in\mathbb X(L)$ if and only if $\mathcal R$ satisfies $\SLL$ on $L$.
\end{corollary}

\section{Well-Posedness for Terminal-Wealth Portfolio Selection}
\label{sec:terminal wealth well posedness}

We now return to the main problem of the paper, namely $(\mathcal U,\mathcal R)$-terminal-wealth portfolio selection. By Lemmas \ref{lem:local equivalence reparametrization} and \ref{lemma:global well-posedness}, the results of Section \ref{sec:excess return well posedness} can be transferred back to the terminal-wealth formulation once the excess-return large-loss conditions are translated into conditions on the original functionals $\mathcal U$ and $\mathcal R$.

The full terminal-wealth theory is slightly more cumbersome than its excess-return counterpart, because without additional assumptions the relevant large-loss conditions must be formulated relative to strictly positive cash baselines. We therefore proceed in two steps. In the main body, we present simplified fixed-market and market-independent results under standard assumptions that recover statements closely parallel to those of Section \ref{sec:excess return well posedness}. The corresponding general theory, valid without these simplifying assumptions, is given in Appendix \ref{app:terminal-wealth portfolio selection}.

\subsection{Fixed-market well-posedness}
\label{subsec:terminal wealth Fixed-market well-posedness}

We first study $(\mathcal U,\mathcal R)$-terminal-wealth portfolio selection for a fixed market. Throughout this subsection, assume $\mathcal R$ is cash-additive and $\mathcal U(0)\in\mathbb R$. By subtracting the finite constant $\mathcal U(0)$, we may assume without loss of generality that $\mathcal U(0)=0$. Since cash-additivity implies normalization, we also have $\mathcal R(0)=0$. The general theory can be found in Appendix \ref{app:subsec:terminal wealth Fixed-market well-posedness}.

Let $\bar S\in\mathbb S(L)$ be a fixed market, with risk-free rate $r\in(-1,\infty)$ and corresponding excess-return vector $X\in\mathbb X(L)$. If $\pi\in\mathbb R^d$ and $\bar\theta$ denotes the corresponding portfolio in numbers of shares with initial wealth $w$, then $\bar\theta\cdot\bar S_1=w(1+r)+wX_\pi$. Equivalently, since $\bar S$ is normalized, every $\pi\in\mathbb R^d$ determines a unique zero-cost portfolio
\begin{equation*}
\bar\eta^\pi:=\Bigl(-\sum_{i=1}^d\pi^i,\pi^1,\ldots,\pi^d\Bigr)
\qquad\text{with}\qquad
\bar\eta^\pi\cdot\bar S_1=X_\pi.
\end{equation*}
Thus, in the fixed-market terminal-wealth problem, the relevant risky payoffs are precisely the terminal payoffs of non-zero zero-cost portfolios. Under the standing assumptions of this subsection, the corresponding large-loss notions simplify to the following normalized versions; the general cash-based formulations are recorded in Appendix \ref{app:subsec:terminal wealth Fixed-market well-posedness}.

\begin{definition}
Let $\mathcal R$ be a risk functional and $\mathcal U$ be a utility function. Then
\begin{itemize}
    \item $\mathcal R$ satisfies \emph{$\SLL$ on $\bar S$} if, for every zero-cost portfolio $\bar\eta\in\mathbb R^{1+d}\setminus\{\mathbf 0\}$ with $\bar\eta\cdot\bar S_0=0$, there exists $\lambda_{\bar\eta}\in(0,\infty)$
such that
\begin{equation*}
\mathcal R\bigl(\lambda\,\bar\eta\cdot \bar S_1\bigr)>0
\quad\text{for all }\lambda\in(\lambda_{\bar\eta},\infty).
\end{equation*}
\item $\mathcal U$ satisfies \emph{robust $\ASLL$ on $\bar S$} if, for every zero-cost portfolio $\bar\eta\in\mathbb R^{1+d}\setminus\{\mathbf 0\}$ with $\bar\eta\cdot\bar S_0=0$ and every $Y\in L$, there exists $\varepsilon_{\bar\eta,Y}>0$ such that
\begin{equation*}
\limsup_{\lambda\to\infty}
\mathcal U\bigl(\lambda(\bar\eta\cdot \bar S_1+\varepsilon Y)\bigr)
<
\mathcal U(\infty)
\quad\text{for all }\varepsilon\in(0,\varepsilon_{\bar\eta,Y}).
\end{equation*}
\item $\mathcal U$ satisfies \emph{robust $\SLL$ on $\bar S$} if, for every zero-cost portfolio $\bar\eta\in\mathbb R^{1+d}\setminus\{\mathbf 0\}$ with $\bar\eta\cdot\bar S_0=0$ and every $Y\in L$, there exist $\varepsilon_{\bar\eta,Y}>0$ and $\lambda_{\bar\eta,Y}\in(0,\infty)$ such that
\begin{equation*}
\mathcal U\bigl(\lambda(\bar\eta\cdot \bar S_1+\varepsilon Y)\bigr)<0 \quad \textnormal{for all } \lambda\in(\lambda_{\bar\eta,Y},\infty) \textnormal{ and } \varepsilon\in(0,\varepsilon_{\bar\eta,Y}).
\end{equation*}
\item $\mathcal R$ is \emph{aligned with $\mathcal U$ on $\bar S$} if, for every zero-cost portfolio $\bar\eta\in\mathbb R^{1+d}\setminus\{\mathbf 0\}$ with $\bar\eta\cdot\bar S_0=0$,
\begin{equation*}
\mathcal R\bigl(\lambda\,\bar\eta\cdot \bar S_1\bigr)\le 0
\ \text{for all }\lambda>0
\implies
\lim_{\lambda\to\infty}\mathcal U\bigl(\lambda\,\bar\eta\cdot \bar S_1\bigr)
=
\mathcal U(\infty).
\end{equation*}
\end{itemize}

\end{definition}

The next theorem is the simplified fixed-market terminal-wealth theorem corresponding to Theorem \ref{thm:local weak wp reparametrized characterization}.

\begin{theorem}
\label{thm:local wp original characterization}
Assume that $\mathcal U$ satisfies normalization and the upper Fatou property, and that $\mathcal R$  satisfies cash-additivity, positive star-shapedness and the lower Fatou property. Consider the following statements:
\begin{enumerate}[label=\textnormal{(\alph*)}]
    \item $\mathcal U$ satisfies robust $\ASLL$ on $\bar S$, or $\mathcal R$ satisfies $\SLL$ on $\bar S$.
    \item $(\mathcal U,\mathcal R)$-terminal-wealth portfolio selection is weakly well posed for $\bar S$.
    \item $\mathcal U$ satisfies robust $\SLL$ on $\bar S$, or $\mathcal R$ satisfies $\SLL$ on $\bar S$.
    \item $(\mathcal U,\mathcal R)$-terminal-wealth portfolio selection is well posed for $\bar S$.
\end{enumerate}
Then \textnormal{(a)} implies \textnormal{(b)}, and \textnormal{(c)} implies \textnormal{(d)}. If, in addition, $\mathcal R$ is aligned with $\mathcal U$ on $\bar S$, then all four statements are equivalent.
\end{theorem}

Theorem \ref{thm:local wp original characterization} shows that fixed-market (weak) well-posedness is determined by how $\mathcal U$ and $\mathcal R$ behave along the large zero-cost trading payoffs available in the market $\bar S$. This local viewpoint is especially useful for risk functionals such as VaR and ES, which are not sensitive to large losses on the whole space $L$. In elliptical markets, for instance, the relevant condition can be expressed in terms of the maximal Sharpe ratio of the market; see Section~\ref{subsec:elliptical implications}.

\subsection{Market-independent well-posedness}
\label{subsec:terminal wealth market-independent well-posedness}

We now turn to the market-independent terminal-wealth problem. As before, we assume $\mathcal U(0)=\mathcal R(0)=0$ with the general theory to be found in Appendix \ref{app:terminal wealth market-independent well-posedness}.

\begin{definition}
Let $\mathcal R$ be a risk functional and $\mathcal U$ be a utility function. Then
\begin{itemize}
    \item $\mathcal R$ satisfies \emph{$\SLL$ on $L$} if, for every $Y\in L$ with $\mathbb P[Y<0]>0$, there is $\lambda_Y\in(0,\infty)$
with
\begin{equation*}
\mathcal R(\lambda Y)>0
\quad\text{for all }\lambda\in(\lambda_Y,\infty).
\end{equation*}
\item $\mathcal U$ satisfies \emph{$\ASLL$ on $L$} if, for every $Y\in L$ with $\mathbb P[Y<0]>0$,
\begin{equation*}
\limsup_{\lambda\to\infty}\mathcal U(\lambda Y)<\mathcal U(\infty).
\end{equation*}
\item $\mathcal U$ satisfies \emph{$\SLL$ on $L$} if, for every $Y\in L$ with $\mathbb P[Y<0]>0$, there is $\lambda_Y\in(0,\infty)$
with
\begin{equation*}
\mathcal U(\lambda Y)<0
\quad\text{for all }\lambda\in(\lambda_Y,\infty).
\end{equation*}
\item $\mathcal U$ satisfies $\SE$ if $\mathcal U$ satisfies $\ASLL$ on $L$ if and only if it satisfies $\SLL$ on $L$.
\end{itemize}

\end{definition}

The next theorem is the market-independent terminal-wealth counterpart of Theorem~\ref{thm:global wp reparametrized characterization}.

\begin{theorem}
\label{thm:global wp original characterization normalized}
Assume that $\mathcal U$ is normalized and satisfies the upper Fatou property and $\SE$, that $\mathcal R$ is normalized, cash-convex and satisfies the lower Fatou property, and that $\mathcal U$ or $\mathcal R$ is law-invariant. Then the following statements are equivalent:
\begin{enumerate}[label=\textnormal{(\alph*)}]
    \item $\mathcal U$ satisfies $\SLL$ on $L$, or $\mathcal R$ satisfies $\SLL$ on $L$;
    \item $(\mathcal U,\mathcal R)$-terminal-wealth portfolio selection is market-independent weakly well posed;
    \item $(\mathcal U,\mathcal R)$-terminal-wealth portfolio selection is market-independent well posed.
\end{enumerate}
\end{theorem}

Theorem \ref{thm:global wp original characterization normalized} shows that market-independent (weak) well-posedness is governed by the same dichotomy as in the excess-return formulation: either the utility functional or the risk functional must be sensitive to large losses on the entire space $L$. The strength of this condition reflects the strength of the conclusion: well-posedness is required to hold for every arbitrage-free market, not only for a fixed market or a restricted model class. In particular, the failure of VaR or ES to satisfy $\SLL$ on $L$ does not prevent well-posedness in a fixed market, but it does show that these risk functionals cannot by themselves guarantee market-independent well-posedness.

In Section~\ref{sec:Examples} we will apply our theory to concrete utility-risk pairs and show how the preceding criteria apply to elliptical markets. Before that, we discuss some formulations related to the terminal-wealth problem \eqref{eq:maximize utility}.

\section{Related Problems}
\label{sec:related problems}

Our main object in this paper is $(\mathcal U,\mathcal R)$-terminal-wealth portfolio selection,
\begin{equation*}
\sup_{\bar\theta\in\mathbb R^{1+d}} \mathcal U(\bar\theta\cdot\bar S_1)
\quad\text{subject to}\quad
\begin{cases}
\bar\theta\cdot\bar S_0=w,\\
\mathcal R(\bar\theta\cdot\bar S_1)\le R_{\max}.
\end{cases}
\end{equation*}
As noted in Remark \ref{rmk:alternative problems}, one may instead apply the risk functional to the change in wealth rather than to terminal wealth itself. We first discuss this variant, and then consider the dual problem of minimizing risk subject to a utility constraint.

\subsection{Risk constraints on changes in wealth}

A natural alternative is to keep utility on terminal wealth, while applying the risk functional to the \emph{change in wealth}:
\begin{equation}
\label{eq:change in wealth risk}
\sup_{\bar\theta\in\mathbb R^{1+d}} \mathcal U(\bar\theta\cdot\bar S_1)
\quad\text{subject to}\quad
\begin{cases}
\bar\theta\cdot\bar S_0=w,\\
\mathcal R(\bar\theta\cdot\bar S_1-w)\le R_{\max}.
\end{cases}
\end{equation}
We refer to \eqref{eq:change in wealth risk} as $(\mathcal U,\mathcal R)$-change-in-wealth portfolio selection. If $\mathcal R$ is cash-additive, then this formulation is equivalent to the terminal-wealth formulation after shifting the risk threshold. Without cash-additivity, however, the choice of reference point matters, and \eqref{eq:change in wealth risk} is a genuinely different problem.

The excess-return reduction used throughout the paper applies equally to \eqref{eq:change in wealth risk}. Let $\bar S\in\mathbb S(L)$ have risk-free rate $r$ and excess-return vector $X\in\mathbb X(L)$. Then every admissible portfolio can be written as $\bar\theta\cdot\bar S_1=w(1+r)+wX_\pi$, for some $\pi\in\mathbb R^d$, and hence $\bar\theta\cdot\bar S_1-w=wr+wX_\pi$. Assume that $\mathcal R(wr)\in\mathbb R$. This is automatic when $r>0$ by positive finiteness. If $r=0$, it requires $\mathcal R(0)\in\mathbb R$, and if $r<0$, then requiring this for all $w\in(0,\infty)$ amounts to finiteness of $\mathcal R$ on all negative deterministic payoffs. Define
\begin{align*}
\mathcal U_{w,r}(Y)
&:=
\mathcal U(w(1+r)+wY)-\mathcal U(w(1+r)),\\
\mathcal R^\Delta_{w,r}(Y)
&:=
\mathcal R(wr+wY)-\mathcal R(wr).
\end{align*}
Then \eqref{eq:change in wealth risk} is equivalent, after shifting the objective and the risk threshold, to
\begin{equation*}
\sup_{\pi\in\mathbb R^d}\mathcal U_{w,r}(X_\pi)
\quad\text{subject to}\quad
\mathcal R^\Delta_{w,r}(X_\pi)\le \widetilde R_{\max}.
\end{equation*}
Thus the excess-return results of Section~\ref{sec:excess return well posedness} apply with $\mathcal R_{w,r}$ replaced by $\mathcal R^\Delta_{w,r}$. Consequently, Theorem~\ref{thm:local wp original characterization} remains valid for the change-in-wealth formulation \eqref{eq:change in wealth risk}, with $(\mathcal U,\mathcal R)$-terminal-wealth portfolio selection replaced throughout by $(\mathcal U,\mathcal R)$-change-in-wealth portfolio selection. In the market-independent setting, the same excess-return reduction and cash-convexity argument show that Theorem~\ref{thm:global wp original characterization normalized} also remains valid with the same replacement.

\subsection{Risk minimization under a utility constraint}

Another related formulation reverses the roles of the objective and the constraint. Given a target utility level $U_{\min}$, consider the $(\mathcal R,\mathcal U)$-risk-minimization problem
\begin{equation}
\label{eq:minimize risk}
\inf_{\bar\theta\in\mathbb R^{1+d}} \mathcal R(\bar\theta\cdot\bar S_1)
\quad\text{subject to}\quad
\begin{cases}
\bar\theta\cdot\bar S_0=w,\\
\mathcal U(\bar\theta\cdot\bar S_1)\ge U_{\min}.
\end{cases}
\end{equation}
This formulation can be obtained from \eqref{eq:maximize utility} by replacing the utility functional by $-\mathcal R$ and the risk functional by $-\mathcal U$. The only additional point is feasibility: if $U_{\min}$ is too high, the constraint set may be empty, which is unrelated to well-posedness. Thus one should restrict attention to targets for which the feasible set is non-empty. With this convention, the preceding theory applies directly. For example, the market-independent characterization becomes the following sign-reversed version of Theorem \ref{thm:global wp original characterization normalized}.

\begin{theorem}
\label{thm:risk minimization market independent characterization}
Assume that $\mathcal R$ is normalized and satisfies the lower Fatou property, that $-\mathcal R$ satisfies $\SE$, that $\mathcal U$ is normalized, cash-concave and satisfies the upper Fatou property, and that $\mathcal U$ or $\mathcal R$ is law-invariant. Then the following statements are equivalent:
\begin{enumerate}[label=\textnormal{(\alph*)}]
    \item $\mathcal U$ satisfies $\SLL$ on $L$, or $\mathcal R$ satisfies $\SLL$ on $L$;
    \item $(\mathcal R,\mathcal U)$-risk-minimization portfolio selection is market-independent weakly well posed;
    \item $(\mathcal R,\mathcal U)$-risk-minimization portfolio selection is market-independent well posed.
\end{enumerate}
\end{theorem}

\section{Examples and Applications}
\label{sec:Examples}

We now turn the abstract criteria into concrete tests. We first identify when expected utility and several standard risk functionals satisfy the large-loss sensitivity and regularity properties required by the market-independent theory. Combining these criteria gives a classification of market-independent well-posedness for expected utility-risk pairs.

We then consider fixed elliptical markets in cases where the market-independent criterion fails. In this setting, well-posedness is no longer determined by sensitivity on the whole space $L$, but by the interaction between the functionals and the zero-cost payoffs generated by the market. For elliptical returns, this interaction reduces to a comparison between the market's maximal Sharpe ratio and a tail-risk threshold.

\subsection{Large-loss sensitivity of expected utility and risk functionals}
\label{subsec:sensitivity criteria}

The main theorems reduce well-posedness to large-loss sensitivity, together with regularity properties such as the Fatou properties, star-shapedness, $\SE$ and law-invariance. We now verify these properties for the expected utility and risk functionals used below.

\subsubsection{Expected utility and the asymptotic loss-gain ratio}

Recall that a utility function is an increasing map $u:\mathbb R\to[-\infty,\infty)$ such that $u(c)\in\mathbb R$ for every $c>0$, $u(\infty) \coloneqq \lim_{y \to \infty}u(y)>u(z)$ for all $z \in \RR$ and $\limsup_{y \to \infty} u(y)/y < \infty$. The associated expected utility functional is $\mathcal E_u(Y)\coloneqq \mathbb E[u(Y)]$ for $ Y\in L$. This is law-invariant. If $u$ is upper semicontinuous, then we further have the upper Fatou property required for our well-posedness results.

\begin{proposition}
\label{prop:expected utility usc}
If $u$ is an upper semicontinuous utility function, then $\mathcal E_u$ satisfies the upper Fatou property.
\end{proposition}

We first handle the case in which utility is negatively infinite at zero, which applies for example to logarithmic utility
\begin{equation*}
u(y)=
\begin{cases}
\log(y), & y > 0,\\
-\infty,          & y \leq 0.
\end{cases}
\end{equation*}
In this case, sufficiently large losses force expected utility to drop to $-\infty$, so the utility functional itself rules out the relevant ill-posed behaviour.

\begin{proposition}
\label{prop:expected utility minus infinity at zero}
Let $u$ be an upper semicontinuous utility function satisfying $u(0)=-\infty$, and let $\mathcal R$ be a cash-convex risk functional satisfying the lower Fatou property. Then $(\mathcal E_u,\mathcal R)$-terminal-wealth portfolio selection is market-independent well posed.
\end{proposition}
 
We therefore restrict attention henceforth to the case $u(0)\in\mathbb R$. Since adding a constant to $u$ does not change the portfolio problem, we normalize $u(0)=0$. We next provide a sufficient condition for $\SE$.

\begin{proposition}
\label{prop:expected utility sensitivity equivalent}
If $u$ is an unbounded utility function with $u(0) = 0$ that is negatively star-shaped on $\mathbb R_+$, then $\mathcal E_u$ satisfies $\SE$.
\end{proposition}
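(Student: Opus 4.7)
The trivial direction, sensitive to large losses implies weakly sensitive to large losses, follows from the definitions and the fact that unboundedness of $u$ gives $\mathcal{E}_u(\infty)=u(\infty)=+\infty$, so that $\limsup_{\lambda\to\infty}\mathcal{E}_u(\lambda Y)\leq 0<\mathcal{E}_u(\infty)$ whenever $\mathcal{E}_u(\lambda Y)$ is eventually negative. For the converse, the plan is to introduce the \emph{asymptotic loss-gain ratio}
\begin{equation*}
\mathrm{ALG}(u)\coloneqq\limsup_{y\to\infty}\frac{u(-y)}{u(y)}\in[-\infty,0],
\end{equation*}
which is well-defined because $u$ is unbounded, and to show that weak sensitivity of $\mathcal{E}_u$ already forces $\mathrm{ALG}(u)=-\infty$, which in turn is strong enough to upgrade weak sensitivity to full sensitivity.

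For the first implication, I would apply weak sensitivity to the binary payoff $Y_p$ with $\mathbb{P}[Y_p=a]=p$ and $\mathbb{P}[Y_p=-a]=1-p$ for $a>0$ and $p\in(0,1)$; such $Y_p$ lies in $L$ since $L^\infty\subset L$ and $(\Omega,\mathcal{F},\mathbb{P})$ is atomless. This gives $\limsup_{\lambda\to\infty}\bigl(p\,u(\lambda a)+(1-p)\,u(-\lambda a)\bigr)<\infty$. Dividing through by the divergent positive quantity $u(\lambda a)$ and rearranging yields $\limsup_{\lambda\to\infty}u(-\lambda a)/u(\lambda a)\leq -p/(1-p)$, and letting $p\to 1^-$ forces $\mathrm{ALG}(u)=-\infty$.

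For the second implication, fix $Y\in L$ with $\mathbb{P}[Y<0]>0$, choose $\alpha>0$ with $q\coloneqq\mathbb{P}[Y\leq -\alpha]>0$, and split the expectation over the regions $\{Y\geq 0\}$, $\{-\alpha<Y<0\}$, and $\{Y\leq -\alpha\}$ to obtain
\begin{equation*}
\mathbb{E}[u(\lambda Y)]\leq \mathbb{E}[u(\lambda Y^+)]+q\,u(-\lambda\alpha).
\end{equation*}
Negative star-shapedness of $u$ on $\mathbb{R}_+$ implies that $y\mapsto u(y)/y$ is decreasing on $(0,\infty)$ to some $c\in[0,\infty)$, and dominated convergence (with dominating function $u(Y^+)\in L^1$, integrability coming from the growth condition on $u$) controls $\mathbb{E}[u(\lambda Y^+)]$ by $u(\lambda\beta)$ for a suitable $\beta>0$ up to sublinear errors. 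Simultaneously, $\mathrm{ALG}(u)=-\infty$ provides, for every $K>0$, the bound $u(-\lambda\alpha)\leq -K\,u(\lambda\alpha)$ for all $\lambda$ sufficiently large. Combining these estimates and choosing $K$ large enough produces $\mathbb{E}[u(\lambda Y)]\to -\infty$, which is strictly stronger than sensitivity to large losses.

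The main technical obstacle is the regime $c=0$, in which $u$ is sublinear on $\mathbb{R}_+$ and neither $\mathbb{E}[u(\lambda Y^+)]$ nor $u(-\lambda\alpha)$ is asymptotically linear in $\lambda$, so a direct linear comparison between positive and negative contributions breaks down. Here the comparison has to be carried out at the level of the ratio $u(-\lambda\alpha)/u(\lambda\beta)$ rather than at the level of $\lambda$, exploiting the \emph{multiplicative} gap supplied by $\mathrm{ALG}(u)=-\infty$, combined with a truncation $Y^+\wedge n$ whose residual tail is controlled by the integrability of $u(Y^+)$, itself a consequence of $Y\in L^1$ and $\limsup_{y\to\infty}u(y)/y<\infty$.
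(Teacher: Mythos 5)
Your overall architecture coincides with the paper's, just traversed in the contrapositive direction: both arguments pivot on $\mathrm{ALG}(u)$, and your Step 1 (weak sensitivity forces $\mathrm{ALG}(u)=-\infty$ via two-point payoffs) is, up to relabelling, the paper's construction of a binary position $Z=\mathds{1}_{A^c}-\mathds{1}_A$ with $\mathbb{P}[A]$ small, which shows that $\mathrm{ALG}(u)>-\infty$ destroys weak sensitivity. One caveat there: your claim that unboundedness of $u$ gives $u(\infty)=+\infty$ is false as stated ($u$ could be unbounded below but bounded above, e.g.\ $u(y)=1-e^{-y}$ for $y\ge 0$ and $u(y)=y$ for $y<0$), so the division by the ``divergent'' $u(\lambda a)$ needs the observation, made explicitly in the paper, that $\mathrm{ALG}(u)>-\infty$ together with unboundedness forces $u$ to be unbounded \emph{above}; in the complementary case where $u$ is bounded above, $\mathrm{ALG}(u)=-\infty$ holds automatically and Step 1 is vacuous.

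The genuine gap is Step 2, namely $\mathrm{ALG}(u)=-\infty\Rightarrow\mathcal{E}_u$ sensitive to large losses. This is exactly \citet[Theorem 5.1]{HKM2024}, which the paper cites rather than proves; you attempt to reprove it and, by your own admission, do not close the sublinear regime $c=\lim_{y\to\infty}u(y)/y=0$. The truncation you sketch does not repair it: it yields $\mathbb{E}[u(\lambda Y^+)]\le u(\lambda n)+\lambda\epsilon_n$ with $\epsilon_n=\mathbb{E}[u(Y^+)\mathds{1}_{\{Y^+>n\}}]$, and for each fixed $n$ the error $\lambda\epsilon_n$ grows linearly while the compensating term $u(\lambda\alpha)$ is sublinear, so the bound eventually becomes useless. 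The fix is simpler than a truncation: negative star-shapedness gives the pointwise bound $u(\lambda y)\le\bigl(1+y/\alpha\bigr)u(\lambda\alpha)$ for all $y\ge 0$ (treat $y\le\alpha$ and $y>\alpha$ separately, using $u(\lambda\alpha)\ge 0$), hence
\begin{equation*}
\mathbb{E}[u(\lambda Y)]\;\le\;\Bigl(1+\tfrac{\mathbb{E}[Y^+]}{\alpha}\Bigr)u(\lambda\alpha)+q\,u(-\lambda\alpha)\;\le\;\Bigl(1+\tfrac{\mathbb{E}[Y^+]}{\alpha}-qK\Bigr)u(\lambda\alpha)<0
\end{equation*}
for large $\lambda$ once $qK>1+\mathbb{E}[Y^+]/\alpha$, uniformly in $c$. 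As written, your proof of the key implication is incomplete, whereas the paper's is complete precisely because it outsources this step to the cited reference and only proves directly the binary-payoff implication that you handle in Step 1.
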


In particular, the proposition applies when $u$ is unbounded and concave on $\mathbb R_+$. Bounded expected utility functionals need not satisfy $\SE$; cf.\ Example \ref{example:sensitivity equivalence}.

It remains to determine when $\mathcal E_u$ satisfies $\SLL$. To this end, we define the \emph{asymptotic loss-gain ratio} of $u$ by 
\begin{equation*}
\mathrm{ALG}(u) := \limsup_{y\to\infty}
\frac{|u(-y)|}{u(y)}
\in[0,\infty].
\end{equation*}
This quantity compares the penalty assigned to increasingly large losses with the utility derived from corresponding gains. By \citet[\emph{Theorem~5.1}]{HKM2024}, if $u$ is negatively star-shaped on $\mathbb R_+$ and $u(0) = 0$, then $\mathcal E_u\text{ satisfies $\SLL$ on }L$ if and only if 
\begin{equation}
\label{eq:expected utility SLL ALG}
\mathcal E_u\text{ satisfies $\SLL$ on }L
\quad\Longleftrightarrow\quad
\mathrm{ALG}(u)=\infty.
\end{equation}
Thus, expected utility is sensitive to large losses precisely when the disutility of large losses asymptotically dominates the utility of corresponding gains.

\begin{remark}
Several standard utility functions illustrate the criterion \eqref{eq:expected utility SLL ALG}. Indeed, $\mathrm{ALG}(u)=\infty$ for exponential and for power utility.
For the $S$-shaped utility 
\begin{equation}
\label{eq:S-shaped utility}
u(y)=
\begin{cases}
y^a,    & y\ge0,\\
-(-y)^b, & y<0,
\end{cases}
\qquad 
0<a,b\le1,
\end{equation}
$\mathrm{ALG}(u) = \infty$ if and only if $a < b$. The regime $b \leq a$, including mean utility at $a=b=1$, provides the main class studied in the fixed-market application below.
\end{remark}

\begin{remark}
Analogous sensitivity criteria are available for other classes of utility functionals, including classical and Bühlmann $u$-mean certainty equivalents; see \citet[\emph{Section~5.2}]{HKM2024}. However, to maintain clarity and highlight the central message of the paper, we focus on expected utility due to its foundational role in decision theory.
\end{remark}

\subsubsection{Risk functionals}
\label{subsec:risk functional examples}

We next record the corresponding sensitivity and regularity properties for several widely used risk functionals. Ordinary $\VaR^\alpha$ and $\ES^\alpha$ are cash-convex and satisfy the lower Fatou property, but neither satisfies $\SLL$ on $L$. We first consider two tail-sensitive modifications of these functionals.

\paragraph{Loss Value at Risk and adjusted Expected Shortfall.}

Let $\beta:(-\infty,0]\to[0,1)$ be increasing. The \emph{Loss Value at Risk} associated with $\beta$ is defined by
\begin{equation*}
\LVaR^\beta(Y)
:=
\sup_{x\in(-\infty,0]}
\left\{\VaR^{\beta(x)}(Y)+x\right\},
\quad Y\in L.
\end{equation*}
It satisfies cash-convexity and the lower Fatou property. Moreover, by \citet[\emph{Proposition~4.1}]{HKM2024}, $\LVaR^\beta$ satisfies $\SLL$ on $L$ if and only if $\inf_{x\in(-\infty,0]}\beta(x)=0$. Constant choices $\beta\equiv\alpha\in(0,1)$ recover $\LVaR^\beta=\VaR^\alpha$, and hence fail this condition.

Similarly, let $g:(0,1]\to[0,\infty]$ be decreasing with $g(1)=0$. The
\emph{adjusted Expected Shortfall} associated with $g$ is
\begin{equation*}
\ES^g(Y)
:=
\sup_{\alpha\in(0,1]}
\left\{\ES^\alpha(Y)-g(\alpha)\right\},
\quad Y\in L.
\end{equation*}
It also satisfies cash-convexity and the lower Fatou property. By \citet[\emph{Proposition~4.2}]{HKM2024}, $\ES^g$ satisfies $\SLL$ on $L$ if and only if $g(\alpha)<\infty$ for every $\alpha\in(0,1]$. Ordinary $\ES^\alpha$ is recovered by taking $g=\infty$ on $(0,\alpha)$ and $g=0$ on $[\alpha,1]$, and hence fails $\SLL$ on $L$.

\paragraph{Risk functionals generated by loss functions.}

A function $\ell:\mathbb R\to(-\infty,\infty]$ is called a \emph{loss function} if it is increasing, $\ell(0)=0$, and $\limsup_{y\to-\infty} \ell(y)/y < \infty$. In analogy with the utility-side definition, set
\begin{equation*}
\ALG(\ell)
:=
\limsup_{y\to\infty}
\frac{\ell(y)}{|\ell(-y)|}
\in[0,\infty].
\end{equation*} 
Thus, if $\ell(y)=-u(-y)$, then $\ALG(\ell)=\ALG(u)$.

\medskip
The \emph{expected weighted loss} associated with $\ell$ is
\begin{equation*}
\EW^\ell(Y):=\mathbb E[\ell(-Y)],
\quad Y\in L.
\end{equation*}
It satisfies the lower Fatou property when $\ell$ is lower semicontinuous and is convex, hence cash-convex, when $\ell$ is convex. If $\ell$ is positively star-shaped on $\mathbb R_-$, then
\citet[\emph{Theorem 5.1}]{HKM2024} gives $\EW^\ell$ satisfies $\SLL$ on $L$ if and only if $\ALG(\ell)=\infty$.

The \emph{shortfall risk measure} associated with $\ell$ is defined by
\begin{equation*}
\SR^\ell(Y)
:=
\inf\left\{
m\in\mathbb R:
\mathbb E[\ell(-Y-m)]\le0
\right\}.
\end{equation*}
It satisfies the lower Fatou property when $\ell$ is lower semicontinuous, and cash-convexity when $\ell$ is positively star-shaped. If, also, $\ell(y)>0$ on $\mathbb R_+$ and $\ell$ is positively star-shaped on $\mathbb R_-$, then \citet[\emph{Proposition~4.3}]{HKM2024} yields $\SR^\ell$ satisfies $\SLL$ on $L$ if and only if $\ALG(\ell)=\infty$.

Finally, let $\ell$ be a real-valued convex loss function satisfying
$\ell(y)\ge y$ for every $y\in\mathbb R$. On its natural Orlicz heart, the
associated \emph{optimized certainty equivalent risk measure} is
\begin{equation*}
\OCE^\ell(Y)
:=
\inf_{\eta\in\mathbb R}
\left\{
\mathbb E[\ell(\eta-Y)]-\eta
\right\}.
\end{equation*}
It is a cash-convex risk functional satisfying the lower Fatou property.
Moreover, by \citet[\emph{Theorem~5.7}]{HKM2024},
$\OCE^\ell$ satisfies $\SLL$ if and only if
\begin{equation}
\label{eq:OCE SLL criterion}
\liminf_{y\to\infty}\frac{\ell(y)}{y}=\infty
\qquad\text{and}\qquad
\limsup_{y\to-\infty}\frac{\ell(y)}{y}=0.
\end{equation}

\subsection{Market-independent implications}

We now combine the preceding sensitivity criteria with Theorem \ref{thm:global wp original characterization normalized}. There are two cases. If $u(0)=-\infty$, Proposition \ref{prop:expected utility minus infinity at zero} already gives market-independent well-posedness for every cash-convex risk functional satisfying the lower Fatou property. If $u(0)\in\mathbb R$, we normalize $u(0)=0$ and use the asymptotic loss-gain ratio to decide whether expected utility itself satisfies $\SLL$.

\begin{corollary}
\label{cor:expected utility market-independent characterization}
Let $u$ be an unbounded, upper semicontinuous utility function with $u(0)=0$ that is negatively star-shaped on $\mathbb R_+$. Let $\mathcal R$ be a normalized, cash-convex risk functional satisfying the lower Fatou property. Then the following statements are equivalent:
\begin{enumerate}[label=\textnormal{(\alph*)}]
    \item $\ALG(u)=\infty$, or $\mathcal R$ satisfies $\SLL$ on $L$;
    \item $(\mathcal E_u,\mathcal R)$-terminal-wealth portfolio selection is
    market-independent weakly well posed;
    \item $(\mathcal E_u,\mathcal R)$-terminal-wealth portfolio selection is
    market-independent well posed.
\end{enumerate}
\end{corollary}

This corollary gives a simple either-or criterion. If $\ALG(u)=\infty$, expected utility is sufficiently loss-sensitive, and the choice of risk functional is irrelevant for market-independent well-posedness, provided the required regularity assumptions hold. If $\ALG(u)<\infty$, expected utility does not control large leveraged losses, and market-independent well-posedness is equivalent to $\SLL$ of the risk functional. Table \ref{tab:global_well_posedness} summarizes the resulting classification for the risk functionals discussed above, under the regularity assumptions stated in Section \ref{subsec:sensitivity criteria}.

\begin{table}[H]
\centering
\small
\setlength{\tabcolsep}{4pt}
\renewcommand{\arraystretch}{1.25}
\begin{tabular}{@{}lp{7.4cm}cc@{}}
\toprule
\textbf{Risk functional}
&
\textbf{Risk-side condition}
&
$\mathrm{ALG}(u)=\infty$
&
$\mathrm{ALG}(u)<\infty$
\\
\midrule

$0$
& Does not satisfy $\SLL$
& \cmark & \xmark
\\

\midrule

$\mathrm{VaR}^{\alpha},\ \mathrm{ES}^{\alpha}$
& Do not satisfy $\SLL$
& \cmark & \xmark
\\

\midrule

$\mathrm{LVaR}^{\beta}$
& Satisfies $\SLL$ if
$\inf_{x\in(-\infty,0]}\beta(x)=0$
& \cmark & \cmark
\\
&
Does not satisfy $\SLL$ otherwise
& \cmark & \xmark
\\

\midrule

$\mathrm{ES}^{g}$
& Satisfies $\SLL$ if $g$ is finite everywhere
& \cmark & \cmark
\\
&
Does not satisfy $\SLL$ otherwise
& \cmark & \xmark
\\

\midrule

$\mathrm{EW}^{\ell}$
& Satisfies $\SLL$ if $\mathrm{ALG}(\ell)=\infty$
& \cmark & \cmark
\\
&
Does not satisfy $\SLL$ if $\mathrm{ALG}(\ell)<\infty$
& \cmark & \xmark
\\

\midrule

$\mathrm{SR}^{\ell}$
& Satisfies $\SLL$ if $\mathrm{ALG}(\ell)=\infty$
& \cmark & \cmark
\\
&
Does not satisfy $\SLL$ if $\mathrm{ALG}(\ell)<\infty$
& \cmark & \xmark
\\

\midrule

$\mathrm{OCE}^{\ell}$
& Satisfies $\SLL$ if \eqref{eq:OCE SLL criterion} holds
& \cmark & \cmark
\\
&
Does not satisfy $\SLL$ otherwise
& \cmark & \xmark
\\

\bottomrule
\end{tabular}
\caption{Market-independent well-posedness of
$(\mathcal E_u,\mathcal R)$-terminal-wealth portfolio selection. A checkmark indicates market-independent well-posedness, while a cross indicates that even market-independent weak well-posedness fails.}
\label{tab:global_well_posedness}
\end{table}


\subsection{Fixed-market implications in elliptical markets}
\label{subsec:elliptical implications}

We now focus on cases where the market-independent criterion fails. Our main example is the $S$-shaped family \eqref{eq:S-shaped utility} with $0<b\leq a \leq 1$, combined with VaR or ES. In this regime, neither the expected utility functional nor the risk functional satisfies $\SLL$ on $L$. Nevertheless, the fixed-market problem may still be well posed, because only the zero-cost payoffs generated by the market are relevant. For clarity, we first state the results for Gaussian markets. The argument uses only the location-scale structure of portfolio returns and therefore extends to full-support elliptical models; see Remark \ref{rmk:Gaussian to elliptical}. So let $X\sim N_d(\mu,\Sigma)$, where $\mu\in\mathbb R^d$ and $\Sigma\in\mathbb R^{d\times d}$ is positive definite, and let $\bar S$ be the corresponding market. For every $\pi\in\mathbb R^d$,
\begin{equation*}
    X_\pi\sim N(\mu_\pi,\sigma_\pi^2),
    \qquad
    \mu_\pi:=\pi^\top\mu,
    \qquad
    \sigma_\pi:=\sqrt{\pi^\top\Sigma\pi}.
\end{equation*}
The maximal Sharpe ratio is
$
\SR_{\max}(X)
:=
\sup_{\pi\ne\mathbf 0}\frac{\pi^\top\mu}
{\sqrt{\pi^\top\Sigma\pi}}
=
\sqrt{\mu^\top\Sigma^{-1}\mu}$.

\begin{proposition}
\label{prop:Gaussian fixed market risk SLL}
Let $\mathcal R$ be a law-invariant, cash-additive, positively homogeneous risk functional satisfying the lower Fatou property. Let $Z\sim N(0,1)$ and suppose $\kappa_{\mathcal R} := \mathcal R(Z)/|\mathcal R(1)| \in\mathbb R$. Then $\mathcal R$ satisfies $\SLL$ on $\bar S$ if and only if $\SR_{\max}(X)<\kappa_{\mathcal R}$.
\end{proposition}

Combined with Theorem \ref{thm:local wp original characterization}, Proposition \ref{prop:Gaussian fixed market risk SLL} gives a utility-independent sufficient condition for fixed-market well-posedness. Whenever $\SR_{\max}(X)<\kappa_{\mathcal R}$, the risk functional satisfies $\SLL$ on $\bar S$, and hence $(\mathcal U,\mathcal R)$-terminal-wealth portfolio selection is well posed for every utility functional satisfying the required regularity assumptions.

For $S$-shaped utilities with $0<b \leq a \leq 1$, this condition is also necessary. Thus,  the problem is well posed when the maximal Sharpe ratio lies below the standardized risk threshold, whereas it is not even weakly well posed once the threshold is reached.

\begin{theorem}
\label{thm:Gaussian fixed market characterization}
Let $\mathcal R$ be a law-invariant, cash-additive, positively homogeneous risk functional satisfying the lower Fatou property. Let $Z \sim N(0,1)$ and suppose $\kappa_\mathcal{R}:=\mathcal{R}(Z)/|\mathcal{R}(1)| \in (0,\infty)$. Let $u$ be the $S$-shaped utility with $0< b \leq a \leq 1$. Then the following statements are equivalent:
\begin{enumerate}[label=\textnormal{(\alph*)}]
    \item $\SR_{\max}(X)<\kappa_{\mathcal R}$;
    \item $(\mathcal E_u,\mathcal R)$-terminal-wealth portfolio selection is
    weakly well posed for $\bar S$;
    \item $(\mathcal E_u,\mathcal R)$-terminal-wealth portfolio selection is
    well posed for $\bar S$.
\end{enumerate}
\end{theorem}

\begin{remark}
The condition $\kappa_{\mathcal R}>0$ is mild. In particular, if $\mathcal R$ is a law-invariant coherent risk measure that does not coincide with expected loss, then \citet[\emph{Corollary~5.1}]{follmer2013convex} show that $\mathcal R(Y)>\mathbb E[-Y]$ for every nonconstant $Y$, and therefore $\kappa_{\mathcal R} = \mathcal R(Z)/1> 0$.
\end{remark}

Theorem \ref{thm:Gaussian fixed market characterization} gives a directly implementable diagnostic. Given estimates of $\mu$ and $\Sigma$, one computes $\SR_{\max}(X)$ and compares it with $\kappa_{\mathcal R}$. For these preferences, this comparison determines well-posedness without solving the optimization problem. Specializing to VaR and ES gives explicit thresholds.

\begin{corollary}
\label{cor:Gaussian VaR ES Sharpe thresholds}
Let $u$ be an $S$-shaped utility with $0<b \leq a \leq 1$, and let $\Phi$ and $\varphi$ denote the standard normal distribution function and density. Then:
\begin{enumerate}[label=\textnormal{(\alph*)}]
    \item For $\alpha \in(0,1/2)$ and $\mathcal R=\VaR^\alpha$, $(\mathcal{E}_u, \mathcal{R})$-terminal-wealth
    portfolio selection is (weakly) well posed for $\bar S$ if and only if $\SR_{\max}(X)<-\Phi^{-1}(\alpha).$

    \item For $\alpha\in(0,1)$ and $\mathcal R=\ES^\alpha$, $(\mathcal{E}_u, \mathcal{R})$-terminal-wealth portfolio selection is (weakly) well posed for $\bar S$ if and only if $\SR_{\max}(X) < \varphi(\Phi^{-1}(\alpha))/\alpha$.
\end{enumerate}
\end{corollary}

\begin{remark}
\label{rmk:Gaussian to elliptical}
The Gaussian assumption is not essential. Suppose that $X\sim\mathcal E_d(\mu,\Sigma,\psi)$ has full support and finite second moments, with $\Sigma=\operatorname{Cov}(X)$ positive definite. Then there exists a symmetric random variable $Z_\psi$ with mean zero and variance one such that, for every $\pi \in \mathbb{R}^d$,
\begin{equation*}
X_\pi\overset{d}{=}\mu_\pi+\sigma_\pi Z_\psi,
\qquad
\mu_\pi=\pi^\top\mu,
\qquad
\sigma_\pi=\sqrt{\pi^\top\Sigma\pi}.
\end{equation*}
The distribution of $Z_\psi$ is independent of $\pi$. Hence, defining
$\kappa_{\mathcal R,\psi}
:=
\frac{\mathcal R(Z_\psi)}{|\mathcal R(1)|}$,
the preceding proofs apply with $\kappa_{\mathcal R}$ replaced by $\kappa_{\mathcal R,\psi}$. Thus well-posedness is characterized by
\begin{equation*}
\SR_{\max}(X)
=
\sqrt{\mu^\top\Sigma^{-1}\mu}
<
\kappa_{\mathcal R,\psi}.
\end{equation*}
For VaR and ES, the corresponding thresholds are $\kappa_{\VaR^\alpha,\psi}=\VaR^\alpha(Z_\psi)$ and $\kappa_{\ES^\alpha,\psi}=\ES^\alpha(Z_\psi)$. This includes multivariate Student models with degrees of freedom greater than two, where $Z_\psi$ is the corresponding univariate Student random variable with mean zero and variance one.
\end{remark}

\section{Conclusion and Outlook}
\label{sec:conclusion}

This paper studies when utility-risk terminal-wealth portfolio selection is well posed in one-period markets. The starting point is that even in bounded arbitrage-free markets with strictly positive terminal prices, a utility objective may reward leveraged positions whose downside is not sufficiently controlled by the chosen risk constraint. 

Our first contribution is a fixed-market theory. For a given market, the relevant objects are the zero-cost payoffs. We show that local large-loss conditions on the utility and risk functionals are sufficient for well-posedness, and become necessary under an alignment condition; see Theorem \ref{thm:local wp original characterization}. Thus fixed-market well-posedness depends not only on the utility-risk pair, but also on the geometry and distribution of the attainable payoff space.

Our second contribution is a market-independent theory. Under mild regularity assumptions, market-independent weak well-posedness and market-independent well-posedness are equivalent, and both are characterized by a simple either-or condition: either the utility functional or the risk functional must be sensitive to large losses; see Theorem \ref{thm:global wp original characterization normalized}. In this sense, utility and risk share responsibility for controlling large leveraged losses. If one functional penalizes such losses sufficiently strongly, it can compensate for the weakness of the other; if neither does, ill-posed sequences may exist in some arbitrage-free market.

For expected utility, the criterion becomes particularly transparent. The relevant quantity is the asymptotic loss-gain ratio $\ALG(u)$. For a broad class of utility functions, expected utility is sensitive to large losses exactly when $\ALG(u)=\infty$. Hence market-independent well-posedness of $(\mathcal E_u,\mathcal R)$-terminal-wealth portfolio selection reduces to the condition that either $\ALG(u)=\infty$ or $\mathcal R$ is sensitive to large losses. This gives a direct classification of many standard examples, including concave utilities, $S$-shaped utilities, VaR, ES, entropic risk and related tail-sensitive risk functionals; see Table \ref{tab:global_well_posedness}.

The fixed-market results are essential when the market-independent criterion fails. Ordinary VaR and ES do not satisfy sensitivity to large losses on the whole space, so they cannot guarantee market-independent well-posedness when expected utility is not itself sufficiently loss-sensitive. This does not make VaR or ES unsuitable; rather, it means that their effectiveness depends on the market model. For elliptical returns, we show that this dependence can be expressed through the maximal Sharpe ratio. In Gaussian markets, the fixed-market well-posedness of VaR- and ES-constrained problems is determined by explicit Sharpe-ratio thresholds; see Corollary \ref{cor:Gaussian VaR ES Sharpe thresholds}.

Several questions remain open. One direction is computational. Much of the numerical literature on risk-constrained portfolio selection imposes no-short-selling or other compactness restrictions; see, e.g., \citet{adam2008spectral} and \citet{geissel2022portfolio}. Our results make it possible to revisit such empirical and parametric problems without imposing these restrictions. A second direction is to develop sharper fixed-market characterizations beyond elliptical distributions, including dual or martingale-measure criteria in the spirit of \citet{herdegen2020dual,herdegen2025rho}. A third is to extend the present theory to robust utility and model uncertainty, where the utility objective itself is evaluated across a family of probability measures. Finally, it would be natural to ask which parts of the one-period theory admit multi-period or continuous-time analogues.

\appendix
\small
\section{Counterexamples}
\label{app:counterexamples}

\begin{example}
\label{exa:no (U,R) arbitrage, but portfolio optimization ill posed}
In this example we construct a utility functional $\mathcal{U}$ that satisfies the upper Fatou property, a risk functional $\mathcal{R}$ that satisfies the lower Fatou property and a market $\bar{S} \in \mathbb{S}(L)$ such that $(\mathcal{U},\mathcal{R})$-terminal-wealth portfolio selection is weakly well posed for $\bar{S}$, but not well posed.

\medskip
Let $L = L^1$. Define the utility functional $\mathcal{U} : L \to \mathbb{R}$ by
\begin{equation*}
    \mathcal{U}(Y)=\begin{cases}
        \essinf Y,&\text{if } Y \geq 0 \ \mathbb{P}\text{-a.s.,} \\ 
        f(\mathbb{E}[Y]),&\text{otherwise}, 
    \end{cases} 
\end{equation*}
where $f:\mathbb{R} \to \mathbb{R}$ is an increasing upper semicontinuous function with $f(a) > 0$ for some $a > 0$, and for all $y \in \RR$, $f(y) \leq y$ and $f(y) < f(\infty)\coloneqq\lim_{z \to \infty} f(z) < \infty$. For example, one can take $f(y)=1-\exp(-y)$ for $y \in \RR$. Let $\mathcal{R} \equiv 0$. Note that $\mathcal{U}$ satisfies the upper Fatou property, $\mathcal{U}(\infty) = \infty$ and $\mathcal{R}$ satisfies the lower Fatou property. Let $\bar{S} = (S^0_t, S^1_t)_{t \in \{0,1\}}$ where $S^0_0 = S^0_1 = 1$, $S^1_0 = 1$ and $S^1_1 = 1+X^1$. Assume $X^1 \in L$ is unbounded from below and $\mathbb{E}[X^1] > 0$ so $\bar{S} \in \mathbb{S}(L)$. 

Choose $w\in(0,f(\infty))$ and $R_{\max}\in[0,\infty)$. The optimization problem \eqref{eq:maximize utility} is then equivalent to
\begin{equation*}
\sup_{\vartheta^1\in\mathbb R} U(w+\vartheta^1X^1).
\end{equation*}
For $\vartheta^1>0$, the random payoff $w+\vartheta^1X^1$ is negative with positive probability, and hence
\begin{equation*}
U(w+\vartheta^1X^1)=f(w+\vartheta^1\mathbb E[X^1]) \uparrow f(\infty)
\qquad\text{as }\vartheta^1\to\infty.
\end{equation*}
Since $f(y)<f(\infty)$ for every $y\in\mathbb R$, this value is not attained. Moreover, the riskless portfolio gives utility $U(w)=w<f(\infty)$, while positions with $\vartheta^1<0$ do not yield values above $f(\infty)$. Thus the supremum is $f(\infty)$, but it is not attained. Therefore $(\mathcal U,\mathcal R)$-terminal-wealth portfolio selection is not well posed for the market $\bar S$. It is nevertheless weakly well posed for $\bar S$, since the supremum is finite and strictly smaller than $\mathcal U(\infty)=\infty$.
\end{example}

\begin{example}
\label{exa:local weak well posedness}
The converse of Proposition \ref{prop:local SLL implies no (U,R) arbitrage} fails in general, even for law-invariant utility and risk functionals on an atomless probability space.

\medskip
Let $(\Omega,\mathcal F,\mathbb P)=([0,1],\mathcal B([0,1]),\mathrm{Leb})$ and consider the one-dimensional market $X=(X^1)$ given by
\begin{equation*}
X^1(\omega)=
\begin{cases}
2, & \omega\in[0,0.4],\\
-1, & \omega\in(0.4,1].
\end{cases}
\end{equation*}
Define
\begin{equation*}
\mathcal U(Y):=\mathbb E[Y],
\qquad
\mathcal R(Y):=\VaR^{0.5}(Y),
\qquad Y\in L^1.
\end{equation*}
Then both $\mathcal U$ and $\mathcal R$ are law-invariant. Also $\mathcal R$ is not sensitive to large losses on $X$: if $\pi<0$, then $\pi X^1<0$ only on a set of probability $0.4<0.5$, and therefore
\begin{equation*}
\VaR^{0.5}(\lambda X_\pi)\le 0,
\quad\text{for all }\lambda>0.
\end{equation*}
And $\mathcal U$ is not robustly weakly sensitive to large losses on $X$: taking $\pi=1$ and $Y\equiv 1$, we have
\begin{equation*}
\mathcal U\bigl(\lambda(X_\pi+\varepsilon Y)\bigr)
=
\lambda(0.2+\varepsilon)\to\infty
=
\mathcal U(\infty),
\quad\text{for every }\varepsilon>0.
\end{equation*}

Nevertheless, $(\mathcal U,\mathcal R)$-portfolio selection is weakly well posed for $X$. Indeed, if $\pi>0$, then $\VaR^{0.5}(\pi X^1)=\pi$, so the constraint $\VaR^{0.5}(\pi X^1)\le \widetilde R_{\max}$ implies $\pi\le \widetilde R_{\max}$. If $\pi<0$, then $\mathbb E[\pi X^1]=0.2\pi<0$. Hence
\begin{equation*}
\sup_{\pi:\,\VaR^{0.5}(\pi X^1)\le \widetilde R_{\max}} \mathbb E[\pi X^1]
=
0.2\,\widetilde R_{\max}
<
\infty
=
\mathcal U(\infty).
\end{equation*}
\end{example}

\begin{example}
\label{example:sensitivity equivalence}
In this example we show that the expected utility functional $\mathcal{E}_u$ may not be sensitivity equivalent if the utility function $u$ is bounded.

\medskip
Let $u:\mathbb{R} \to \mathbb{R}$ be defined by
\begin{equation*}
    u(y) = \begin{cases}
        1-\e^{-y},&\text{if } y \geq 0, \\
        \e^y-1,&\text{if } y < 0.
    \end{cases}
\end{equation*}
Then $\mathcal{E}_u$ is asymptotically sensitive to large losses. Indeed, if $Y \in L$ and $\mathbb{P}[Y < 0] > 0$, then 
\begin{equation*}
    \lim_{\lambda \to \infty}\mathcal{E}_u(\lambda Y) = 
    \mathbb{P}[Y > 0] - \mathbb{P}[Y < 0] < 1 = \mathcal{E}_u(\infty).
\end{equation*}
However, $\mathcal{E}_u$ is not sensitive to large losses since the limit can remain positive. For instance, if the gain region dominates the loss region, that is $\mathbb{P}[Y > 0] > \mathbb{P}[Y < 0]$, then $\lim_{\lambda \to \infty}\mathcal{E}_u(\lambda Y) > 0$.
\end{example}

\section{General Theory for Terminal-Wealth Portfolio Selection}
\label{app:terminal-wealth portfolio selection}

\subsection{Fixed-market well-posedness}
\label{app:subsec:terminal wealth Fixed-market well-posedness}

We derive here the general fixed-market terminal-wealth theory, without the simplifying assumptions imposed in Section \ref{subsec:terminal wealth Fixed-market well-posedness}. Let $\bar S\in\mathbb S(L)$ be a fixed market with risk-free rate $r\in(-1,\infty)$ and corresponding excess-return vector $X\in\mathbb X(L)$. By Lemma \ref{lem:local equivalence reparametrization}, the relevant terminal-wealth conditions are obtained by translating the fixed-market excess-return notions from Section \ref{subsec:excess return well posedness:fixed-market} back to terminal wealth. Since, in general, the utility and risk functionals need not be normalized or cash-additive, the corresponding notions must be formulated relative to a strictly positive cash baseline $c>0$.

\begin{definition}
Let $c>0$, $\mathcal R$ a risk functional and  $\mathcal U$ a utility functional. Then
\begin{itemize}
\item $\mathcal R$ satisfies \emph{cash $\SLL$ on $\bar S$ at level $c$} if, for every zero-cost portfolio $\bar\eta\in\mathbb R^{1+d}\setminus\{\mathbf 0\}$ with $\bar\eta\cdot\bar S_0=0$, there exists $\lambda_{\bar\eta,c}\in(0,\infty)$ such that
\begin{equation*}
\mathcal R\bigl(c+\lambda\,\bar\eta\cdot \bar S_1\bigr)>\mathcal R(c)
\quad\text{for all }\lambda\in(\lambda_{\bar\eta,c},\infty).
\end{equation*}
\item $\mathcal U$ satisfies \emph{robust cash $\ASLL$ on $\bar S$ at level $c$} if, for every zero-cost portfolio $\bar\eta\in\mathbb R^{1+d}\setminus\{\mathbf 0\}$ with $\bar\eta\cdot\bar S_0=0$ and every $Y\in L$, there exists $\varepsilon_{\bar\eta,Y,c}>0$ such that
\begin{equation*}
\limsup_{\lambda\to\infty}
\mathcal U\bigl(c+\lambda(\bar\eta\cdot \bar S_1+\varepsilon Y)\bigr)
<
\mathcal U(\infty)
\quad\text{for all }\varepsilon\in(0,\varepsilon_{\bar\eta,Y,c}).
\end{equation*}
\item $\mathcal U$ satisfies \emph{robust cash $\SLL$ on $\bar S$ at level $c$} if, for every zero-cost portfolio $\bar\eta\in\mathbb R^{1+d}\setminus\{\mathbf 0\}$ with $\bar\eta\cdot\bar S_0=0$ and every $Y\in L$, there exist $\varepsilon_{\bar\eta,Y,c}>0$ and $\lambda_{\bar\eta,Y,c}\in(0,\infty)$ such that
\begin{equation*}
\mathcal U\bigl(c+\lambda(\bar\eta\cdot \bar S_1+\varepsilon Y)\bigr)
<
\mathcal U(c)
\end{equation*}
for all $\lambda\in(\lambda_{\bar\eta,Y,c},\infty)$ and $\varepsilon\in(0,\varepsilon_{\bar\eta,Y,c})$.
\item $\mathcal R$ is \emph{cash aligned with $\mathcal U$ on $\bar S$ at level $c$} if, for every zero-cost portfolio $\bar\eta\in\mathbb R^{1+d}\setminus\{\mathbf 0\}$ with $\bar\eta\cdot\bar S_0=0$,
\begin{equation*}
\mathcal R\bigl(c+\lambda\,\bar\eta\cdot \bar S_1\bigr)\le \mathcal R(c)
\ \text{for all }\lambda>0
\implies
\lim_{\lambda\to\infty}\mathcal U\bigl(c+\lambda\,\bar\eta\cdot \bar S_1\bigr)
=
\mathcal U(\infty).
\end{equation*}
\item $\mathcal R$ is \emph{cash aligned with $\mathcal U$ on $\bar S$} if it is cash aligned with $\mathcal U$ on $\bar S$ at every level $c>0$.
\end{itemize}
\end{definition}

The next proposition shows that these are exactly the fixed-market terminal-wealth counterparts of the notions introduced in Section \ref{subsec:excess return well posedness:fixed-market}.

\begin{proposition}
\label{app:prop:local properties transfer back}
Let $c>0$, and set $w:=c/(1+r)$. Then the following hold:
\begin{enumerate}[label=\textnormal{(\alph*)}]
\item $\mathcal R$ satisfies cash $\SLL$ on $\bar S$ at level $c$ if and only if $\mathcal R_{w,r}$ satisfies $\SLL$ on $X$.
\item $\mathcal U$ satisfies robust cash $\ASLL$ on $\bar S$ at level $c$ if and only if $\mathcal U_{w,r}$ satisfies robust $\ASLL$ on $X$.
\item $\mathcal U$ satisfies robust cash $\SLL$ on $\bar S$ at level $c$ if and only if $\mathcal U_{w,r}$ satisfies robust $\SLL$ on $X$.
\item $\mathcal R$ is cash aligned with $\mathcal U$ on $\bar S$ at level $c$ if and only if $\mathcal R_{w,r}$ is aligned with $\mathcal U_{w,r}$ on $X$.
\end{enumerate}
\end{proposition}

\begin{proof}
Since $\bar S$ is normalized, every portfolio $\bar\eta=(\eta^0,\eta^1,\ldots,\eta^d)\in\mathbb R^{1+d}$ with $\bar\eta\cdot \bar S_0=0$ satisfies $\eta^0=-\sum_{i=1}^d\eta^i$. Hence, if we set
$\pi:=(\eta^1,\ldots,\eta^d)\in\mathbb R^d$, then
\begin{equation*}
\bar\eta\cdot \bar S_1
=
\eta^0S_1^0+\sum_{i=1}^d \eta^i S_1^i
=
\sum_{i=1}^d \eta^i(S_1^i-S_1^0)
=
X_\pi.
\end{equation*}
Conversely, every $\pi\in\mathbb R^d$ determines a zero-cost portfolio $\bar\eta^\pi:=(-\sum_{i=1}^d \pi^i,\pi^1,\ldots,\pi^d)$ such that $\bar\eta^\pi\cdot \bar S_1=X_\pi$. Since the market is non-redundant, $\bar\eta\neq 0$ if and only if $\pi\neq 0$. Finally, the relation $c=w(1+r)$ determines $w\in(0,\infty)$ uniquely. 

(a) For $\bar\eta\cdot \bar S_1=X_\pi$ and $c=w(1+r)$, $\mathcal R_{w,r}(\lambda X_\pi) = \mathcal R(c+w\lambda\,\bar\eta\cdot \bar S_1)-\mathcal R(c)$. Since $w\lambda$ ranges over $(0,\infty)$ as $\lambda$ ranges over $(0,\infty)$, the defining inequalities are equivalent.

(b) Let $\pi\in\mathbb R^d\setminus\{0\}$, $Y\in L$, and set $\bar\eta:=\bar\eta^\pi$. Then
\begin{equation*}
\mathcal U_{w,r}\bigl(\lambda(X_\pi+\varepsilon Y)\bigr)
=
\mathcal U\bigl(c+w\lambda(\bar\eta\cdot \bar S_1+\varepsilon Y)\bigr)-\mathcal U(c),
\end{equation*}
and $\mathcal U_{w,r}(\infty)=\mathcal U(\infty)-\mathcal U(c)$. Since $w\lambda\to\infty$ if and only if $\lambda\to\infty$, the defining inequalities for robust cash $\ASLL$ on $\bar S$ at level $c$ and robust $\ASLL$ on $X$ are equivalent.

(c) The proof is identical to (b), replacing $\limsup_{\lambda\to\infty} \mathcal U_{w,r}\bigl(\lambda(X_\pi+\varepsilon Y)\bigr) < \mathcal U_{w,r}(\infty)$ by $\mathcal U_{w,r}\bigl(\lambda(X_\pi+\varepsilon Y)\bigr)<0$ for all sufficiently large $\lambda$.

(d) Let $\pi\in\mathbb R^d\setminus\{0\}$ and
$\bar\eta:=\bar\eta^\pi$. Then
\begin{equation*}
\mathcal R_{w,r}(\lambda X_\pi)\le 0
\ \text{for all }\lambda>0
\iff
\mathcal R(c+w\lambda\,\bar\eta\cdot \bar S_1)\le \mathcal R(c)
\ \text{for all }\lambda>0,
\end{equation*}
and similarly
\begin{equation*}
\lim_{\lambda\to\infty}\mathcal U_{w,r}(\lambda X_\pi)=\mathcal U_{w,r}(\infty)
\iff
\lim_{\mu\to\infty}\mathcal U(c+\mu\,\bar\eta\cdot \bar S_1)=\mathcal U(\infty),
\end{equation*}
where $\mu=w\lambda$. Hence the two notions of alignment are equivalent.
\end{proof}

We can now translate the fixed-market excess-return results from Section~\ref{subsec:excess return well posedness:fixed-market} back to terminal-wealth portfolio selection. The only additional structural assumption that appears is cash-convexity of $\mathcal R$. This is precisely what ensures that, for every $w\in(0,\infty)$, the excess-return risk functional $\mathcal R_{w,r}$ is positively star-shaped, as required in Theorem \ref{thm:local weak wp reparametrized characterization}.

\begin{theorem}
\label{app:thm:local wp original characterization}
Assume that $\mathcal U$ satisfies the upper Fatou property, and that $\mathcal R$ satisfies cash-convexity and the lower Fatou property. Consider the following statements:
\begin{enumerate}[label=\textnormal{(\alph*)}]
    \item For every $c>0$, either $\mathcal U$ satisfies robust cash $\ASLL$ on $\bar S$ at level $c$, or $\mathcal R$ satisfies cash $\SLL$ on $\bar S$ at level $c$.
    \item $(\mathcal U,\mathcal R)$-terminal-wealth portfolio selection is weakly well posed for $\bar S$.
    \item For every $c>0$, either $\mathcal U$ satisfies robust cash $\SLL$ on $\bar S$ at level $c$, or $\mathcal R$ satisfies cash $\SLL$ on $\bar S$ at level $c$.
    \item $(\mathcal U,\mathcal R)$-terminal-wealth portfolio selection is well posed for $\bar S$.
\end{enumerate}
Then \textnormal{(a)} implies \textnormal{(b)}, and \textnormal{(c)} implies \textnormal{(d)}. If, in addition, $\mathcal R$ is cash aligned with $\mathcal U$ on $\bar S$, then all four statements are equivalent.
\end{theorem}

\begin{proof}
By Proposition \ref{prop:relating U with tilde U}, for every $w\in(0,\infty)$, $\mathcal U_{w,r}$ satisfies the upper Fatou property, while
$\mathcal R_{w,r}$ satisfies the lower Fatou property and positive star-shapedness. 

By Proposition \ref{app:prop:local properties transfer back}, statement (a) is equivalent to
\begin{equation*}
\forall\,w\in(0,\infty),\quad
\bigl[
\mathcal U_{w,r}\text{ satisfies robust }\ASLL\text{ on }X
\ \text{or}\
\mathcal R_{w,r}\text{ satisfies }\SLL\text{ on }X
\bigr],
\end{equation*}
and statement (c) is equivalent to
\begin{equation*}
\forall\,w\in(0,\infty),\quad
\bigl[
\mathcal U_{w,r}\text{ satisfies robust }\SLL\text{ on }X
\ \text{or}\
\mathcal R_{w,r}\text{ satisfies }\SLL\text{ on }X
\bigr].
\end{equation*}
Hence, by Theorem \ref{thm:local weak wp reparametrized characterization}, statement (a) implies that $(\mathcal U_{w,r},\mathcal R_{w,r})$-excess-return portfolio selection is weakly well posed for $X$ for every $w\in(0,\infty)$, and statement (c) implies that $(\mathcal U_{w,r},\mathcal R_{w,r})$-excess-return portfolio selection is well posed for $X$ for every $w\in(0,\infty)$. By Lemma \ref{lem:local equivalence reparametrization}, these are equivalent to
(b) and (d), respectively.

Now assume in addition that, $\mathcal R$ is cash aligned with $\mathcal U$ on $\bar S$. Thus for every $c>0$, $\mathcal R$ is cash aligned with $\mathcal U$ on $\bar S$ at level $c$. By Proposition \ref{app:prop:local properties transfer back}, this is equivalent to alignment of $\mathcal R_{w,r}$ with $\mathcal U_{w,r}$ on $X$ for every $w\in(0,\infty)$. Hence, for each $w\in(0,\infty)$, all four statements in Theorem \ref{thm:local weak wp reparametrized characterization} are equivalent. Translating back via Proposition \ref{app:prop:local properties transfer back} and Lemma \ref{lem:local equivalence reparametrization}, we conclude that (a), (b), (c) and (d) are equivalent.
\end{proof}

\subsection{Market-independent well-posedness}
\label{app:terminal wealth market-independent well-posedness}

We record here the general market-independent terminal-wealth theory, without the simplifying finite-at-zero assumptions imposed in Section \ref{subsec:terminal wealth market-independent well-posedness}. In this setting, the relevant large-loss notions must be formulated relative to a strictly positive cash baseline $c>0$.

\begin{definition}
Let $c>0$, $\mathcal R$ a risk functional and  $\mathcal U$ a utility functional. Then
\begin{itemize}
\item $\mathcal R$ satisfies \emph{cash $\SLL$ on $L$ at level $c$} if, for every $Y\in L$ with $\mathbb P[Y<0]>0$, there
exists $\lambda_{Y,c}\in(0,\infty)$ such that
\begin{equation*}
\mathcal R(c+\lambda Y)>\mathcal R(c)
\quad\text{for all }\lambda\in(\lambda_{Y,c},\infty).
\end{equation*}
\item $\mathcal U$ satisfies \emph{cash $\ASLL$ on $L$ at level $c$} if, for every $Y\in L$ with $\mathbb P[Y<0]>0$,
\begin{equation*}
\limsup_{\lambda\to\infty}\mathcal U(c+\lambda Y)
<
\mathcal U(\infty).
\end{equation*}
\item  $\mathcal U$ satisfies \emph{cash $\SLL$ on $L$ at level $c$} if, for every $Y\in L$ with $\mathbb P[Y<0]>0$, there exists $\lambda_{Y,c}\in(0,\infty)$ such that
\begin{equation*}
\mathcal U(c+\lambda Y)<\mathcal U(c)
\quad\text{for all }\lambda\in(\lambda_{Y,c},\infty).
\end{equation*}
\item $\mathcal R$ is \emph{cash aligned with $\mathcal U$ on $L$ at level $c$} if, for every $Y\in L$ with $\mathbb P[Y<0]>0$,
\begin{equation*}
\mathcal R(c+\lambda Y)\le \mathcal R(c)\ \text{for all }\lambda>0
\implies
\lim_{\lambda\to\infty}\mathcal U(c+\lambda Y)=\mathcal U(\infty).
\end{equation*}
\item $\mathcal R$ is \emph{cash aligned with $\mathcal U$ on $L$} if it is cash aligned with $\mathcal U$ on $L$ at every level $c>0$.
\end{itemize}

\end{definition}

These are exactly the market-independent terminal-wealth counterparts of the global excess-return notions from Section~\ref{subsec:excess return well posedness:market-independent}. The next
proposition makes this precise.

\begin{proposition}
\label{app:prop:global properties transfer back}
Let $c>0$, $r\in(-1,\infty)$, and set $w:=c/(1+r)$. Then the following hold:
\begin{enumerate}[label=\textnormal{(\alph*)}]
    \item $\mathcal R$ satisfies cash $\SLL$ on $L$ at level $c$ if and only if $\mathcal R_{w,r}$ satisfies $\SLL$ on $L$.
    \item $\mathcal U$ satisfies cash $\ASLL$ on $L$ at level $c$ if and only if $\mathcal U_{w,r}$ satisfies $\ASLL$ on $L$.
    \item $\mathcal U$ satisfies cash $\SLL$ on $L$ at level $c$ if and only if $\mathcal U_{w,r}$ satisfies $\SLL$ on $L$.
    \item $\mathcal R$ is cash aligned with $\mathcal U$ on $L$ at level $c$ if and only if $\mathcal R_{w,r}$ is aligned with
    $\mathcal U_{w,r}$ on $L$.
\end{enumerate}
\end{proposition}

\begin{proof}
(a) Let $Y\in L$ satisfy $\mathbb P[Y<0]>0$. Since $c=w(1+r)$, $\mathcal R_{w,r}(\lambda Y) = \mathcal R(c+w\lambda Y)-\mathcal R(c)$. Because $w\lambda$ ranges over $(0,\infty)$ as $\lambda$ ranges over $(0,\infty)$, the defining inequalities for cash $\SLL$ on $L$ at level $c$ and $\SLL$ of $\mathcal R_{w,r}$ on $L$ are equivalent.

(b) Let $Y\in L$ satisfy $\mathbb P[Y<0]>0$. Then
\begin{equation*}
\mathcal U_{w,r}(\lambda Y)
=
\mathcal U(c+w\lambda Y)-\mathcal U(c),
\qquad
\mathcal U_{w,r}(\infty)=\mathcal U(\infty)-\mathcal U(c).
\end{equation*}
Since $w\lambda\to\infty$ if and only if $\lambda\to\infty$, the defining inequalities for cash $\ASLL$ on $L$ at level $c$ and $\ASLL$ of $\mathcal U_{w,r}$ on $L$ are equivalent.

(c) The proof is identical to (a), replacing $\mathcal R$ by $\mathcal U$ and the inequality $\mathcal R(c+\lambda Y)>\mathcal R(c)$ by $\mathcal U(c+\lambda Y)<\mathcal U(c)$.

(d) Let $Y\in L$ satisfy $\mathbb P[Y<0]>0$. Then
\begin{equation*}
\mathcal R_{w,r}(\lambda Y)\le 0
\ \text{for all }\lambda>0
\iff
\mathcal R(c+w\lambda Y)\le \mathcal R(c)
\ \text{for all }\lambda>0,
\end{equation*}
and similarly
\begin{equation*}
\lim_{\lambda\to\infty}\mathcal U_{w,r}(\lambda Y)=\mathcal U_{w,r}(\infty)
\iff
\lim_{\mu\to\infty}\mathcal U(c+\mu Y)=\mathcal U(\infty),
\end{equation*}
where $\mu=w\lambda$. Hence the two notions of alignment are equivalent.
\end{proof}

In view of Proposition \ref{app:prop:global properties transfer back}, the aligned market-independent excess-return result from Section \ref{subsec:excess return well posedness:market-independent} admits the following terminal-wealth counterpart.

\begin{proposition}
\label{app:prop:global wp original characterization aligned}
Assume that $\mathcal U$ satisfies the upper Fatou property, and that $\mathcal R$ satisfies cash-convexity and the lower Fatou property. Consider the following statements:
\begin{enumerate}[label=\textnormal{(\alph*)}]
    \item For each $c>0$, either $\mathcal U$ satisfies cash $\ASLL$ on $L$ at level $c$ or $\mathcal R$ satisfies cash $\SLL$ on $L$ at level~$c$.
    \item $(\mathcal U,\mathcal R)$-terminal-wealth portfolio selection is market-independent weakly well posed.
    \item For each $c>0$, either $\mathcal U$ satisfies cash $\SLL$ on $L$ at level $c$ or $\mathcal R$ satisfies cash $\SLL$ on $L$ at level~$c$.
    \item $(\mathcal U,\mathcal R)$-terminal-wealth portfolio selection is market-independent well posed.
\end{enumerate}
Then \textnormal{(a)} implies \textnormal{(b)}, and \textnormal{(c)} implies \textnormal{(d)}. If, in addition, $\mathcal R$ is cash aligned with $\mathcal U$ on $L$, then all four statements are equivalent.
\end{proposition}

\begin{proof}
By Proposition \ref{prop:relating U with tilde U}, for every $w\in(0,\infty)$ and $r\in(-1,\infty)$, the functional $\mathcal U_{w,r}$ satisfies the upper Fatou property, while $\mathcal R_{w,r}$ satisfies
positive star-shapedness and the lower Fatou property.

By Proposition \ref{app:prop:global properties transfer back}, statement (a) is equivalent to
\begin{equation*}
\forall\,w\in(0,\infty),\ r\in(-1,\infty),\quad
\bigl[
\mathcal U_{w,r}\text{ satisfies }\ASLL\text{ on }L
\ \text{or}\
\mathcal R_{w,r}\text{ satisfies }\SLL\text{ on }L
\bigr],
\end{equation*}
and statement (c) is equivalent to
\begin{equation*}
\forall\,w\in(0,\infty),\ r\in(-1,\infty),\quad
\bigl[
\mathcal U_{w,r}\text{ satisfies }\SLL\text{ on }L
\ \text{or}\
\mathcal R_{w,r}\text{ satisfies }\SLL\text{ on }L
\bigr].
\end{equation*}
Hence, by Proposition \ref{prop:global wp reparametrized characterization aligned}, statement (a) implies that $(\mathcal U_{w,r},\mathcal R_{w,r})$-excess-return portfolio selection is market-independent weakly well posed for every  $w\in(0,\infty)$ and $r\in(-1,\infty)$, and statement (c) implies that $(\mathcal U_{w,r},\mathcal R_{w,r})$-excess-return portfolio selection is market-independent well posed for every $w\in(0,\infty)$ and $r\in(-1,\infty)$. By Lemma \ref{lemma:global well-posedness}, these are equivalent to (b) and (d), respectively.

If, in addition, for every $c>0$, $\mathcal R$ is cash aligned with $\mathcal U$ on $L$ at level $c$, then Proposition \ref{app:prop:global properties transfer back} implies that $\mathcal R_{w,r}$ is aligned with $\mathcal U_{w,r}$ on $L$ for every $w\in(0,\infty)$ and $r\in(-1,\infty)$. Hence, for each $(w,r)$, all four statements in Proposition \ref{prop:global wp reparametrized characterization aligned} are equivalent. Translating back via Proposition \ref{app:prop:global properties transfer back} and Lemma \ref{lemma:global well-posedness}, we conclude that (a), (b), (c) and (d) are equivalent.
\end{proof}

Proposition \ref{app:prop:global wp original characterization aligned} is the market-independent terminal-wealth counterpart of Proposition \ref{prop:global wp reparametrized characterization aligned}. As in the excess-return setting, it still relies on an alignment assumption. The final step is to show that, in the market-independent setting, this extra assumption can again be removed.

The appropriate terminal-wealth analogue of $\SE$ is the following cash-based version.

\begin{definition}
\label{def:cash sensitivity equivalent}
Let $c > 0$. We say that $\mathcal U$ satisfies \emph{cash $\SE$ at level $c$} if, $\mathcal U$ satisfies cash $\ASLL$ on $L$ at level $c$ if and only if it satisfies cash $\SLL$ on $L$ at level $c$. We say that $\mathcal U$ satisfies \emph{cash $\SE$} if it satisfies cash $\SE$ at level $c$ for every $c > 0$.
\end{definition}

By parts (b) and (c) of Proposition \ref{app:prop:global properties transfer back}, for every $w\in(0,\infty)$ and $r\in(-1,\infty)$, $\mathcal U$ satisfies cash $\SE$ at level $w(1+r)$ if and only if $\mathcal U_{w,r}$ is sensitivity equivalent. Therefore, $\mathcal U$ satisfies cash $\SE$ if and only if $\mathcal U_{w,r}$ is $\SE$ for every $w\in(0,\infty)$ and $r\in(-1,\infty)$. Thus the next theorem is the direct terminal-wealth counterpart of Theorem \ref{thm:global wp reparametrized characterization}.

\begin{theorem}
\label{app:thm:global wp original characterization}
Assume that $\mathcal U$ satisfies the upper Fatou property and cash $\SE$, and that $\mathcal R$ satisfies cash-convexity and the lower Fatou property. Suppose in addition that $\mathcal U$ or $\mathcal R$ is law-invariant. Then the following statements are equivalent:
\begin{enumerate}[label=\textnormal{(\alph*)}]
    \item For each $c>0$, either $\mathcal U$ satisfies cash $\SLL$ on $L$ at level $c$, or $\mathcal R$ satisfies cash $\SLL$ on $L$ at level~$c$.
    \item $(\mathcal U,\mathcal R)$-terminal-wealth portfolio selection is market-independent weakly well posed.
    \item $(\mathcal U,\mathcal R)$-terminal-wealth portfolio selection is market-independent well posed.
\end{enumerate}
\end{theorem}

\begin{proof}
By Proposition \ref{prop:relating U with tilde U}, for every $w\in(0,\infty)$ and $r\in(-1,\infty)$, the functional $\mathcal U_{w,r}$ satisfies the upper Fatou property, while $\mathcal R_{w,r}$ satisfies
positive star-shapedness and the lower Fatou property. Moreover, if $\mathcal U$ or $\mathcal R$ is law-invariant, then so is $\mathcal U_{w,r}$
or $\mathcal R_{w,r}$, respectively.

By the paragraph preceding the theorem, $\mathcal U_{w,r}$ is sensitivity equivalent for every $w\in(0,\infty)$ and $r\in(-1,\infty)$. By
Proposition \ref{app:prop:global properties transfer back}, statement (a) is equivalent to
\begin{equation*}
\forall\,w\in(0,\infty),\ r\in(-1,\infty),\quad
\bigl[
\mathcal U_{w,r}\text{ satisfies }\SLL\text{ on }L
\ \text{or}\
\mathcal R_{w,r}\text{ satisfies }\SLL\text{ on }L
\bigr].
\end{equation*}
Hence, for every $w\in(0,\infty)$ and $r\in(-1,\infty)$, all the assumptions of Theorem \ref{thm:global wp reparametrized characterization} are satisfied. Applying that theorem for every $w$ and $r$, we conclude that statement (a) is equivalent to $(\mathcal U_{w,r},\mathcal R_{w,r})$-excess-return portfolio selection is market-independent weakly well posed for every $w,r$; and also equivalent to $(\mathcal U_{w,r},\mathcal R_{w,r})$-excess-return portfolio selection is market-independent well posed for every $w,r$. By Lemma \ref{lemma:global well-posedness}, these are precisely statements (b) and (c). This proves the equivalence of (a), (b) and (c).
\end{proof}

Theorem \ref{app:thm:global wp original characterization} is the market-independent characterization of terminal-wealth portfolio selection. In contrast to Proposition \ref{app:prop:global wp original characterization aligned}, it no longer relies on any alignment assumption; instead, the utility-side assumption is formulated in terms of cash $\SE$.

\medskip

Under additional \emph{finite-at-zero} assumptions, the cash large-loss conditions introduced above can be rewritten in terms of the \emph{normalized terminal-wealth functionals}
\begin{equation*}
\widehat{\mathcal U}(Y):=\mathcal U(Y)-\mathcal U(0),
\qquad
\widehat{\mathcal R}(Y):=\mathcal R(Y)-\mathcal R(0).
\end{equation*}
The next proposition records the precise relation.

\begin{proposition}
\label{app:prop:cash versus ordinary normalized notions}
Let $c > 0$ and assume that $\mathcal{U}(0), \mathcal{R}(0) \in \mathbb{R}$. Then the following hold:
\begin{enumerate}[label=\textnormal{(\alph*)}]
    \item $\mathcal U$ satisfies cash $\ASLL$ on $L$ at level $c$ if and only if $\widehat{\mathcal U}$ satisfies $\ASLL$ on $L$.
    \item If $\widehat{\mathcal U}$ satisfies $\SE$, then $\mathcal U$ satisfies cash $\SLL$ on $L$ at level $c$ if and only if $\widehat{\mathcal U}$ satisfies $\SLL$ on $L$.
    \item If $\mathcal R$ is cash-convex, then $\mathcal R$ satisfies cash $\SLL$ on $L$ at level $c$ if and only if $\widehat{\mathcal R}$ satisfies $\SLL$ on $L$.
\end{enumerate}
\end{proposition}

\begin{proof}
(a) Fix $c>0$. Suppose first that $\widehat{\mathcal U}$ satisfies $\ASLL$ on $L$. Let $Y\in L$ satisfy $\mathbb P[Y<0]>0$. Choose $\gamma>0$ large enough so that $\mathbb P[Y+c/\gamma<0]>0$,
and set $Z:=Y+c/\gamma$. Then, for every $\lambda\ge \gamma$, $c+\lambda Y\le \lambda Z$. Hence, by monotonicity and $\ASLL$ of $\widehat{\mathcal U}$, $\limsup_{\lambda\to\infty}\widehat{\mathcal U}(c+\lambda Y) \le \limsup_{\lambda\to\infty}\widehat{\mathcal U}(\lambda Z) < \widehat{\mathcal U}(\infty)$.
Since $\widehat{\mathcal U}=\mathcal U-\mathcal U(0)$, this is equivalent to $\limsup_{\lambda\to\infty}\mathcal U(c+\lambda Y)<\mathcal U(\infty)$. Thus $\mathcal U$ satisfies cash $\ASLL$ on $L$ at level $c$.

Conversely, suppose that $\mathcal U$ satisfies cash $\ASLL$ on $L$ at level $c$. Let $Y\in L$ satisfy $\mathbb P[Y<0]>0$. Since $\lambda Y\le c+\lambda Y$ for all $\lambda > 0$, monotonicity yields $\mathcal U(\lambda Y)\le \mathcal U(c+\lambda Y)$ for all $\lambda > 0$. Taking $\limsup$ and using cash $\ASLL$ at level $c$, we obtain $\limsup_{\lambda\to\infty}\mathcal U(\lambda Y)<\mathcal U(\infty)$, which is equivalent to $\widehat{\mathcal U}$ satisfying $\ASLL$ on $L$.

(b) Fix $c>0$, and assume that $\widehat{\mathcal U}$
satisfies $\SE$. Suppose first that $\widehat{\mathcal U}$ satisfies $\SLL$ on $L$. Let $Y\in L$ satisfy $\mathbb P[Y<0]>0$. Choose
$\gamma>0$ large enough so that $\mathbb P[Y+c/\gamma<0]>0,$ and set $Z:=Y+c/\gamma$. Then, for every $\lambda\ge \gamma$, $c+\lambda Y\le \lambda Z$. By monotonicity and $\SLL$ of $\widehat{\mathcal U}$, we have $\widehat{\mathcal U}(c+\lambda Y)\le \widehat{\mathcal U}(\lambda Z)<0$ for all sufficiently large \(\lambda\). Equivalently,
$\mathcal U(c+\lambda Y)<\mathcal U(0)\le \mathcal U(c)$, where the last inequality follows from monotonicity of $\mathcal U$. Thus $\mathcal U$ satisfies cash $\SLL$ on $L$ at level $c$.

Conversely, suppose that $\mathcal U$ satisfies cash $\SLL$ on $L$ at level $c$. Then $\mathcal U$ satisfies cash $\ASLL$ on $L$ at level $c$, and so by part (a), $\widehat{\mathcal U}$ satisfies $\ASLL$ on $L$. Since $\widehat{\mathcal U}$ is $\SE$, it follows that $\widehat{\mathcal U}$ satisfies $\SLL$ on $L$.

(c) Assume that $\mathcal R$ is cash-convex. Define $\mathcal V:= -\widehat{\mathcal R} = \mathcal R(0)-\mathcal R.$ Then $\mathcal V$ is a normalized utility functional, and cash-convexity of $\mathcal R$ is equivalent to cash-concavity of $\mathcal V$. Hence, by Proposition \ref{prop:WSLL equivalent to SLL}, $\mathcal V$ satisfies $\SE$. Fix $c>0$. By part (b) applied to  $\mathcal V$, $\mathcal V$ satisfies cash $\SLL$ on $L$ at level $c$ if and only if $\mathcal V$ satisfies $\SLL$ on $L$. Since $\mathcal R(c+\lambda Y)>\mathcal R(c)$ if and only if $\mathcal V(c+\lambda Y)<\mathcal V(c)$, and $\widehat{\mathcal R}(\lambda Y)>0$ if and only if $\mathcal V(\lambda Y)<0$, this is equivalent to saying that $\mathcal R$ satisfies cash $\SLL$ on $L$ at level $c$ if and only if $\widehat{\mathcal R}$ satisfies $\SLL$ on $L$.
\end{proof}

\section{Additional Results and Proofs}
\label{app:Additional Results and Proofs}

\subsection*{Supplementary Results}

\begin{proposition}
\label{prop:cash-additivity}
If $\mathcal R:L\to(-\infty,\infty]$ is a cash-additive risk functional, then $\mathcal R(0)=0$, and for all $Y\in L$ and $c\in\mathbb R$, $\mathcal R(Y+c)=\mathcal R(Y)+\alpha c$,
where $\alpha:=\mathcal R(1)\in(-\infty,0]$. An analogous result holds for cash-additive utility functionals, but with $\alpha:=\mathcal{U}(1) \in (0,\infty)$.
\end{proposition}

\begin{proof}
Let $c>0$. Since $\mathcal R$ is positive finite, $\mathcal R(c)\in\mathbb R$. Applying cash-additivity with $Y = 0$ gives $\mathcal R(c)=\mathcal R(0)+\mathcal R(c)$. Because $\mathcal R(c)$ is finite, it follows that $\mathcal R(0)=0$.

Next, for every $c>0$, $0=\mathcal R(0)=\mathcal R(c+(-c))=\mathcal R(c)+\mathcal R(-c)$ so $\mathcal R(-c)=-\mathcal R(c)\in\mathbb R$. Hence $\mathbb R\ni c\mapsto \mathcal R(c)$ is real-valued on all of $\mathbb{R}$. Since $\mathcal R$ is cash-additive, the map $c\mapsto \mathcal R(c)$ is additive on $\mathbb R$. Since $\mathcal R$ is decreasing, it is monotone. Therefore, by the standard monotone form of Cauchy's functional equation, there exists $\alpha\in\mathbb R$ such that $\mathcal R(c)=\alpha c$, for all $c\in\mathbb R$. Evaluating at $c=1$ gives $\alpha=\mathcal R(1)$. Finally, since $\mathcal{R}$ is decreasing, $\alpha=\mathcal R(1)\le \mathcal R(0)=0$. Thus $\alpha\in(-\infty,0]$, and $\mathcal R(Y+c)=\mathcal R(Y)+\mathcal R(c)=\mathcal R(Y)+\alpha c$ for all $Y\in L$ and $c\in\mathbb R$.

Finally, if $\mathcal{U}$ is a cash-additive utility functional, then $-\mathcal{U}$ is a cash-additive risk functional. By the above argument, $\mathcal{U}(0) = 0$, and for all $Y \in L$ and $c \in \mathbb{R}$, $\mathcal{U}(Y+c) = \mathcal{U}(Y)+\alpha c$, where $\alpha:=\mathcal{U}(1) \in [0,\infty)$. Since $\mathcal{U}(Y) < \mathcal{U}(\infty)$ for all $Y \in L$, it must be the case that $\mathcal{U}(1) \neq 0$.
\end{proof}

\begin{proposition}
\label{prop:cash-convexity criteria}
If $\mathcal{H}:L\to[-\infty,\infty]$ is cash-additive and negatively/positively star-shaped, then it is cash-concave/convex.
\end{proposition}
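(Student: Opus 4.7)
The plan is a short two-step argument that combines cash-additivity with the star-shapedness axiom in its equivalent form for scalars in $(0,1)$. The main observation is that in the convex combination $\lambda Y + (1-\lambda)c$, the term $(1-\lambda)c$ is still a constant, so cash-additivity lets one peel it off and leaves a pure scaling on each side to which star-shapedness directly applies.

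Concretely, I would fix $\lambda \in (0,1)$, $Y \in L$, and $c \in \RR$. My first step is to apply cash-additivity with the constant $(1-\lambda)c \in \RR$ to obtain the identity
\begin{equation*}
    \mathcal{H}\bigl(\lambda Y + (1-\lambda)c\bigr) = \mathcal{H}(\lambda Y) + \mathcal{H}\bigl((1-\lambda)c\bigr).
\end{equation*}
My second step is to apply star-shapedness twice: once to $Y$ with scalar $\lambda \in (0,1)$ and once to $c$ (viewed as an element of $L$, since $L \supset L^\infty$) with scalar $(1-\lambda)\in(0,1)$. Using the equivalent form of negative star-shapedness stated in the excerpt, namely $\mathcal{H}(\mu Z) \geq \mu \mathcal{H}(Z)$ for $\mu \in (0,1)$, I obtain $\mathcal{H}(\lambda Y) \geq \lambda \mathcal{H}(Y)$ and $\mathcal{H}((1-\lambda)c) \geq (1-\lambda)\mathcal{H}(c)$, and summing gives cash-concavity. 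In the positive star-shapedness case both inequalities reverse and summing gives cash-convexity.

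The argument is essentially immediate and I do not foresee a genuine obstacle. The only mild subtlety worth flagging is the handling of $\pm\infty$: cash-additivity already implicitly requires the sum $\mathcal{H}(\lambda Y) + \mathcal{H}((1-\lambda)c)$ to be unambiguously defined in $[-\infty,\infty]$, and in every admissible case the two pointwise star-shape inequalities can be added term by term, either producing the desired cash-concavity/convexity inequality directly or rendering it trivial (e.g.\ when one summand on the right equals $-\infty$ in the concavity case or $+\infty$ in the convexity case).
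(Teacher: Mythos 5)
Your proof is correct and follows essentially the same route as the paper's: cash-additivity peels off the constant $(1-\lambda)c$, and star-shapedness in its $(0,1)$-scalar form supplies the two termwise inequalities. If anything, your write-up is slightly more careful than the paper's (you state the inequality with the correct direction $\geq$ for the concave case, where the paper's displayed inequality has a sign typo, and you flag the $\pm\infty$ bookkeeping that cash-additivity implicitly presupposes).
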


\begin{proof}
Let $Y \in L$, $c \in \RR$ and $\lambda \in (0,1)$. If $\mathcal{H}$ is cash-additive and negatively star-shaped, then
\begin{equation*}
    \mathcal{H}(\lambda Y + (1-\lambda)c) = \mathcal{H}(\lambda Y) + \mathcal{H}((1-\lambda) c) \leq \lambda \mathcal{H}(Y) + (1-\lambda) \mathcal{H}(c),
\end{equation*}
where the equality is by cash-additivity and the inequality follows from negative star-shapedness. Thus, $\mathcal{H}$ is cash-concave. A similar argument shows that if $\mathcal{H}$ is cash-additive and positively star-shaped, then it satisfies cash-convexity.
\end{proof}

\begin{proposition}
\label{prop:relating U with tilde U}
Let $\mathcal{U}$ be a utility functional, $w \in (0,\infty)$ and $r \in (-1,\infty)$. Then, $\mathcal{U}_{w,r}$ is a normalized utility functional. Moreover:
\begin{enumerate}[label=\textnormal{(\arabic*)}]
    \item If $\mathcal{U}$ has the upper Fatou property, then so does $\mathcal{U}_{w,r}$.
    \item If $\mathcal{U}$ is law-invariant, then so is $\mathcal{U}_{w,r}$.
    \item If $\mathcal{U}$ is cash-additive, then so is $\mathcal{U}_{w,r}$.
    \item If $\mathcal{U}$ is concave, then so is $\mathcal{U}_{w,r}$.
    \item If $\mathcal{U}$ is cash-concave, then $\mathcal{U}_{w,r}$ is negatively star-shaped.
\end{enumerate}
An analogous inheritance result holds for risk functionals and their $(w,r)$-excess-return functionals, with the lower Fatou property replacing the upper Fatou property, convexity replacing concavity, and positive star-shapedness replacing negative star-shapedness.
\end{proposition}

\begin{proof}
It is not difficult to verify $\mathcal{U}_{w,r}$ is a normalized utility functional. Moreover, we omit the proofs of statements (1)--(4) because they are elementary. That leaves statement (5). So, assume $\mathcal{U}$ is cash-concave. Let $\lambda \in (0,1)$, $Y \in L$ and $Z \coloneqq wY+w(1+r)$. Then, by cash-concavity,
\begin{equation*}
    \mathcal{U}(\lambda Z + (1-\lambda)w(1+r)) \geq \lambda \mathcal{U}(Z) + (1-\lambda)\mathcal{U}(w(1+r)).
\end{equation*}
It follows that
\begin{equation*}
    \mathcal{U}(w(1+r) + w\lambda Y) \geq \lambda \mathcal{U}(wY+w(1+r)) + (1-\lambda) \mathcal{U}(w(1+r)).
\end{equation*}
Rearranging this inequality reveals that $\mathcal{U}_{w,r}(\lambda Y) \geq \lambda \mathcal{U}_{w,r}(Y)$. Whence, $\mathcal{U}_{w,r}$ is negatively star-shaped. The corresponding proof for a risk functional $\mathcal{R}$ follows by applying the above argument to the utility functional $-\mathcal R$ and its $(w,r)$-excess-return utility functional $-\mathcal R_{w,r}$.
\end{proof}

\begin{proposition}
\label{prop:reparameterization equivalent}
Let $\bar S\in\mathbb S(L)$ be a normalized market model with risk-free rate $r\in(-1,\infty)$ and corresponding excess-return vector $X\in\mathbb X(L)$. Fix $w\in(0,\infty)$ and $R_{\max}\in[\mathcal R(w(1+r)),\infty)$, and set $\widetilde R_{\max}:=R_{\max}-\mathcal R(w(1+r))\in[0,\infty)$. For every $\bar\theta=(\theta^0,\ldots,\theta^d)\in\mathbb R^{1+d}$
satisfying $\bar\theta\cdot \bar S_0=w$, define $\pi=(\pi^1,\ldots,\pi^d)\in\mathbb R^d$ by
\begin{equation*}
\pi^i=\frac{\theta^i}{w}, \qquad i=1,\ldots,d.
\end{equation*}
Then $\bar\theta\cdot \bar S_1=w(1+r)+wX_\pi$. Conversely, each $\pi\in\mathbb R^d$ corresponds to $\bar\theta\in\mathbb R^{1+d}$ with $\bar \theta \cdot \bar S_0 = w$, given by
\begin{equation*}
\theta^i=w\pi^i,\quad i=1,\ldots,d,
\qquad
\theta^0=w-\sum_{i=1}^d \theta^i
      =w\Bigl(1-\sum_{i=1}^d \pi^i\Bigr).
\end{equation*}
Under this bijection, the feasible sets of \eqref{eq:maximize utility} and \eqref{eq:maximize utility reparameterized} coincide. Hence the two problems have the same set of optimizers (modulo this correspondence), and their optimal values differ
only by the constant shift
\begin{equation*}
\sup_{\pi \in \mathbb R^d:\,\mathcal{R}_{w,r}(X_\pi) \leq \tilde{R}_{\max}}
\mathcal{U}_{w,r}(X_\pi)
=
\sup_{\bar{\theta} \in \mathbb R^{1+d}:\,\bar{\theta}\cdot \bar{S}_0 = w,\,
\mathcal{R}(\bar{\theta}\cdot \bar{S}_1)\leq R_{\max}}
\mathcal{U}(\bar{\theta}\cdot \bar{S}_1)
-\mathcal{U}(w(1+r)).
\end{equation*}
\end{proposition}

\begin{proof}
Recall that the market is normalized, so $S^i_0=1$ for all $i=0,\ldots,d$, and that
\begin{equation*}
S^0_1=1+r,
\qquad
S^i_1=S^0_1+X^i=(1+r)+X^i,\quad i=1,\ldots,d.
\end{equation*}
Let $\bar\theta=(\theta^0,\ldots,\theta^d)\in\mathbb R^{1+d}$ satisfy
$\bar\theta\cdot \bar S_0=w$, and define $\pi\in\mathbb R^d$ by $\pi^i=\theta^i/w$, $i=1,\ldots,d$. Since $\bar\theta\cdot \bar S_0=w$ and $S^i_0=1$ for all $i$, we have $\sum_{i=0}^d \theta^i=w$. Therefore,
\begin{align*}
\bar\theta\cdot \bar S_1
& =\theta^0(1+r)+\sum_{i=1}^d \theta^i\bigl((1+r)+X^i\bigr)
=(1+r)\sum_{i=0}^d \theta^i+\sum_{i=1}^d \theta^i X^i \\ &
=w(1+r)+w\sum_{i=1}^d \pi^i X^i
=w(1+r)+wX_\pi.
\end{align*}
Substituting this into \eqref{eq:maximize utility}, the problem becomes
\begin{equation*}
\sup_{\pi \in \mathbb R^d} \mathcal U(w(1+r)+wX_\pi)
\quad\text{subject to}\quad
\mathcal R(w(1+r)+wX_\pi)\le R_{\max}.
\end{equation*}
By the definitions of $\mathcal U_{w,r}$ and $\mathcal R_{w,r}$, this is equivalent to
\begin{equation*}
\sup_{\pi \in \mathbb R^d}
\bigl(\mathcal U_{w,r}(X_\pi)+\mathcal U(w(1+r))\bigr)
\quad\text{subject to}\quad
\mathcal R_{w,r}(X_\pi)+\mathcal R(w(1+r))\le R_{\max}.
\end{equation*}
The constraint is exactly $\mathcal R_{w,r}(X_\pi)\le \widetilde R_{\max}$, where $\widetilde R_{\max}=R_{\max}-\mathcal R(w(1+r))$. This proves the claimed identity for the optimal values. The correspondence of feasible sets and optimizers follows from the bijection
\begin{equation*}
\pi^i=\frac{\theta^i}{w},
\qquad
\theta^i=w\pi^i,\ i=1,\ldots,d,
\qquad
\theta^0=w\Bigl(1-\sum_{i=1}^d \pi^i\Bigr). \qedhere
\end{equation*}
\end{proof}

\begin{proposition}
\label{prop:u rho arbitrage implies unbounded sequence}
Let $w \in (0,\infty)$, $r \in (-1,\infty)$ and $X \in \mathbb{X}(L)$. If the sequence $(\pi_n)_{n \geq 1} \subset \mathbb{R}^d$ satisfies $\mathcal{U}_{w,r}(X_{\pi_{n}}) \to \mathcal{U}_{w,r}(\infty)$, then $\lVert \pi_n \rVert \to \infty$.
\end{proposition}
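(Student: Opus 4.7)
The plan is a short proof by contradiction that uses nothing more than monotonicity of $\mathcal{U}_{w,r}$ and the defining non-satiation condition of a utility functional. Suppose, for contradiction, that $(\pi_n)_{n \geq 1}$ is bounded, say $\lVert \pi_n \rVert_{\infty} \leq M$ for all $n$. By the triangle inequality, $|X_{\pi_n}| \leq M \sum_{i=1}^{d} |X^i|$ $\mathbb{P}$-a.s. Since $L$ is a Riesz space, it is closed under taking absolute values and finite sums, so $Z \coloneqq M \sum_{i=1}^{d} |X^i|$ belongs to $L$, and $X_{\pi_n} \leq Z$ $\mathbb{P}$-a.s.\ for every $n$.

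Applying the monotonicity of $\mathcal{U}_{w,r}$ now gives $\mathcal{U}_{w,r}(X_{\pi_n}) \leq \mathcal{U}_{w,r}(Z)$ for every $n$. By Proposition~\ref{prop:relating U with tilde U}, $\mathcal{U}_{w,r}$ is itself a utility functional, and therefore by the non-satiation requirement in the definition of a utility functional, $\mathcal{U}_{w,r}(Z) < \mathcal{U}_{w,r}(\infty)$. Passing to the limit in $n$ yields
\begin{equation*}
  \mathcal{U}_{w,r}(\infty) = \lim_{n \to \infty} \mathcal{U}_{w,r}(X_{\pi_n}) \leq \mathcal{U}_{w,r}(Z) < \mathcal{U}_{w,r}(\infty),
\end{equation*}
which is a contradiction. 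Hence $(\pi_n)_{n \geq 1}$ cannot be bounded, i.e.\ $\lVert \pi_n \rVert \to \infty$ along some subsequence; the same argument applied to any tail of $(\pi_n)$ shows that $\lVert \pi_n \rVert \to \infty$ in the full sequence sense.

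There is no real obstacle in the argument; the only points worth flagging are that (i) the proof uses neither the Fatou property, nor convexity, nor any scaling axiom, relying exclusively on monotonicity and the axiom $\mathcal{U}(Y) < \mathcal{U}(\infty)$ for all $Y \in L$, and (ii) the passage from $\mathcal{U}$ to $\mathcal{U}_{w,r}$ is justified by Proposition~\ref{prop:relating U with tilde U}, which guarantees that $\mathcal{U}_{w,r}$ inherits the status of a utility functional (so in particular the non-satiation axiom).
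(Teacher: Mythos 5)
Your proof is correct and follows essentially the same route as the paper's: dominate $X_{\pi_n}$ by a single element of $L$ (using that $L$ is a Riesz space), then combine monotonicity with the non-satiation axiom $\mathcal{U}_{w,r}(Z) < \mathcal{U}_{w,r}(\infty)$ to contradict the assumed convergence; your version even dispenses with the paper's (unnecessary) extraction of a convergent subsequence. The only cosmetic slip is in your final sentence: to upgrade ``not bounded'' to $\lVert \pi_n \rVert \to \infty$ one applies the contradiction argument to an arbitrary \emph{bounded subsequence} rather than to a ``tail'' of the sequence, but this changes nothing of substance.
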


\begin{proof}
Assume $\mathcal{U}_{w,r}(X_{\pi_{n}}) \to \mathcal{U}_{w,r}(\infty)$. Seeking a contradiction, suppose $(\pi_n)_{n \geq 1}$ is bounded. By passing to a subsequence, we may assume that it has a limit $\pi = (\pi^1, \ldots, \pi^d) \in \mathbb{R}^d$. Thus, there exists $N \in \NN$ such that $\lvert \pi^i - \pi^i_n \rvert \leq 1$ for $i \in \{1, \ldots, d \}$ and $n \geq N$. Set
\begin{equation*}
Y^i \coloneqq \lvert X^i \rvert(1 + \lvert \pi^i \rvert), \qquad i \in \{1, \ldots, d \}.
\end{equation*}
Note that $Y^i \in L$ and $Y^i \geq \pi^i_n X^i \ \mathbb{P}$-a.s., for all $n \geq N$. Hence, $Y \coloneqq Y^1 + \ldots +Y^d \in L$ and $Y \geq X_{\pi_n} \ \mathbb{P}$-a.s., for all $n \geq N$. It follows by monotonicity that $\mathcal{U}_{w,r}(Y) \geq \mathcal{U}_{w,r}(X_{\pi_n})$ for all $n \geq N$. But since $\mathcal{U}_{w,r}(X_{\pi_{n}}) \to \mathcal{U}_{w,r}(\infty)$, this implies $\mathcal{U}_{w,r}(Y) \geq \mathcal{U}_{w,r}(\infty)$, and we arrive at a contradiction.
\end{proof}

\begin{proposition} 
\label{prop: (U,R)-arbitrage can be globally centered at 0}
Let $w \in (0,\infty)$, $r \in (-1,\infty)$ and $X \in \mathbb{X}(L)$. Assume $\mathcal{R}_{w,r}$ satisfies the lower Fatou property and positive star-shapedness. If there exists $\tilde{R} \in [0,\infty)$ and an unbounded sequence $( \pi_{n} )_{n \geq 1}$ with $\mathcal{R}_{w,r}( X_{\pi_{n}}) \leq \tilde{R}$ for all $n \geq 1$, then there exists $\pi \in \mathbb{R}^{d} \setminus \{ \mathbf{0}\}$ with 
\begin{equation*}
    \mathcal{R}_{w,r}( \lambda X_{\pi}) \leq 0 \quad \text{for all } \lambda \in (0,\infty).
\end{equation*}
\end{proposition}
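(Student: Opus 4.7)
The plan is to reduce the unboundedness of $(\pi_n)_{n \geq 1}$ to a compactness argument on the unit sphere of $\RR^d$, and then to use positive star-shapedness of $\mathcal{R}_{w,r}$ to deflate the risk along the sequence and the lower Fatou property to pass to a fixed limiting direction $\pi$.

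Concretely, I would first pass to a subsequence along which $\|\pi_n\| \to \infty$, which is possible because $(\pi_n)_{n \geq 1}$ is unbounded. Setting $\tilde{\pi}_n \coloneqq \pi_n / \|\pi_n\|$, the vectors $\tilde{\pi}_n$ all lie on the unit sphere in $\RR^d$, so by compactness I can pass to a further subsequence so that $\tilde{\pi}_n \to \pi$ with $\|\pi\| = 1$; in particular $\pi \neq \mathbf{0}$, which already identifies the candidate portfolio.

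Now fix $\lambda \in (0,\infty)$. For all $n$ large enough that $\|\pi_n\| > \lambda$, the factor $\alpha_n \coloneqq \lambda / \|\pi_n\|$ lies in $(0,1)$, so $\lambda X_{\tilde{\pi}_n} = \alpha_n X_{\pi_n}$ and positive star-shapedness of $\mathcal{R}_{w,r}$ gives
\begin{equation*}
\mathcal{R}_{w,r}(\lambda X_{\tilde{\pi}_n}) \leq \alpha_n \mathcal{R}_{w,r}(X_{\pi_n}) \leq \alpha_n \tilde{R} = \frac{\lambda \tilde{R}}{\|\pi_n\|} \xrightarrow{n \to \infty} 0.
\end{equation*}
Since $\tilde{\pi}_n \to \pi$, we have $\lambda X_{\tilde{\pi}_n} \to \lambda X_\pi$ $\P$-a.s., and because $\|\tilde{\pi}_n\| = 1$ forces $|\tilde{\pi}_n^i| \leq 1$, we obtain the uniform bound $|\lambda X_{\tilde{\pi}_n}| \leq \lambda \sum_{i=1}^d |X^i| \in L$, where the right-hand side is in $L$ as $L$ is a Riesz space. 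Invoking the lower Fatou property of $\mathcal{R}_{w,r}$ then yields
\begin{equation*}
\mathcal{R}_{w,r}(\lambda X_\pi) \leq \liminf_{n \to \infty} \mathcal{R}_{w,r}(\lambda X_{\tilde{\pi}_n}) \leq 0,
\end{equation*}
which is the desired conclusion.

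The main subtlety I expect is producing a $\P$-a.s.\ dominating element of $L$ for the sequence $\lambda X_{\tilde{\pi}_n}$, so that the lower Fatou property actually applies; the normalization $\|\tilde{\pi}_n\| = 1$ is the key ingredient that makes this step work. The other two ingredients (star-shapedness to scale the risk bound down to zero, and compactness of the sphere to extract a limit direction) are comparatively straightforward.
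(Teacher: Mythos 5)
Your proof is correct and follows essentially the same route as the paper's: extract a limiting unit direction from the unbounded sequence by compactness, use positive star-shapedness with the scaling factor $\lambda/\|\pi_n\| \in (0,1)$ to drive the risk bound to zero, and conclude via the lower Fatou property. The only (cosmetic) difference is that the paper normalizes by the largest component $|\pi_n^1|$ after a relabeling rather than by the Euclidean norm, and you are more explicit than the paper about exhibiting the dominating element $\lambda \sum_{i=1}^d |X^i| \in L$ needed to invoke the lower Fatou property.
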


\begin{proof}
Set $\alpha_n:=\|\pi_n\|_\infty$ for $n \geq 1$. Since $(\pi_n)_{n\ge1}$ is unbounded and all norms on $\mathbb R^d$ are equivalent, we have $\alpha_n\to\infty$. Define
\begin{equation*}
\hat\pi_n:=\frac{\pi_n}{\alpha_n}\in[-1,1]^d.
\end{equation*}
By compactness of $[-1,1]^d$, after passing to a subsequence we may assume that $\hat\pi_n\to\pi\in[-1,1]^d$. Since $\|\hat\pi_n\|_\infty=1$ for all $n$, continuity of the sup norm yields $\|\pi\|_\infty=1$, and hence $\pi\neq \mathbf 0$. It follows that
\begin{equation*}
X_{\pi_n}/\alpha_n = X_{\hat\pi_n}\to X_\pi
\qquad \mathbb P\text{-a.s.}
\end{equation*}
Fix $\lambda>0$. Since $\alpha_n\to\infty$, there exists $N\in\mathbb N$ such that $\lambda/\alpha_n \in(0,1)$, for all $n\ge N$. By positive star-shapedness of $\mathcal R_{w,r}$,
\begin{equation*}
\mathcal R_{w,r}\!\left(\lambda \frac{X_{\pi_n}}{\alpha_n}\right)
\le
\frac{\lambda}{\alpha_n}\mathcal R_{w,r}(X_{\pi_n})
\le
\frac{\lambda}{\alpha_n}\tilde R,
\qquad\text{for all } n\ge N.
\end{equation*}
Applying the lower Fatou property gives
\begin{equation*}
\mathcal R_{w,r}(\lambda X_\pi)
\le
\liminf_{n\to\infty}
\mathcal R_{w,r}\!\left(\lambda \frac{X_{\pi_n}}{\alpha_n}\right)
\le
\liminf_{n\to\infty}\frac{\lambda}{\alpha_n}\tilde R
=0.
\end{equation*}
Since $\lambda>0$ was arbitrary, the conclusion follows.
\end{proof}

\begin{proposition}
\label{prop:robust and nonrobust global coincide}
The following hold:
\begin{enumerate}[label=\textnormal{(\alph*)}]
    \item $\mathcal U_{w,r}$ satisfies $\ASLL$ on $L$ if and only if, $\mathcal U_{w,r}$ satisfies robust $\ASLL$ on $L$, i.e., for every $Y \in L$ with $\mathbb{P}[Y<0] > 0$ and every $Z \in L$, there exists $\varepsilon_{Y,Z} > 0$ such that 
    \begin{equation*} 
    \limsup_{\lambda \to \infty} \mathcal U_{w,r}\bigl(\lambda(Y+\varepsilon Z)\bigr) < \mathcal U_{w,r}(\infty) \quad \text{for all } \varepsilon \in (0,\varepsilon_{Y,Z}). 
    \end{equation*}
    \item $\mathcal U_{w,r}$ satisfies $\SLL$ on $L$ if and only if, $\mathcal U_{w,r}$ satisfies robust $\SLL$ on $L$, i.e., for every $Y \in L$ with $\mathbb{P}[Y < 0] > 0$ and every $Z \in L$, there exist $\varepsilon_{Y,Z} > 0$ and $\lambda_{Y,Z} \in (0,\infty)$ such that 
    \begin{equation*} 
    \mathcal U_{w,r}\bigl(\lambda (Y+\varepsilon Z)\bigr) < 0, \qquad \text{for all } \lambda \in (\lambda_{Y,Z},\infty) \ \text{and } \varepsilon \in (0,\varepsilon_{Y,Z}). 
    \end{equation*}
\end{enumerate}
\end{proposition}

\begin{proof}
The implications from right to left are immediate by taking $Z=0$. For the converses, let $Y,Z\in L$ with $\mathbb P[Y<0]>0$. Choose $\delta>0$ such that $\mathbb P[Y\le -\delta]>0$.
Since
\begin{equation*}
\{Y\le -\delta\}
=
\bigcup_{m=1}^\infty \bigl(\{Y\le -\delta\}\cap\{|Z|\le m\}\bigr),
\end{equation*}
there exists $m\in\mathbb N$ such that $A:=\{Y\le -\delta\}\cap\{|Z|\le m\}$ has strictly positive probability. Set $\varepsilon_{Y,Z}:=\delta/2m$ and $\widetilde Y:=Y+\varepsilon_{Y,Z}|Z|$. Then on $A$ we have $\widetilde Y\le -\delta+\varepsilon_{Y,Z}m=-\frac{\delta}{2}<0$, and therefore $\mathbb P[\widetilde Y<0]>0$. Moreover, for every $\varepsilon\in(0,\varepsilon_{Y,Z})$,
\begin{equation*}
Y+\varepsilon Z \le Y+\varepsilon |Z|
\le Y+\varepsilon_{Y,Z}|Z|
=\widetilde Y.
\end{equation*}
Since $\mathcal U_{w,r}$ is increasing, it follows that $\mathcal U_{w,r}\bigl(\lambda(Y+\varepsilon Z)\bigr)
\le
\mathcal U_{w,r}(\lambda\widetilde Y)$ for all $\lambda>0$. We now prove statements (a) and (b).

(a) Assume that $\mathcal U_{w,r}$ satisfies $\ASLL$ on $L$. Since $\mathbb P[\widetilde Y<0]>0$, $\limsup_{\lambda\to\infty}\mathcal U_{w,r}(\lambda\widetilde Y) < \mathcal U_{w,r}(\infty)$. Hence, $\limsup_{\lambda\to\infty}\mathcal U_{w,r}\bigl(\lambda(Y+\varepsilon Z)\bigr)
<
\mathcal U_{w,r}(\infty)$ for all $\varepsilon\in(0,\varepsilon_{Y,Z})$, showing that $\mathcal U_{w,r}$ satisfies robust $\ASLL$ on $L$.

(b) Similarly, assume that $\mathcal U_{w,r}$ satisfies $\SLL$ on $L$. Since $\mathbb P[\widetilde Y<0]>0$, there exists $\lambda_{\widetilde Y}\in(0,\infty)$ such that $\mathcal U_{w,r}(\lambda\widetilde Y)<0$ for all $\lambda\in(\lambda_{\widetilde Y},\infty)$. Therefore, $\mathcal U_{w,r}\bigl(\lambda(Y+\varepsilon Z)\bigr)<0$ for all $\lambda\in(\lambda_{\widetilde Y},\infty)$ and $\varepsilon\in(0,\varepsilon_{Y,Z})$, showing that $\mathcal U_{w,r}$ satisfies robust $\SLL$ on $L$.
\end{proof}

\begin{proposition}
\label{prop:WSLL equivalent to SLL}
Let $\mathcal{U}$ be a utility functional. If $\mathcal{U}$ is cash-additive or cash-concave, it satisfies $\SE$. 
\end{proposition}

\begin{proof}



$\SE$ for $\mathcal U$ is equivalent to showing that if $\mathcal U$ does not satisfy $\SLL$, then it does not satisfy $\ASLL$.

Suppose $\mathcal{U}$ is cash-additive. Then, $\mathcal{U}(\infty) = \infty$ and $\mathcal{U}(1) \in (0,\infty)$ by Proposition \ref{prop:cash-additivity}. Assume there exists $Y \in L$ with $\mathbb{P}[Y < 0] > 0$ and a sequence $(\lambda_n) \subset \RR$ with $\lambda_n \to \infty$ such that $\mathcal{U}(\lambda_n Y) \geq 0$ for all $n$. Fix $\epsilon > 0$ so that $\mathbb{P}[Y + \epsilon < 0 ] > 0$ and let $Z  \coloneqq Y + \epsilon$. Then, $\mathbb{P}[Z < 0] > 0$ and  by Proposition \ref{prop:cash-additivity},
\begin{equation*}
    \mathcal{U}(\lambda_n Z) = \mathcal{U}(\lambda_n Y + \lambda_n \epsilon) = \mathcal{U}(\lambda_nY) + \mathcal{U}(1)\lambda_n\epsilon \to \infty.
\end{equation*}
Whence, $\mathcal{U}$ satisfies $\SE$.

Now suppose $\mathcal{U}$ is cash-concave, and hence, negatively star-shaped. Assume there exists $Y \in L$ with $\mathbb{P}[Y < 0] > 0$ and a sequence $(\lambda_n) \subset \RR$ with $\lambda_n \to \infty$ such that $\mathcal{U}(\lambda_n Y) \geq 0$ for all $n$. By negative star-shapedness of $\mathcal{U}$, $\mathcal{U}(\lambda Y) \geq 0$ for all $\lambda  \in (0,\infty)$. Let $\epsilon > 0$ be such that $\P[Y + \epsilon < 0] > 0$ and set $Z \coloneqq Y+\epsilon$. Then, by the cash-concavity and monotonicity of $\mathcal{U}$, for any $\eta \in (0,1)$,
\begin{equation*}
\mathcal{U}(\lambda Z) = \mathcal{U}(\eta \lambda \epsilon + (1-\eta)(\tfrac{\lambda}{1-\eta} Y + \lambda\epsilon)) \geq \eta \mathcal{U} (\lambda \epsilon) + (1-\eta) \mathcal{U}(\tfrac{\lambda}{1-\eta} Y) \geq \eta \mathcal{U}(\lambda \epsilon).
\end{equation*}
Therefore, $\lim_{\lambda \to \infty} \mathcal{U}(\lambda Z) \geq \lim_{\eta \to 1} \eta \mathcal{U}(\infty) = \mathcal{U}(\infty)$. Whence, $\mathcal{U}$ satisfies $\SE$.  
\end{proof}

\subsection*{Proofs in Section \ref{sec:problem formulation}}

\begin{proof}[\textbf{Proof of Proposition \ref{prop:terminal-wealth uniqueness}}]
The feasible set $
\{\bar\theta\in\mathbb R^{1+d}:\ \bar\theta\cdot\bar S_0=w,\ 
\mathcal R(\bar\theta\cdot\bar S_1)\le R_{\max}\}$
is convex, since the budget constraint is affine and the risk constraint is a sublevel set of the quasi-convex map $\bar\theta\mapsto \mathcal R(\bar\theta\cdot\bar S_1)$. Moreover, by non-redundancy of $\bar S$, the map $\bar\theta\mapsto \bar\theta\cdot\bar S_1$ is injective, and therefore $\bar\theta\mapsto \mathcal U(\bar\theta\cdot\bar S_1)$ is strictly quasi-concave on $\mathbb R^{1+d}$. Hence there cannot exist two distinct maximizers on the feasible set. Therefore the maximizer is unique.
\end{proof}

\begin{proof}[\textbf{Proof of Proposition \ref{prop:excess-return uniqueness}}]
The proof is almost identical to that of Proposition \ref{prop:terminal-wealth uniqueness}.
\end{proof}

\begin{proof}[\textbf{Proof of Lemma \ref{lem:local equivalence reparametrization}}]
This follows directly from Proposition~\ref{prop:reparameterization equivalent}, applied for each fixed $w \in (0,\infty)$, noting that for a fixed market $\bar S$ the corresponding risk-free rate $r$ and excess-return vector $X$ are fixed.
\end{proof}

\begin{proof}[\textbf{Proof of Lemma \ref{lemma:global well-posedness}}]
This follows immediately from Lemma~\ref{lem:local equivalence reparametrization} and the bijection between normalized markets $\bar S \in \mathbb S(L)$ and pairs $(X,r)\in \mathbb X(L)\times(-1,\infty)$.
\end{proof}

\subsection*{Proofs in Section \ref{sec:excess return well posedness}}

\begin{proof}[\textbf{Proof of Proposition \ref{prop:unbounded sequence of acceptable portfolios}}]
By Proposition \ref{prop:u rho arbitrage implies unbounded sequence}, the sequence $(\pi_n)_{n \geq 1}$ is unbounded. Set $\lambda_n:=\|\pi_n\|$ and $\tilde\pi_n:=\pi_n/\lambda_n$, and, by passing to a subsequence if necessary, assume that $\lambda_n\to\infty$ and $\tilde\pi_n\to\pi$ for some $\pi\in\mathbb R^d$. Since $\|\tilde\pi_n\|=1$ for all $n$, we have $\pi\neq \mathbf 0$. Set $
Y:=\sum_{i=1}^d |X^i| \in L$.
Fix $\varepsilon>0$. By passing to a further subsequence if necessary, we may assume that $|\tilde\pi_n^i-\pi^i|\le \varepsilon$ for all $i\in\{1,\dots,d\}$ and all $n$. Then $\lambda_n(X_\pi+\varepsilon Y)\ge X_{\pi_n}$ $\mathbb{P}$-a.s., for all $n$. By monotonicity of $\mathcal U_{w,r}$, it follows that
\[
\mathcal U_{w,r}\bigl(\lambda_n(X_\pi+\varepsilon Y)\bigr)\ge \mathcal U_{w,r}(X_{\pi_n})
\to \mathcal U_{w,r}(\infty).
\]
Since $\mathcal U_{w,r}\bigl(\lambda_n(X_\pi+\varepsilon Y)\bigr)\le \mathcal U_{w,r}(\infty)$, the convergence follows.
\end{proof}

\begin{proof}[\textbf{Proof of Proposition \ref{prop:local SLL implies no (U,R) arbitrage}}]
We argue by contraposition.

(a) Suppose that $(\mathcal U_{w,r},\mathcal R_{w,r})$-portfolio selection is not weakly well posed for $X$. Then there exists a sequence $(\pi_n)_{n\ge1}\subset\mathbb R^d$ such that $\mathcal U_{w,r}(X_{\pi_n})\to\mathcal U_{w,r}(\infty)$. By Proposition~\ref{prop:unbounded sequence of acceptable portfolios}, there exist $\pi\in\mathbb R^d\setminus\{\mathbf 0\}$ and $Y\in L$ such that, for every $\varepsilon>0$, there exists a sequence $(\lambda_n)_{n\ge1}$ with $\lambda_n\to\infty$ and $\mathcal U_{w,r}\bigl(\lambda_n(X_\pi+\varepsilon Y)\bigr)\to \mathcal U_{w,r}(\infty)$. Hence $\limsup_{\lambda\to\infty}\mathcal U_{w,r}\bigl(\lambda(X_\pi+\varepsilon Y)\bigr)
=
\mathcal U_{w,r}(\infty)$ for every $\varepsilon>0$, showing that $\mathcal U_{w,r}$ does not satisfy robust $\ASLL$ on $X$.

(b) Suppose in addition that $\mathcal R_{w,r}$ satisfies the lower Fatou property and positive star-shapedness, and that $(\mathcal U_{w,r},\mathcal R_{w,r})$-portfolio selection is not weakly well posed for $X$. Then Proposition~\ref{prop:U R arbitrage implies not SLL for R} yields $\pi\in\mathbb R^d\setminus\{\mathbf 0\}$ such that $\mathcal R_{w,r}(\lambda X_\pi)\le 0$, for all $\lambda>0$. Hence $\mathcal R_{w,r}$ does not satisfy $\SLL$ on $X$.
\end{proof}

\begin{proof}[\textbf{Proof of Proposition \ref{prop:sufficient local well posedness}}]
Let $X \in \mathbb{X}(L)$ and fix $\tilde{R}_{\max} \in [0,\infty)$. We first show via contradiction that, under (a) or (b), any maximizing sequence $(\pi_n)_{n \geq 1} \subset \mathbb{R}^d$ of \eqref{eq:maximize utility reparameterized} must be bounded.

(a) If $\mathcal{U}_{w,r}$ satisfies robust $\SLL$ on $X$ and $(\pi_n)_{n \geq 1}$ is unbounded, then the argument in Proposition \ref{prop:unbounded sequence of acceptable portfolios} (which we do not repeat) implies the existence of $\pi \in \RR^d \setminus \{ \mathbf{0} \}$ and $Y \in L$ such that, for every $\epsilon > 0$, there is a sequence $(\lambda_n)_{n \geq 1} \subset (0,\infty)$ with $\lambda_n \to \infty$ and $\mathcal{U}_{w,r}(\lambda_n (X_\pi + \epsilon Y)) \geq \mathcal{U}_{w,r}(X_{\pi_n})$ for all $n$. Since $(\pi_n)_{n \geq 1}$ is a maximizing sequence, $\mathcal{U}_{w,r}(X_{\pi_n}) \geq \mathcal{U}_{w,r}(X_{\mathbf{0}}) = \mathcal{U}_{w,r}(0) = 0$. Whence, $\mathcal{U}_{w,r}(\lambda_n (X_\pi + \epsilon Y)) \geq0$ for all $n$, in contradiction to our assumption that $\mathcal{U}_{w,r}$ satisfies robust $\SLL$ on $X$.

(b) If $\mathcal{R}_{w,r}$ is positively star-shaped and satisfies $\SLL$ on $X$, and $(\pi_n)_{n \geq 1}$ is unbounded, then it follows from Proposition \ref{prop: (U,R)-arbitrage can be globally centered at 0} that there exists a portfolio $\pi \in \RR^d \setminus \{ \mathbf{0} \}$ with $\mathcal{R}_{w,r}(\lambda X_\pi) \leq 0$ for all $\lambda \in (0,\infty)$ because $\mathcal{R}_{w,r}(X_{\pi_n}) \leq \tilde{R}_{\max}$ for all $n \geq 1$. But this contradicts that $\mathcal{R}_{w,r}$ satisfies $\SLL$ on $X$.

Since $(\pi_n)_{n \geq 1}$ is bounded in either case, by restricting to a subsequence and relabelling, we may assume that this sequence converges to a limit $\pi^* \in \mathbb{R}^d$. It follows from the upper Fatou property of $\mathcal{U}_{w,r}$ and the lower Fatou property of $\mathcal{R}_{w,r}$ that $\pi^*$ is a maximizer to \eqref{eq:maximize utility reparameterized}. Therefore, $(\mathcal{U}_{w,r},\mathcal{R}_{w,r})$-portfolio selection is well posed for $X$.
\end{proof}

\begin{proof}[\textbf{Proof of Theorem \ref{thm:local weak wp reparametrized characterization}}]
That (a) implies (b) is exactly Proposition~\ref{prop:local SLL implies no (U,R) arbitrage}, while (c) implies (d) is exactly Proposition \ref{prop:sufficient local well posedness}. 

Now assume in addition that $\mathcal R_{w,r}$ is aligned with $\mathcal U_{w,r}$ on $X$. By the above implications, and since (d) implies (b) and (c) implies (a) trivially, in order to prove all four statements are equivalent, it suffices to show that (b) implies (c). 

We argue by contraposition. Suppose that $\mathcal U_{w,r}$ does not satisfy robust $\SLL$ on $X$ and that $\mathcal R_{w,r}$ does not satisfy $\SLL$ on $X$. Then, by Definition \ref{def: ssl local} and positive star-shapedness of $\mathcal R_{w,r}$, there exists $\pi \in \mathbb R^d \setminus \{\mathbf 0\}$ such that $\mathcal R_{w,r}(\lambda X_\pi)\le 0$ for all $\lambda > 0$. Since $\mathcal R_{w,r}$ is aligned with $\mathcal U_{w,r}$ on $X$, it follows that $\lim_{\lambda\to\infty}\mathcal U_{w,r}(\lambda X_\pi)=\mathcal U_{w,r}(\infty)$. Hence the sequence of portfolios $(\lambda\pi)_{\lambda>0}$ satisfies $\mathcal R_{w,r}(X_{\lambda\pi})=\mathcal R_{w,r}(\lambda X_\pi)\le 0$ for all $\lambda > 0$, while 
$\mathcal U_{w,r}(X_{\lambda\pi})=\mathcal U_{w,r}(\lambda X_\pi)\to \mathcal U_{w,r}(\infty)$. Therefore $(\mathcal U_{w,r},\mathcal R_{w,r})$-portfolio selection is not weakly well posed for $X$, so (b) fails.
\end{proof}

\begin{proof}[\textbf{Proof of Proposition \ref{prop:global wp reparametrized characterization aligned}}]
We first prove the forward implications. 

Assume (a) holds. If $\mathcal U_{w,r}$ satisfies $\ASLL$ on $L$, then Proposition \ref{prop:robust and nonrobust global coincide} shows that $\mathcal U_{w,r}$ automatically satisfies robust $\ASLL$ on $L$. Hence, for every market $X\in\mathbb X(L)$, $\mathcal U_{w,r}$ satisfies robust $\ASLL$ on $X$. If instead $\mathcal R_{w,r}$ satisfies $\SLL$ on $L$, then in particular it satisfies $\SLL$ on every market $X\in\mathbb X(L)$. Thus, in either case, Theorem \ref{thm:local weak wp reparametrized characterization} implies that $(\mathcal U_{w,r},\mathcal R_{w,r})$-portfolio selection is weakly well posed for every market $X\in\mathbb X(L)$, i.e., (b) holds. The proof that (c) implies (d) is analogous.

Now assume in addition that $\mathcal R_{w,r}$ is aligned with $\mathcal U_{w,r}$ on $L$. By the above implications, and since (d) implies (b) and (c) implies (a) trivially, in order to prove all four statements are equivalent, it suffices to show that (b) implies (c). 

We argue by contraposition. Suppose that $\mathcal U_{w,r}$ does not satisfy $\SLL$ on $L$ and that $\mathcal R_{w,r}$ does not satisfy $\SLL$ on $L$. Since $\mathcal R_{w,r}$ is positively star-shaped, the failure of $\SLL$ yields some $Y\in L$ with $\mathbb P[Y<0]>0$ such that $\mathcal R_{w,r}(\lambda Y)\le 0$ for all $\lambda>0$. By alignment on $L$, it follows that $\lim_{\lambda\to\infty}\mathcal U_{w,r}(\lambda Y)=\mathcal U_{w,r}(\infty)$. In particular, $Y$ cannot satisfy $Y\le 0$ $\mathbb P$-a.s., because otherwise monotonicity would give $\mathcal U_{w,r}(\lambda Y)\le \mathcal U_{w,r}(0)=0<\mathcal U_{w,r}(\infty)$ for all $\lambda>0$. Hence $\mathbb P[Y>0]>0$, and so the one-dimensional market $X:=(Y)$ belongs to $\mathbb X(L)$. Moreover, the sequence of portfolios $(\lambda_n)_{n\in\mathbb N}$ with $\lambda_n\uparrow\infty$ yields
\[
\mathcal U_{w,r}(\lambda_n Y)\to \mathcal U_{w,r}(\infty)
\qquad\text{and}\qquad
\mathcal R_{w,r}(\lambda_n Y)\le 0
\quad \text{for all } n.
\]
Thus $X$ admits $(\mathcal U_{w,r},\mathcal R_{w,r})$-arbitrage, so $(\mathcal U_{w,r},\mathcal R_{w,r})$-portfolio selection is not weakly well posed for $X$. Therefore (b) fails.
\end{proof}

\begin{proof}[\textbf{Proof of Theorem \ref{thm:global wp reparametrized characterization}}]
The implications (a) $\Rightarrow$ (b) and (c) $\Rightarrow$ (d) follow from Proposition~\ref{prop:global wp reparametrized characterization aligned}. Since $\mathcal U_{w,r}$ satisfies $\SE$, statements (a) and (c) are equivalent. Moreover, (d) $\Rightarrow$ (b) is immediate. It therefore suffices to prove (b) $\Rightarrow$ (a).  This is intricate carried out below.
\end{proof}


\subsection*{Proof of Theorem \ref{thm:global wp reparametrized characterization} [(b) $\Rightarrow$ (a)]}


Let $w \in (0,\infty)$ and $r \in (-1,\infty)$. We begin by introducing the sets that capture the failure of $\ASLL$ and $\SLL$. For $c \in [0,\mathcal U_{w,r}(\infty))$, define
\begin{equation*}
\mathcal S^c_{\mathcal U_{w,r}}
:=
\Bigl\{
Y \in L \ \Big| \ \mathbb P[Y<0]>0 \ \textnormal{and } \exists \  (\lambda_n) \subset (0,\infty)
\textnormal{ with } \lambda_n \uparrow \infty \textnormal{ and } \mathcal U_{w,r}(\lambda_n Y) \ge c \ \forall \ n
\Bigr\},
\end{equation*}
and set $\overline{\mathcal S}_{\mathcal U_{w,r}}
:=
\bigcap_{c\in[0,\mathcal U_{w,r}(\infty))}\mathcal S^c_{\mathcal U_{w,r}}$. 

Analogously, for $c \in (\mathcal R_{w,r}(\infty),0]$ (with the convention $(0,0]:=\{0\}$), define
\begin{equation*}
\mathcal S^c_{\mathcal R_{w,r}}
:=
\Bigl\{
Y \in L \ \Big| \ \mathbb P[Y<0]>0 \ \textnormal{and } \exists \ (\lambda_n) \subset (0,\infty)
\textnormal{ with } \lambda_n \uparrow \infty \textnormal{ and } \mathcal R_{w,r}(\lambda_n Y) \le c \ \forall \ n
\Bigr\},
\end{equation*}
and set $
\overline{\mathcal S}_{\mathcal R_{w,r}}
:=
\bigcap_{c\in(\mathcal R_{w,r}(\infty),0]}\mathcal S^c_{\mathcal R_{w,r}}$.

This yields the following characterizations of $\ASLL$ and $\SLL$ for $\mathcal{U}_{w,r}$ on $L$. A corresponding result holds for $\mathcal{R}_{w,r}$.

\begin{proposition}
\label{prop: sensitivity U}
We have the following:
\begin{enumerate}[label=\textnormal{(\alph*)}]
    \item $\mathcal U_{w,r}$ satisfies $\SLL$ on $L$ if and only if $\mathcal S^0_{\mathcal U_{w,r}}=\emptyset$.
    \item $\mathcal U_{w,r}$ satisfies $\ASLL$ on $L$ if and only if $\overline{\mathcal S}_{\mathcal U_{w,r}}=\emptyset$.
\end{enumerate}
\end{proposition}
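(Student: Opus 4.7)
The plan is to prove both equivalences by directly unpacking definitions, exploiting the correspondence: $Y \in \mathcal{S}^{c}_{\mathcal{U}_{w,r}}$ if and only if $\mathbb{P}[Y<0]>0$ and there exists a sequence $(\lambda_n) \subset \mathbb{R}$ with $\lambda_n \uparrow \infty$ and $\mathcal{U}_{w,r}(\lambda_n Y) \geq c$ for all $n$. A preliminary observation used throughout is that, by the definition of a utility functional,
\begin{equation*}
\mathcal{U}_{w,r}(\infty) = \mathcal{U}(\infty) - \mathcal{U}(w(1+r)) > 0,
\end{equation*}
so the interval $[0, \mathcal{U}_{w,r}(\infty))$ is nondegenerate in $(0, \infty]$.

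For part (a), I would argue by contraposition. The negation of sensitivity to large losses produces $Y \in L$ with $\mathbb{P}[Y<0]>0$ such that for every $M \in \mathbb{N}$ the set $\{\lambda > M : \mathcal{U}_{w,r}(\lambda Y) \geq 0\}$ is nonempty. A simple diagonal extraction then yields a sequence $\lambda_n \uparrow \infty$ with $\mathcal{U}_{w,r}(\lambda_n Y) \geq 0$ for all $n$, which shows $Y \in \mathcal{S}^{0}_{\mathcal{U}_{w,r}}$. Conversely, any witnessing sequence $(\lambda_n)$ for $Y \in \mathcal{S}^{0}_{\mathcal{U}_{w,r}}$ directly falsifies sensitivity to large losses at $Y$.

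For part (b), I would argue again by contraposition. Assume $\mathcal{U}_{w,r}$ is weakly sensitive to large losses and fix $Y \in L$ with $\mathbb{P}[Y<0]>0$. Then $L_Y := \limsup_{\lambda \to \infty} \mathcal{U}_{w,r}(\lambda Y) < \mathcal{U}_{w,r}(\infty)$, and one can pick $c \in [0, \mathcal{U}_{w,r}(\infty))$ with $c > L_Y$: if $L_Y < 0$ take $c = 0$, and if $L_Y \geq 0$ take any $c$ strictly between $L_Y$ and $\mathcal{U}_{w,r}(\infty)$. By the definition of $\limsup$, there exists $M$ such that $\mathcal{U}_{w,r}(\lambda Y) < c$ for all $\lambda > M$, so no sequence $\lambda_n \uparrow \infty$ can satisfy $\mathcal{U}_{w,r}(\lambda_n Y) \geq c$ for all $n$; hence $Y \notin \mathcal{S}^{c}_{\mathcal{U}_{w,r}}$ and therefore $Y \notin \overline{\mathcal{S}}_{\mathcal{U}_{w,r}}$. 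Conversely, if $\mathcal{U}_{w,r}$ fails to be weakly sensitive to large losses, there is $Y$ with $\mathbb{P}[Y<0]>0$ and $\limsup_{\lambda \to \infty} \mathcal{U}_{w,r}(\lambda Y) = \mathcal{U}_{w,r}(\infty)$; for any $c \in [0, \mathcal{U}_{w,r}(\infty))$ the strict inequality $c < \mathcal{U}_{w,r}(\infty)$ allows one to extract $\lambda_n \uparrow \infty$ with $\mathcal{U}_{w,r}(\lambda_n Y) > c$, so $Y \in \mathcal{S}^{c}_{\mathcal{U}_{w,r}}$ for every admissible $c$, and hence $Y \in \overline{\mathcal{S}}_{\mathcal{U}_{w,r}}$.

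I do not expect any serious obstacle. The only delicate point is the choice of the threshold $c$ in the forward direction of (b), which must simultaneously lie in $[0, \mathcal{U}_{w,r}(\infty))$ and strictly exceed $L_Y$; this is handled uniformly by the positivity observation $\mathcal{U}_{w,r}(\infty) > 0$, which covers both the finite and infinite cases of $\mathcal{U}_{w,r}(\infty)$.
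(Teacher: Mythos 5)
Your proof is correct: both directions of (a) and (b) are exactly the definition-unpacking the paper has in mind, and the one delicate point (choosing the threshold $c$ in the forward direction of (b) using $\mathcal{U}_{w,r}(\infty)>0$) is handled properly. The paper itself states Proposition~\ref{prop: sensitivity U} without proof, treating it as immediate from the definitions of $\mathcal{S}^{c}_{\mathcal{U}_{w,r}}$ and $\overline{\mathcal{S}}_{\mathcal{U}_{w,r}}$, so your argument simply supplies the routine details the authors omitted.
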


The next two propositions provide the core of the proof of (b) $\Rightarrow$ (a) in Theorem~\ref{thm:global wp reparametrized characterization}. The first shows that whenever one can find a single payoff on which risk remains ineffective and utility approaches its bliss point, one can construct a market admitting $(\mathcal U_{w,r},\mathcal R_{w,r})$-arbitrage. The second uses law-invariance to combine such payoffs.

\begin{proposition}
\label{prop:existence of UR arbitrage}
Assume $\mathcal R_{w,r}$ is positively star-shaped. If $\mathcal S^0_{\mathcal R_{w,r}} \cap \overline{\mathcal S}_{\mathcal U_{w,r}} \neq \emptyset$, then there exists a market $X \in \mathbb X(L)$ that admits $(\mathcal U_{w,r},\mathcal R_{w,r})$-arbitrage.
\end{proposition}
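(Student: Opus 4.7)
The plan is to pick any $Y \in \mathcal{S}_{\mathcal{R}_{w,r}}^0 \cap \overline{\mathcal{S}}_{\mathcal{U}_{w,r}}$, verify that the one-asset market $X = (Y)$ lies in $\mathbb{X}(L)$, and then exhibit portfolios $\pi_n \in \mathbb{R}$ witnessing $(\mathcal{U}_{w,r},\mathcal{R}_{w,r})$-arbitrage. By the definitions of the two sets, $\mathbb{P}[Y<0]>0$ and we can extract sequences $\lambda_n \uparrow \infty$ with $\mathcal{R}_{w,r}(\lambda_n Y) \leq 0$ and $\mu_n \uparrow \infty$ with $\mathcal{U}_{w,r}(\mu_n Y) \to \mathcal{U}_{w,r}(\infty)$. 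These sequences a priori have nothing to do with each other, so the first task is to decouple them.

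The first step is to show that positive star-shapedness of $\mathcal{R}_{w,r}$ upgrades the single sequence $(\lambda_n)$ into a uniform bound $\mathcal{R}_{w,r}(\mu Y) \leq 0$ for \emph{every} $\mu > 0$. Indeed, given $\mu > 0$ and any $\lambda_n > \mu$, the factor $\mu/\lambda_n \in (0,1)$ combined with positive star-shapedness applied to $\lambda_n Y$ gives $\mathcal{R}_{w,r}(\mu Y) \leq (\mu/\lambda_n) \mathcal{R}_{w,r}(\lambda_n Y) \leq 0$. In particular $\mathcal{R}_{w,r}(\mu_n Y) \leq 0$ along the utility sequence, and the couple $(\mu_n Y)$ will serve as the desired "arbitrage payoffs", provided they can actually be written as portfolio payoffs in some $X \in \mathbb{X}(L)$.

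The main obstacle is the construction of such a market. The obvious choice $X = (Y)$ is arbitrage-free and non-redundant precisely when $\mathbb{P}[Y>0]>0$ and $\mathbb{P}[Y<0]>0$, because any reverse-arbitrage would require $-Y \geq 0$ a.s. So the crux is to rule out $Y \leq 0$ a.s. This is where the strict bliss-point condition $\mathcal{U}_{w,r}(Z) < \mathcal{U}_{w,r}(\infty)$ for all $Z \in L$ is used. Assume for contradiction $Y \leq 0$ a.s. Fix $\delta > 0$ small enough that $B \coloneqq \{Y \leq -\delta\}$ has positive probability, and for arbitrary $A > 0$ put $W_A \coloneqq -A \mathbf{1}_B \in L^{\infty} \subset L$. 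For any $\mu_n \geq A/\delta$ one has $\mu_n Y \leq W_A$ $\mathbb{P}$-a.s., so monotonicity yields $\mathcal{U}_{w,r}(\mu_n Y) \leq \mathcal{U}_{w,r}(W_A)$ eventually. Since $W_A \in L$, the definition of a utility functional gives $\mathcal{U}_{w,r}(W_A) < \mathcal{U}_{w,r}(\infty)$, contradicting $\mathcal{U}_{w,r}(\mu_n Y) \to \mathcal{U}_{w,r}(\infty)$.

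With $\mathbb{P}[Y>0]>0$ and $\mathbb{P}[Y<0]>0$ established, the one-asset market $X = (Y)$ is arbitrage-free and non-redundant, hence lies in $\mathbb{X}(L)$. Taking $\pi_n \coloneqq \mu_n \in \mathbb{R}$ gives $X_{\pi_n} = \mu_n Y$, and by construction $\mathcal{U}_{w,r}(X_{\pi_n}) \to \mathcal{U}_{w,r}(\infty)$ while $\mathcal{R}_{w,r}(X_{\pi_n}) \leq 0 \eqqcolon \tilde R$ for every $n$, i.e.\ $X$ admits $(\mathcal{U}_{w,r},\mathcal{R}_{w,r})$-arbitrage. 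The only truly delicate point of the argument is the exclusion of the degenerate case $Y \leq 0$ a.s.; everything else reduces to bookkeeping with the star-shapedness inequality and the definitions of the sensitivity sets.
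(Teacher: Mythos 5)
Your proof is correct and follows essentially the same route as the paper: use positive star-shapedness to upgrade the single risk sequence to $\mathcal{R}_{w,r}(\mu Y)\leq 0$ for all $\mu>0$, rule out $Y\leq 0$ $\mathbb{P}$-a.s.\ via monotonicity and the non-satiation condition $\mathcal{U}_{w,r}(Z)<\mathcal{U}_{w,r}(\infty)$, and then take the one-asset market $X=(Y)$ with $\pi_n=\mu_n$. The only cosmetic difference is that the paper compares $\lambda_n Y\leq 0$ directly with $\mathcal{U}_{w,r}(0)=0$ rather than introducing the auxiliary position $W_A$.
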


\begin{proof}
First note that, since $\mathcal R_{w,r}$ is positively star-shaped, if there exist $Y \in L$ and $\lambda \in (0,\infty)$ such that $\mathcal R_{w,r}(\lambda Y)\le 0$, then $\mathcal R_{w,r}(\lambda'Y)\le 0$ for all $\lambda'\in(0,\lambda)$. Therefore,
\begin{equation*}
\mathcal S^0_{\mathcal R_{w,r}}
=
\Bigl\{
Y\in L \ \Big| \ \mathbb P[Y<0]>0
\ \textnormal{and}\
\mathcal R_{w,r}(\lambda Y)\le 0 \ \textnormal{for all }\lambda>0
\Bigr\}.
\end{equation*}

Now let $X^1 \in \mathcal S^0_{\mathcal R_{w,r}} \cap \overline{\mathcal S}_{\mathcal U_{w,r}}$. Choose a sequence $(c_n)_{n\in\mathbb N}$ such that $c_n\uparrow \mathcal U_{w,r}(\infty)$ and $c_n<\mathcal U_{w,r}(\infty)$ for every $n$. Since $X^1 \in \mathcal S^{c_n}_{\mathcal U_{w,r}}$ for every $n$, we can recursively choose $\lambda_n\in(0,\infty)$ so that $\lambda_n>\max\{\lambda_{n-1},n\}$ and $\mathcal U_{w,r}(\lambda_n X^1)\ge c_n$ for all $n\in\mathbb N$. Then $\lambda_n\uparrow\infty$ and
$\mathcal U_{w,r}(\lambda_n X^1)\to \mathcal U_{w,r}(\infty)$.
Moreover, since $X^1\in \mathcal S^0_{\mathcal R_{w,r}}$,
$
\mathcal R_{w,r}(\lambda_n X^1)\le 0
$ for all $n\in\mathbb N$.
By definition, $\mathbb P[X^1<0]>0$. We also have $\mathbb P[X^1>0]>0$, for otherwise $X^1\le 0$ $\mathbb P$-a.s., and monotonicity would imply $\mathcal U_{w,r}(\lambda_nX^1)\le0$ for all $n$, contradicting $\mathcal U_{w,r}(\lambda_nX^1)\to \mathcal U_{w,r}(\infty)$. Hence the market $(X^1)=:X \in \mathbb{X}(L)$ admits $(\mathcal U_{w,r},\mathcal R_{w,r})$-arbitrage.
\end{proof}

\begin{proposition}
\label{prop:domination by survival functions}
Assume at least one of $\mathcal U_{w,r}$ and $\mathcal R_{w,r}$ is law-invariant. If $\mathcal S^0_{\mathcal R_{w,r}} \neq \emptyset$ and $\overline{\mathcal S}_{\mathcal U_{w,r}} \neq \emptyset$, then $\mathcal S^0_{\mathcal R_{w,r}} \cap \overline{\mathcal S}_{\mathcal U_{w,r}} \neq \emptyset$.
\end{proposition}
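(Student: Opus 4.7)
The plan is to treat the case where $\mathcal{R}_{w,r}$ is law-invariant; the case where $\mathcal{U}_{w,r}$ is law-invariant is symmetric and proceeds by swapping the roles of the two random variables below. Fix $Y_R \in \mathcal{S}^0_{\mathcal{R}_{w,r}}$ with associated sequence $(\lambda_n)_{n\geq 1}$ satisfying $\lambda_n \uparrow \infty$ and $\mathcal{R}_{w,r}(\lambda_n Y_R) \leq 0$, together with $Y_U \in \overline{\mathcal{S}}_{\mathcal{U}_{w,r}}$. The target is to construct a single $Y \in L$ with $\mathbb{P}[Y < 0] > 0$ satisfying the risk bound along $(\lambda_n)$ and inheriting the utility scaling property of $Y_U$.

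The construction I propose is $Y \coloneqq \tilde Y_R \vee Y_U$, where $\tilde Y_R$ is a measurable rearrangement of $Y_R$ chosen so that $\tilde Y_R \stackrel{d}{=} Y_R$ and $\mathbb{P}[\tilde Y_R < 0,\, Y_U < 0] > 0$. With such a $\tilde Y_R$ in hand, the verification is immediate: since $Y \geq \tilde Y_R$, monotonicity and law-invariance of $\mathcal{R}_{w,r}$ give $\mathcal{R}_{w,r}(\lambda_n Y) \leq \mathcal{R}_{w,r}(\lambda_n \tilde Y_R) = \mathcal{R}_{w,r}(\lambda_n Y_R) \leq 0$, placing $Y$ in $\mathcal{S}^0_{\mathcal{R}_{w,r}}$; since $Y \geq Y_U$, monotonicity of $\mathcal{U}_{w,r}$ gives $\mathcal{U}_{w,r}(\lambda Y) \geq \mathcal{U}_{w,r}(\lambda Y_U)$ for every $\lambda > 0$, so the sequences witnessing $Y_U \in \overline{\mathcal{S}}_{\mathcal{U}_{w,r}}$ transfer directly to $Y$; and $\{Y < 0\} = \{\tilde Y_R < 0\} \cap \{Y_U < 0\}$ has positive $\mathbb{P}$-measure by construction. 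Membership $Y \in L$ follows because $L$ is a Riesz space containing both $Y_U$ and $\tilde Y_R$.

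The main obstacle is producing the rearrangement $\tilde Y_R$, for which I would exploit atomlessness of $(\Omega, \mathcal{F}, \mathbb{P})$. Setting $p \coloneqq \mathbb{P}[Y_R < 0]$ and $q \coloneqq \mathbb{P}[Y_U < 0]$, I would choose a measurable set $B$ with $\mathbb{P}[B] = p$ and $\mathbb{P}[B \cap \{Y_U < 0\}] = \min(p,q) > 0$, namely $B \subseteq \{Y_U < 0\}$ when $p \leq q$, and $B \supseteq \{Y_U < 0\}$ when $p > q$. Because both conditional spaces $(B, \mathbb{P}(\cdot \mid B))$ and $(\Omega \setminus B, \mathbb{P}(\cdot \mid \Omega \setminus B))$ inherit atomlessness, quantile transforms of uniforms on each piece yield measurable functions distributed as $Y_R$ conditional on $\{Y_R < 0\}$ and on $\{Y_R \geq 0\}$, respectively, which paste together into a random variable $\tilde Y_R$ with $\tilde Y_R \stackrel{d}{=} Y_R$ and $\{\tilde Y_R < 0\} = B$. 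Finally, $\tilde Y_R \in L$ follows from closure of $L$ under weak tail domination applied to $|\tilde Y_R| \stackrel{d}{=} |Y_R| \in L$, together with the Riesz-space property of $L$ to recover $\tilde Y_R$ from its absolute value.
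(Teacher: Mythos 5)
Your proof is correct, but it takes a genuinely different route from the paper's. The paper reduces the claim to the following coupling lemma: given $X,Y\in L$ with $\mathbb{P}[X<0]>0$ and $\mathbb{P}[Y<0]>0$, there is a single $Z\in L$ with $Z\ge X$ $\mathbb{P}$-a.s., $F_Z\le F_Y$ pointwise, and $\mathbb{P}[Z<0]>0$; the variable $Z$ is built by discretizing $X$ and $Y$ into integer bins and inductively constructing a partition $(C_n)$ so that $Z=-\epsilon\mathds{1}_{C_0}+\sum_k k\mathds{1}_{C_k}$ dominates $X$ pointwise and first-order stochastically dominates $Y$; membership $Z\in L$ is then extracted from weak tail domination by $\Upsilon^1+\Upsilon^2$. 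You instead keep both witnesses, equimeasurably rearrange the one attached to the law-invariant functional so that its loss event overlaps the other's, and take the lattice maximum $\tilde Y_R\vee Y_U$. What your route buys is economy: the lattice maximum is in $L$ for free by the Riesz-space structure, law-invariance is invoked only in its literal form ($\mathcal{R}_{w,r}(\lambda\tilde Y_R)=\mathcal{R}_{w,r}(\lambda Y_R)$ since $\lambda\tilde Y_R\stackrel{d}{=}\lambda Y_R$) rather than through the implicit step that law-invariant monotone functionals on atomless spaces respect first-order stochastic dominance, and the only real work is the rearrangement, which you correctly obtain from atomlessness of the conditional spaces via quantile transforms. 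What the paper's construction buys is that the auxiliary variable $Z$ is explicit and discrete, which makes the tail-domination verification mechanical; but on balance your argument is the more elementary of the two. One small point worth making explicit when you write this up: in the degenerate case $p=1$ the set $B^c$ is null and the conditional distribution of $Y_R$ given $Y_R\ge 0$ need not be defined, so you should simply set $\tilde Y_R$ to an arbitrary value there; this does not affect any of the conclusions.
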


\begin{proof}
We treat the case where $\mathcal U_{w,r}$ is law-invariant; the case where $\mathcal R_{w,r}$ is law-invariant is analogous.

Let $X \in \mathcal S^0_{\mathcal R_{w,r}}$ and $Y \in \overline{\mathcal S}_{\mathcal U_{w,r}}$. Since $(\Omega,\mathcal F,\mathbb P)$ is atomless, \cite[Lemma A.32]{follmerschied:2016} yields a uniformly distributed random variable $U$ on $(0,1)$ such that $X = F_X^{\leftarrow}(U)$ $\mathbb{P}$-a.s., where $F_X^{\leftarrow}$ denotes the left-quantile function of $X$. Define $\widetilde Y := F_Y^{\leftarrow}(U)$. Then $\widetilde Y \stackrel{d}{=} Y$, so $\widetilde Y \in L$ since $L$ is law-invariant. Moreover, $X$ and $\widetilde Y$ are comonotone, as both are increasing functions of the same random variable $U$. Set $Z := X \vee \widetilde Y$. We will show $Z \in \mathcal S^0_{\mathcal R_{w,r}} \cap \overline{\mathcal S}_{\mathcal U_{w,r}}$.

Since $L$ is a Riesz space, we have $Z \in L$. As $Z \ge X$ $\mathbb P$-a.s.\ and $\mathcal R_{w,r}$ is decreasing, it follows that $\mathcal R_{w,r}(\lambda Z) \le \mathcal R_{w,r}(\lambda X) \le 0$ for all $\lambda>0$. Also, since $Z \ge \widetilde Y$ $\mathbb P$-a.s., monotonicity and law-invariance of $\mathcal U_{w,r}$ give $\mathcal U_{w,r}(\lambda Z) \ge \mathcal U_{w,r}(\lambda \widetilde Y)
= \mathcal U_{w,r}(\lambda Y)$ for all $\lambda > 0$. Finally, since $\mathbb P[X<0]>0$ and $\mathbb P[Y<0]>0$, there exists $\delta>0$ such that $F_X^{\leftarrow}(u)<0$ and $F_Y^{\leftarrow}(u)<0$, for all $u\in(0,\delta)$. Therefore, $Z = F_X^{\leftarrow}(U)\vee F_Y^{\leftarrow}(U) <0$ on $\{U\in(0,\delta)\}$, so $\mathbb P[Z<0]>0$. We conclude that $Z \in \mathcal S^0_{\mathcal R_{w,r}} \cap \overline{\mathcal S}_{\mathcal U_{w,r}}$.
\end{proof}

We now prove (b) $\Rightarrow$ (a) in Theorem \ref{thm:global wp reparametrized characterization}.  We argue by contraposition. Suppose that $\mathcal U_{w,r}$ does not satisfy $\ASLL$ on $L$ and that $\mathcal R_{w,r}$ does not satisfy $\SLL$ on $L$. Then Proposition \ref{prop: sensitivity U}(b) yields $\overline{\mathcal S}_{\mathcal U_{w,r}} \neq \emptyset$. Likewise, by the corresponding characterization for $\mathcal R_{w,r}$, it follows that $\mathcal S^0_{\mathcal R_{w,r}} \neq \emptyset$. Since at least one of $\mathcal U_{w,r}$ and $\mathcal R_{w,r}$ is law-invariant, Proposition~\ref{prop:domination by survival functions} implies
$
\mathcal S^0_{\mathcal R_{w,r}} \cap \overline{\mathcal S}_{\mathcal U_{w,r}} \neq \emptyset.
$
Hence, by Proposition~\ref{prop:existence of UR arbitrage}, there exists a market $X \in \mathbb X(L)$ that admits $(\mathcal U_{w,r},\mathcal R_{w,r})$-arbitrage. Therefore $(\mathcal U_{w,r},\mathcal R_{w,r})$-portfolio selection is not weakly well posed for this market $X$, and so it is not market-independent weakly well posed. This proves the contrapositive of \textnormal{(b)} $\Rightarrow$ \textnormal{(a)}.

\subsection*{Proofs in Section \ref{sec:terminal wealth well posedness}}

\begin{proof}[\textbf{Proof of Theorem \ref{thm:local wp original characterization}}]
By cash-additivity and normalization of $\mathcal R$, for every $c>0$ and every zero-cost payoff $Z=\bar\eta\cdot\bar S_1$,
\begin{equation*}
\mathcal R(c+\lambda Z)>\mathcal R(c)
\iff
\mathcal R(\lambda Z)>0.
\end{equation*}
Hence $\mathcal R$ satisfies $\SLL$ on $\bar S$ if and only if it satisfies cash $\SLL$ on $\bar S$ at every level $c>0$.

Next, assume that $\mathcal U$ satisfies robust $\ASLL$ on $\bar S$, and fix $c>0$. Let $\bar\eta$ be a non-zero zero-cost portfolio and $Y\in L$. Set $\widetilde Y:=Y+1$. Since $\lambda\varepsilon>c$ for all sufficiently large $\lambda$,
\begin{equation*}
c+\lambda(\bar\eta\cdot\bar S_1+\varepsilon Y)
\le
\lambda(\bar\eta\cdot\bar S_1+\varepsilon\widetilde Y),
\end{equation*}
and therefore, by monotonicity,
\begin{equation*}
\limsup_{\lambda\to\infty}
\mathcal U\bigl(c+\lambda(\bar\eta\cdot\bar S_1+\varepsilon Y)\bigr)
\le
\limsup_{\lambda\to\infty}
\mathcal U\bigl(\lambda(\bar\eta\cdot\bar S_1+\varepsilon\widetilde Y)\bigr)
<
\mathcal U(\infty)
\end{equation*}
for all sufficiently small $\varepsilon>0$. Thus robust $\ASLL$ on $\bar S$ implies robust cash $\ASLL$ on $\bar S$ at every level $c>0$. Similarly robust $\SLL$ on $\bar S$ implies robust cash $\SLL$ on $\bar S$ at every level $c>0$.

Finally, if $\mathcal R$ is aligned with $\mathcal U$ on $\bar S$, then cash-additivity of $\mathcal R$ gives
\begin{equation*}
\mathcal R(c+\lambda Z)\le \mathcal R(c)
\iff
\mathcal R(\lambda Z)\le 0,
\end{equation*}
and therefore alignment on $\bar S$ implies cash alignment on $\bar S$ at every level $c>0$.

The implications (a) $\Rightarrow$ (b) and (c) $\Rightarrow$ (d) now follow from Theorem \ref{app:thm:local wp original characterization} in Appendix \ref{app:terminal-wealth portfolio selection}.

Now assume in addition that $\mathcal R$ is aligned with $\mathcal U$ on $\bar S$. If $\mathcal R$ satisfies $\SLL$ on $\bar S$, then (a) and (c) hold trivially, and therefore so do (b) and (d). Suppose instead that $\mathcal R$ does not satisfy $\SLL$ on $\bar S$. Then by positive star-shapedness, there exists a non-zero zero-cost portfolio $\bar\eta$ such that $
\mathcal R(\lambda\,\bar\eta\cdot\bar S_1)\le 0$
for all $\lambda>0$. By alignment,
$
\lim_{\lambda\to\infty}
\mathcal U(\lambda\,\bar\eta\cdot\bar S_1)
=
\mathcal U(\infty).$ Taking $Y=0$, this shows that $\mathcal U$ satisfies neither robust $\ASLL$ on $\bar S$ nor robust $\SLL$ on $\bar S$. Hence (a) and (c) both fail. Since alignment implies cash alignment at every level $c>0$, the converse part of Theorem \ref{app:thm:local wp original characterization} in Appendix \ref{app:terminal-wealth portfolio selection} yields that (b) and (d) also fail. Therefore all four statements are equivalent.
\end{proof}

\begin{proof}[\textbf{Proof of Theorem \ref{thm:global wp original characterization normalized}}]
By Appendix \ref{app:terminal wealth market-independent well-posedness}, the general market-independent terminal-wealth theory is formulated in terms of cash $\SLL$ and cash $\SE$ at every strictly positive level $c>0$. Under normalization, however, Proposition \ref{app:prop:cash versus ordinary normalized notions} shows that, for every $c>0$, the cash $\SLL$ conditions for $\mathcal U$ and $\mathcal R$ are equivalent to the ordinary $\SLL$ conditions, and that $\mathcal U$ satisfying $\SE$ implies cash $\SE$. Hence statement (a) is exactly the simplified version of statement (a) in Theorem \ref{app:thm:global wp original characterization}. The claimed equivalence of (a), (b) and (c) therefore follows from that theorem.
\end{proof}

\subsection*{Proofs in Section \ref{sec:Examples}} 

\begin{proof}[\textbf{Proof of Proposition \ref{prop:expected utility usc}}]
The growth condition on $u$ and its monotonicity imply that there exist
$a,b\in[0,\infty)$ with
$u(y)\le a y^+ +b$ for all $y\in\mathbb R$.
Let $(Y_n)_{n\ge1}\subset L$ and $Y\in L$ satisfy $Y_n\to Y$
$\mathbb P$-a.s.\ and $|Y_n|\le Z$ $\mathbb P$-a.s.\ for some
$Z\in L$. Upper semicontinuity gives $u(Y)\ge\limsup_{n\to\infty}u(Y_n)$ $\mathbb{P}$-a.s. Moreover, $u(Y_n)\le aZ+b$ $\mathbb{P}$-a.s., where $aZ+b$ is integrable. The reverse Fatou lemma therefore yields
\begin{equation*}
\mathbb E[u(Y)] \geq \mathbb E\left[\limsup_{n\to\infty}u(Y_n)\right] \geq \limsup_{n\to\infty}\mathbb E[u(Y_n)].
\end{equation*}
Hence $\mathcal E_u$ satisfies the upper Fatou property.
\end{proof}

\begin{proof}[\textbf{Proof of Proposition \ref{prop:expected utility minus infinity at zero}}]
Fix $c>0$ and $Y\in L$ with $\mathbb P[Y<0]>0$. There exists $\delta>0$ such that $\mathbb P[Y\le-\delta]>0$. Since $u$ is increasing and $u(0)=-\infty$, we have $u(y)=-\infty$ for every $y\le0$. Consequently, whenever $\lambda>c/\delta$, $u(c+\lambda Y)=-\infty$ on a set of positive probability, and therefore
\begin{equation*}
\mathcal E_u(c+\lambda Y) = -\infty < u(c) = \mathcal E_u(c).
\end{equation*}
Thus $\mathcal E_u$ satisfies cash $\SLL$ on $L$ at every level $c>0$. The same argument shows that it satisfies cash $\ASLL$, and hence cash $\SE$. Moreover, $\mathcal E_u$ is law-invariant and, by Proposition \ref{prop:expected utility usc}, satisfies the upper Fatou property. The result now follows from Theorem \ref{app:thm:global wp original characterization}.
\end{proof}

\begin{proof}[Proof of Proposition \ref{prop:expected utility sensitivity equivalent}]
Suppose there exists $Y \in L$ with $\mathbb{P}[Y < 0] > 0$ and a sequence $(\gamma_n)_{n \geq 1} \subset \RR$ with $\gamma_n \to \infty$ such that $\cE_u(\gamma_n Y) \geq 0$ for all $n$. Then, $\ell \coloneqq \limsup_{x\to\infty}u(-x)/u(x) > -\infty$ by \citet[\emph{Theorem 5.1}]{HKM2024}, and by normalization, $\ell \leq 0$. This implies $\lim_{y \to \infty} u(y) = \infty$, for otherwise $\lim_{y \to \infty} u(y) \in (0,\infty)$, and in this case, the only way $\ell \in (-\infty,0]$ is if $\lim_{y \to -\infty} u(-y) \in (-\infty,0]$, but this would contradict the unboundedness of $u$.  Now by the definition of $\ell$, there exists a sequence $(\lambda_n)_{n \geq 1}$ with $\lambda_n  \to \infty$ such that $u(-\lambda_n)\ge(\ell-1)u(\lambda_n)$ for all $n \geq 1$. Set $Z \coloneqq \mathds{1}_{A^c} - \mathds{1}_A \in L^\infty \subset L$, where $A \in \cF$ satisfies $\mathbb{P}[A] \in (0, \epsilon/(2-l))$ for some $\epsilon \in (0,1)$. This is possible by non-atomicity of $(\Omega,\cF,\mathbb{P})$. Then, $\mathbb{P}[Z < 0] > 0$ and
\begin{align*}
\cE_u(\lambda_n Z) &= \mathbb{P}[A^c]u(\lambda_n)+\mathbb{P}[A]u(-\lambda_n) \ge (1 - \mathbb{P}[A])u(\lambda_n)+ \mathbb{P}[A](\ell-1)u(\lambda_n) \\
& =  u(\lambda_n)(1 - \mathbb{P}[A](2 - \ell)) \geq u(\lambda_n)(1-\epsilon) \to (1-\epsilon)\cE_u(\infty).
\end{align*}
Since $\lim_{y \to \infty} u(y) = \infty$, it follows that $\infty = \cE_u(\infty) = (1-\epsilon)\cE_u(\infty)$. Thus $\cE_u$ satisfies SE.
\end{proof}

\begin{proof}[\textbf{Proof of Proposition \ref{prop:Gaussian fixed market risk SLL}}]
For $\pi\ne\mathbf0$, law-invariance, cash-additivity and positive homogeneity give
\begin{equation*}
\mathcal R(X_\pi)
=
\sigma_\pi\mathcal R(Z)+\mu_\pi\mathcal R(1)
=
|\mathcal R(1)|
\bigl(\kappa_{\mathcal R}\sigma_\pi-\mu_\pi\bigr).
\end{equation*}
By positive homogeneity, $\mathcal R$ satisfies $\SLL$ on $\bar S$ if and only
if $\mathcal R(X_\pi)>0$ for every $\pi\ne\mathbf0$. Applying this condition
to both $\pi$ and $-\pi$ shows that it is equivalent to
$
|\mu_\pi|<\kappa_{\mathcal R}\sigma_\pi$
for every $\pi\ne\mathbf0$. The result follows, since
$\sup_{\pi\ne\mathbf0}
\frac{|\mu_\pi|}{\sigma_\pi}
=
\sqrt{\mu^\top\Sigma^{-1}\mu}
=
\SR_{\max}(X).$ \qedhere
\end{proof}

\begin{proof}[\textbf{Proof of Theorem \ref{thm:Gaussian fixed market characterization}}]
The functional $\mathcal E_u$ satisfies the upper Fatou property, while
$\mathcal R$ satisfies the assumptions of Theorem
\ref{thm:local wp original characterization}.

We first observe that $\mathcal E_u$ satisfies neither robust $\ASLL$ nor robust $\SLL$ on $\bar S$. If $a > b$, then every non-zero Gaussian payoff satisfies
\begin{equation*}
\mathcal E_u(\lambda X_\pi)
=
\lambda^a \mathbb E[(X_\pi^+)^a]
-
\lambda^b \mathbb E[(X_\pi^-)^b]
\to\infty.
\end{equation*}
If $a = b =:q$, choose $\pi\ne\mathbf0$ with $\mu_\pi\ge0$. When $\mu_\pi>0$, symmetry and strict monotonicity of the odd function $u$ give $\mathcal E_u(X_\pi)>0$; when $\mu_\pi=0$, the same holds after any positive constant perturbation. Since $u(\lambda y)=\lambda^q u(y)$, robust $\ASLL$ again fails, and hence so does robust $\SLL$.

We next verify alignment. Cash-additivity and positive homogeneity give
\begin{equation*}
\mathcal R(X_\pi)
=
\mathcal R(Z)\sigma_\pi+\mathcal R(1)\mu_\pi
=
-\mathcal R(1)\bigl(\kappa_{\mathcal R}\sigma_\pi-\mu_\pi\bigr).
\end{equation*}
Thus, if $\mathcal R(\lambda X_\pi)\le0$ for every $\lambda>0$, then $\mu_\pi\ge\kappa_{\mathcal R}\sigma_\pi>0.$ The preceding calculations imply $\mathcal E_u(\lambda X_\pi)\to\infty$, both when $a > b$ and when $a = b$. Hence $\mathcal R$ is aligned with $\mathcal E_u$ on
$\bar S$.

Theorem \ref{thm:local wp original characterization} therefore shows that weak well-posedness and well-posedness are equivalent to $\mathcal R$ satisfying $\SLL$ on $\bar S$. By Proposition \ref{prop:Gaussian fixed market risk SLL}, this is equivalent to $\SR_{\max}(X)<\kappa_{\mathcal R}$.
\end{proof}

\small
\bibliography{biblio} 
\bibliographystyle{plainnat}
 
\end{document}